\newtheorem{theorem}{Theorem}
\newcommand{\xmark}{\ding{55}}
\begin{document}
%
\title{Self-healing Dilemmas in Distributed Systems: Fault Correction vs. Fault Tolerance}
%
%
%

\author{Jovan~Nikoli\'c, Nursultan~Jubatyrov and~Evangelos~Pournaras
\thanks{J. Nikoli\'c is with Google, Zurich, Switzerland, e-mail: jovannikolic@google.com.}
\thanks{N. Jubatyrov is with Facebook, London, UK, e-mail: nurs@fb.com.}
\thanks{E. Pournaras is with University of Leeds, Leeds, UK, e-mail: e.pournaras@leeds.ac.uk.}
\thanks{Manuscript received November 20, 2020; revised March, 2021.}}

%
%

\markboth{Journal of \LaTeX\ Class Files,~Vol.~X, No.~X, April~2021}%
{Shell \MakeLowercase{\textit{et al.}}: Bare Demo of IEEEtran.cls for IEEE Communications Society Journals}
%



\maketitle

\begin{abstract}
Large-scale decentralized systems of autonomous agents interacting via asynchronous communication often experience the following self-healing dilemma: fault detection inherits network uncertainties making a remote faulty process indistinguishable from a slow process. In the case of a slow process without fault, fault correction is undesirable as it can trigger new faults that could be prevented with fault tolerance that is a more proactive system maintenance.  But in the case of an actual faulty process, fault tolerance alone without eventually correcting persistent faults can make systems underperforming. Measuring, understanding and resolving such self-healing dilemmas is a timely challenge and critical requirement given the rise of distributed ledgers, edge computing, the Internet of Things in several energy, transport and health applications. This paper contributes a novel and general-purpose modeling of fault scenarios during system runtime. They are used to accurately measure and predict inconsistencies generated by the undesirable outcomes of fault correction and fault tolerance as the means to improve self-healing of large-scale decentralized systems at the design phase. A rigorous experimental methodology is designed that evaluates 696 experimental settings of different fault scales, fault profiles and fault detection thresholds in a prototyped decentralized network of 3000 nodes. Almost 9 million measurements of inconsistencies were collected in a network,  where each node monitors the health status of another node, while both can defect. The prediction performance of the modeled fault scenarios is validated in a challenging application scenario of decentralized and dynamic in-network data aggregation using real-world data from a Smart Grid pilot project. Findings confirm the origin of inconsistencies at design phase and provide new insights how to tune self-healing at an early stage. Strikingly, the aggregation accuracy is well predicted as shown by high correlations and low root mean square errors.
\end{abstract}

\begin{IEEEkeywords}
self-healing; fault correction; fault tolerance; fault detection; distributed system; agent; gossip; aggregation
\end{IEEEkeywords}

%
\IEEEpeerreviewmaketitle

%
%
%
%

\section{Introduction}\label{sec:introduction}

\IEEEPARstart{S}{everal} complex systems in nature and society often exhibit striking reliability, a result of timely choosing, applying and orchestrating multiple self-healing and adaptation strategies. For instance, effectively mitigating blackouts in power grids requires several tailored fault-correction and fault-tolerance mechanisms, whose coordination is way more sophisticated than simply repairing the originating fault of a power line. These include resilient topological design, load-shedding, operating reserves, islanding and active devices among others~\cite{Mei2011}. While a level of sophisticated self-healing in natural systems is usually a result of self-adaptation and evolution, in artificial socio-technical systems with central control such as power grids, reliability remains to a high extent a result of planning based on past experience, adaptations based on precomputed simulations and manual human interventions by system operators. 

Decentralized autonomous systems recently witness a phenomenal rise with the applicability of distributed ledgers, edge computing, multi-agent systems and the Internet of Things in several sectors of society, e.g. energy, transport, health, agriculture, etc~\cite{Xiong2018}. Large-scale asynchronous distributed environments experience unprecedented network/system uncertainties that challenge the orchestration of self-healing strategies: Fault detection inherits these uncertainties that can make a faulty remote process indistinguishable from a slow process~\cite{VanRenesse1998,Lavinia2011}. As a result, a reactive system recovery may turn into an undesirable fault correction of a non-faulty system, which in turn may cause an actual fault that could be prevented instead with fault tolerance, i.e. a more proactive and preventive system maintenance. Figure~\ref{fig:model} illustrates the self-healing dilemma problem studied in this paper.

\begin{figure}[!htb]
\centering
\includegraphics[width=1.0\columnwidth]{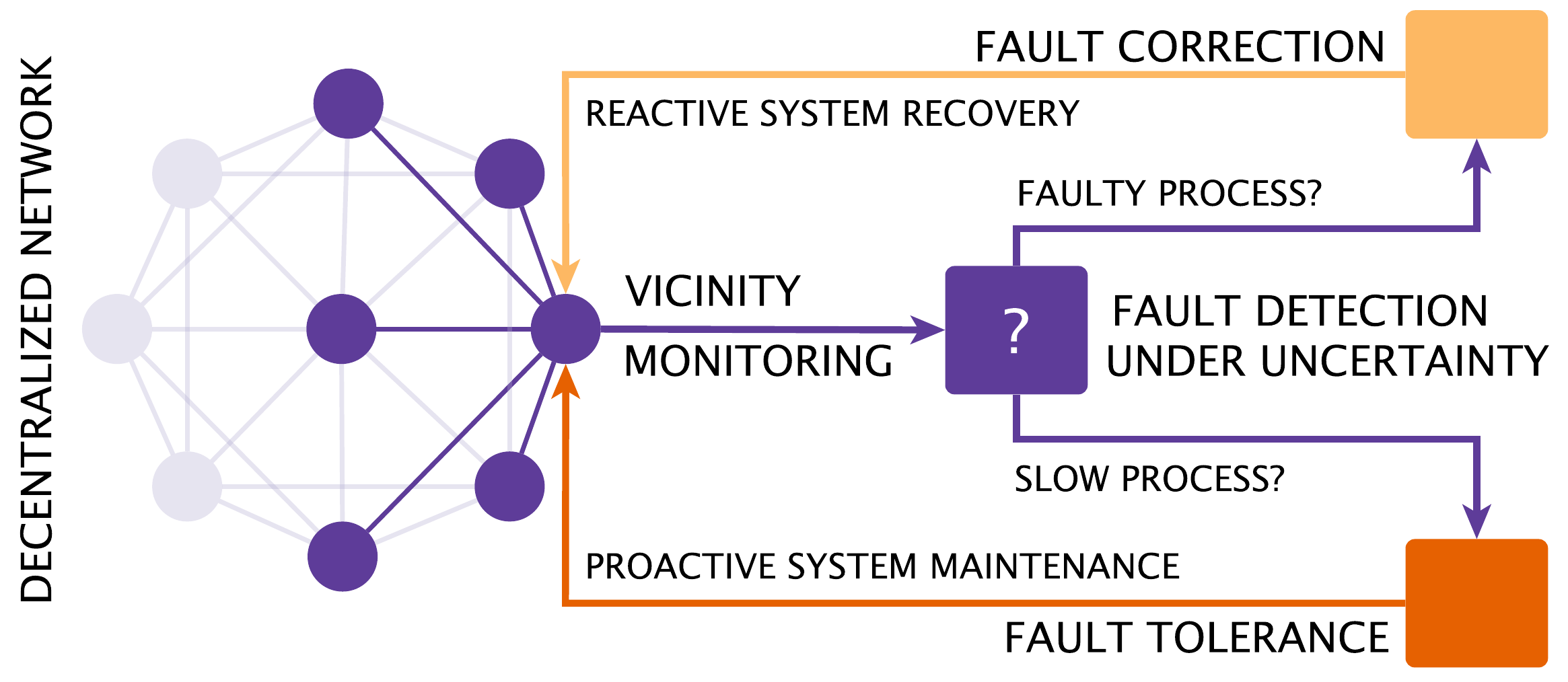}
\caption{The self-healing dilemma of agents comprising a large-scale decentralized networked systems: Fault detection within the vicinity of an agent comes with network uncertainties: Is a remote process faulty or slow? Which self-healing strategy should be applied? Fault correction as a reactive system recovery with the risk of introducing new faults? Or, fault tolerance as a proactive system maintenance with the risk of letting a fault affecting system performance?}\label{fig:model}
\end{figure}

The effective resolution of such fault-correction vs. fault-tolerance dilemmas promises more effective self-healing mechanisms for large-scale decentralized networked systems with uncertainties. A more informed and timely application of fault correction and fault tolerance has the potential to decrease the likelihood of new faults by self-healing itself, against which decentralized systems remain more resilient with a lower communication and processing cost. This paper models and classifies the possible outcomes of self-healing dilemmas between pairs of agents, where one monitors the health status of the other, while both can arbitrary defect. These outcomes are possible desirable and undesirable states (inconsistencies) into which self-healing can fall. The cost of these inconsistencies by undesirable outcomes is further formalized and distinguished within fault scenarios during system runtime. These fault scenarios have the novelty of predicting the performance of self-healing mechanisms without knowledge of the computational/application scenario, overlying algorithms or application data. The modeled fault scenarios are applied and studied to the computational case study of decentralized dynamic in-network aggregation~\cite{Pournaras2017} by introducing a new prototyped fault-detection mechanism based on gossip-based communication~\cite{Shah2009} and agent migrations~\cite{Oyediran2016}. Fault correction and fault tolerance are employed to improve the estimates of aggregation functions made by each node in the network. These estimations approximate, for instance, the total power demand based on which decentralized demand-response programs and power markets operate~\cite{Croce2016}. To preserve accuracy, fault correction performs risky recomputations of the total power demand reactively when faults are detected with uncertainty, while fault tolerance relies entirely on occasional proactive recomputations to capture changes.

A rigorous experimental methodology is introduced to tackle three objectives: (i) Profiling of the inconsistency cost generated by the modeled fault scenarios across 696 experimental settings with varying fault scales, fault profiles and fault-detection thresholds. (ii) Validation of whether the inconsistency cost measured by the modeled fault scenarios is a good general predictor of the accuracy observed in the application scenario of decentralized aggregation of real-world power consumption data. (iii) Comparison of different model calibrations for the prediction of aggregation accuracy by relying on application-independent features.

The findings of the experimental evaluation have significant implications and impact for system designers and operators: By (re)using the general-purpose fault scenarios for vulnerability analysis, the resilience of different system designs can be assessed at an early stage with low cost and under different fault characteristics, while fault-detection mechanisms can be parameterized more effectively. Application developers can improve the self-healing capabilities of applications at the design phase by predicting the impact of faults and tuning appropriately the application before deployment to lower its cost. They can also plan computational resources for self-healing more effectively. 

The contributions of this paper are summarized as follows: (i) The modeling of possible outcomes in agents' self-healing dilemmas. (ii) The modeling of application-independent fault scenarios during system runtime that sufficiently formalize the overall heath status of decentralized systems and their impact on self-healing performance. (iii) A general-purpose novel fault-detection mechanism based on gossip-based communication and migrating agents. (iv) The applicability of the fault-detection mechanism to the Dynamic Intelligent Aggregation Service (DIAS)~\cite{Pournaras2017} for the improvement of its aggregation accuracy. Self-corrective operations of DIAS are expanded when nodes massively fail~\cite{Pournaras2017c}. (v) The profiling of the predicted inconsistencies that different fault scenarios cause under different fault scales, profiles and fault-detection thresholds. (vi) Three model calibration methods to improve the accuracy of the predicted inconsistencies that rely entirely on application-independent features. 

This paper is organized as follows: Section~\ref{sec:related-work} positions and compares this study with related work. Section~\ref{sec:dimemmas} models the uncertainties in fault detection for large-scale asynchronous decentralized systems and introduces the possible outcomes in agents' self-healing dilemmas. Section~\ref{sec:fault-scenarios} formalizes fault scenarios that predict the cost of inconsistencies caused by faults. Section~\ref{sec:model-applicability} illustrates the applicability of the proposed model in a case study of decentralized aggregation. The mechanisms for fault detection, fault correction and fault tolerance to improve aggregation accuracy are outlined. Section~\ref{sec:methodology} introduces the experimental methodology that addresses the objectives of this study and Section~\ref{sec:results} illustrates the findings of the experimental evaluation. Finally, Section~\ref{sec:conclusion} concludes this paper and outlines future work. 	

\section{Positioning and Comparison to Related Work}\label{sec:related-work}

Self-healing mechanisms usually address different types of faults classified according to recent taxonomies~\cite{Stankovic2017,Mukwevho2018}. Assuming reliable communication channels, faults are differentiated as follows~\cite{Stankovic2017}: (i) \emph{Crash} - agents stop responding and terminate. (ii) \emph{Omission} - agents sporadically skip sending/receiving messages. (iii) \emph{Timing} - agents do not complete a task in a certain time frame. (iv) \emph{Arbitrary} (Byzantine)- agents deviate from the expected behavior and operate unpredictably. Another classification distinguishes between (i) \emph{transient}, (ii) \emph{intermittent}, (iii) \emph{permanent} and (iv) \emph{Byzantine} faults~\cite{Mukwevho2018}. While transient faults draw parallels with omission ones and are a result of a temporary affecting condition, e.g. network connectivity, intermittent faults are random, temporary and usually result of hardware failure. In contrast, crash hardware faults require repair of the root cause and are a subset of permanent faults. Byzantine faults result in corrupted/malicious agents sharing manipulated, forged or incorrect data. In large-scale decentralized systems, designing self-healing mechanisms exclusively for certain fault types is ineffective. Several such faults can co-occur, cascade or even have a cause-effect relationship resulting in vicious adaptation cycles, e.g. a faulty fault detection causing faulty fault correction and vice versa. Modeling the interplay of faults and formalizing complex fault scenarios as well as their impact on self-healing performance is fundamental and missing. 

In such fault scenarios, the reliability of fault detection plays a key role~\cite{Stankovic2017}. Two main fault-detection approaches of periodic \emph{heartbeat messages}  and \emph{agent interactions} are identified. The latter further distinguishes between \emph{timeout} and \emph{missing callback} detection. Replication~\cite{Isong2013,Stankovic2017} is a common approach that supports both fault tolerance and fault correction in terms of guarantying the availability of (backup) resources and repair modules for self-healing~\cite{Fedoruk2002}. Replication can be \emph{active vs. passive} based on whether replicas are used only when faults occur~\cite{Marin2001}, \emph{adaptive} based on criteria for replication~\cite{Arfat2016,Almeida2006}, \emph{dynamic} by switching on-the-fly replication schemes~\cite{Marin2001}, or \emph{homogeneous vs. heterogeneous} based on whether replicas are identical copies or equivalent processes~\cite{Fedoruk2002}. Replication is applied to check-point schemes based on rollback protocols~\cite{Jin2004,Koldehofe2013}, consensus protocols and hybrid approaches~\cite{Park2004}. Replication methods are particularly applicable in multi-agent systems, for instance, group replication via proxy servers~\cite{Fedoruk2002}, replication of agents based on the criticality of their planned actions~\cite{Almeida2006}, adjustable group replication with a leader agents~\cite{Marin2001} or introducing a special class of agents for redundancy maintenance~\cite{Kumar2000}. New replication strategies designed for the Internet of Things and cyber-physical systems are subject of recent work~\cite{Terry2016,Ratasich2019}. Self-healing methods~\cite{Sterritt2005} can be \emph{preventive (proactive)}~\cite{Isong2013} or \emph{reactive (resilient)}~\cite{Mukwevho2018}. The former methods require prediction based on probabilistic modeling and monitoring~\cite{Panda2015}. The latter ones require learning capabilities from historic data and observations~\cite{Rahnama2017,Su2016}. 

Despite the large body of work on fault correction and fault tolerance, a recent comprehensive review of such approaches for multi-agent systems identifies as imperative the need for generalized and standardized evaluation of fault-tolerance approaches~\cite{Stankovic2017}. Another recent systematic evaluation of 36 state-of-the-art self-healing systems from the research communities of autonomic computing, self-adaptive, self-organizing and self-managing systems (ICAC, SASO, TAAS, SEAMS) is illustrated~\cite{Ghahremani2020}. Empirical assessments conclude that multiple input traces covering a vast spectrum of failure characteristics are required to predict the performance of a self-healing system. Therefore, generalized models that predict the impact of faults and their correction on large-scale decentralized systems are missing so far~\cite{Isong2013}. Predicting without knowledge about the computational/application scenario, executed application algorithms and application data is challenging~\cite{Isong2013}. Such models have the potential to fundamentally influence the understanding of how to design and deploy more cost-effective decentralized self-healing systems. What makes particularly challenging the inception of such general models is the absence of central control units, the agents' autonomy, the network uncertainties and the non-determinism of system operations~\cite{Fedoruk2002}. In particular, failed processes are often indistinguishable from slow processes in asynchronous decentralized environments that inherit the impossibility of distributed consensus~\cite{Fischer1985,Chandra1996}. As a consequence, fault detection inherits such uncertainties~\cite{VanRenesse1998,Sridhar2006} (is the process faulty or slow?), which in turn results in dilemmas on what self-healing adaptations to apply, i.e. fault correction vs. fault tolerance. This paper addresses these self-healing dilemmas.

\section{Self-healing Dilemmas}\label{sec:dimemmas}

This paper studies self-healing of large-scale decentralized networked systems with faulty nodes. Decentralization means that no single node has full information about all other nodes in the network at a time and each node is connected with a limited number of other nodes. Faults can be a result of system failure, software failure, security attack or any other type of error that makes a faulty node inaccessible to other healthy nodes~\cite{Fedoruk2002}. Nodes usually depend on each other to perform distributed operations by communicating with each other in a peer-to-peer fashion. Even if communication is asynchronous, a fault introduces a cost that hinders (i) performance and/or (ii) consistency of a distributed operation.  The latter is referred to as \emph{inconsistency cost} and is the main focus of this paper. 

Two approaches are distinguished to eliminate these costs: (i) \emph{fault correction} vs. (ii) \emph{fault tolerance}. Fault correction eliminates the performance and inconsistency cost of faults via an effective and timely fault detection and its correction. For instance, consider a master-slave heartbeat mechanism with which a master node monitors the health status of a slave node by receiving periodically heartbeat messages. A fault detection by the master node is the passage of time period without receipt of a heartbeat message. This period is usually selected empirically and universally~\cite{Gyamfi2019}. In contrast, fault-tolerance mechanisms are designed to decrease the performance and inconsistency cost of faults by preventing a total system break down and allowing a system to continue its operation with an operating quality proportional to the severity of the fault. 

The following assumptions are made: (i) Both fault correction and fault tolerance have both, a performance and inconsistency cost. They have performance cost because their operations usually introduce communication and processing overhead. They have inconsistency cost because of uncertainties in fault detection. A fault may be erroneously detected because of high network latency, low convergence speed of the underlying communication model, misconfiguration or poor design in fault detection. For instance, a heartbeat message may not be received because of network fault rather than because of a node fault. The unnecessary correction process consumes resources and introduces potential inconsistencies as system operations are usually not designed to tolerate unnecessary fault corrections. (ii) The performance and inconsistency cost of fault correction and fault tolerance is significantly lower than the respective costs of a system left to be faulty, i.e. without any self-healing. In other words, it makes sense to take care of faults either via fault tolerance or fault correction (or both). (iii) The performance and inconsistency cost of fault correction vs. the ones of fault tolerance depend on the operational state of the system during system runtime and therefore, it is unclear which self-healing approach should be adopted. Based on these three assumptions as well as the focus of this paper on inconsistency cost to eliminate the number of studied dimensions, a study on how to minimize the inconsistency cost in fault correction vs. fault-tolerance dilemmas is illustrated.

Figure~\ref{fig:concept}a models self-healing dilemmas in a decentralized networked system that consists of Node $A$, $B$ and $C$. Each node runs a self-healing agent that is responsible to perform fault correction or fault tolerance. Given the focus of this paper on faulty nodes and without loss of generality, the self-healing agents need to operate remotely so that they are not affected by the faults of the parent nodes, i.e. the original host nodes that created them. As a result, they migrate to neighboring nodes as shown in Figure~\ref{fig:concept}b. In practice, the scope of this model covers several systems that have backup components for redundancy. For the sake of this illustration, a heartbeat mechanism is assumed with which self-healing agents monitor the health status of parent nodes as shown in Figure~\ref{fig:concept}c. Heartbeat messages may not arrive at Node $B$ because of (i) a fault in the parent Node $A$ and/or (ii) a large latency or network error in the link between Node $A$ and $B$~\cite{Lavinia2011}. See Figure~\ref{fig:concept}d. Therefore, when heartbeat messages are not anymore received in Node $A$, the dilemma of the self-healing agent is whether to perform fault correction, i.e. establish the new link between $A$ and $C$ because the parent Node $B$ is truly faulty or perform fault tolerance, i.e. not establish a new link because missing heartbeat messages are probably a result of high latency or fault in the link connecting Node $A$ and $B$. It is assumed that the applied fault correction is effective if and only if nodes experience faults, otherwise correction introduces an inconsistency cost.  

\begin{figure}[!htb]
	\centering
	\includegraphics[width=0.6\columnwidth]{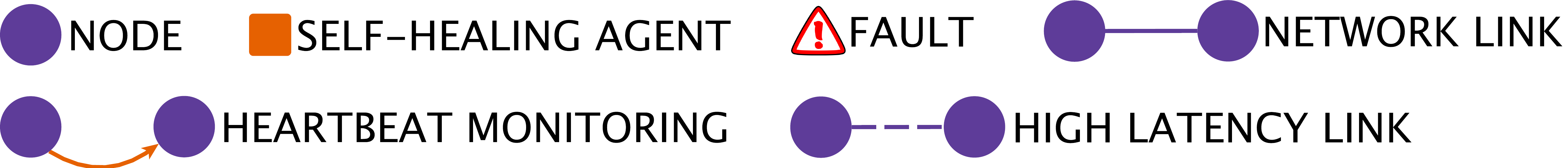}\\
	\subfigure[]{\includegraphics[width=0.24\columnwidth]{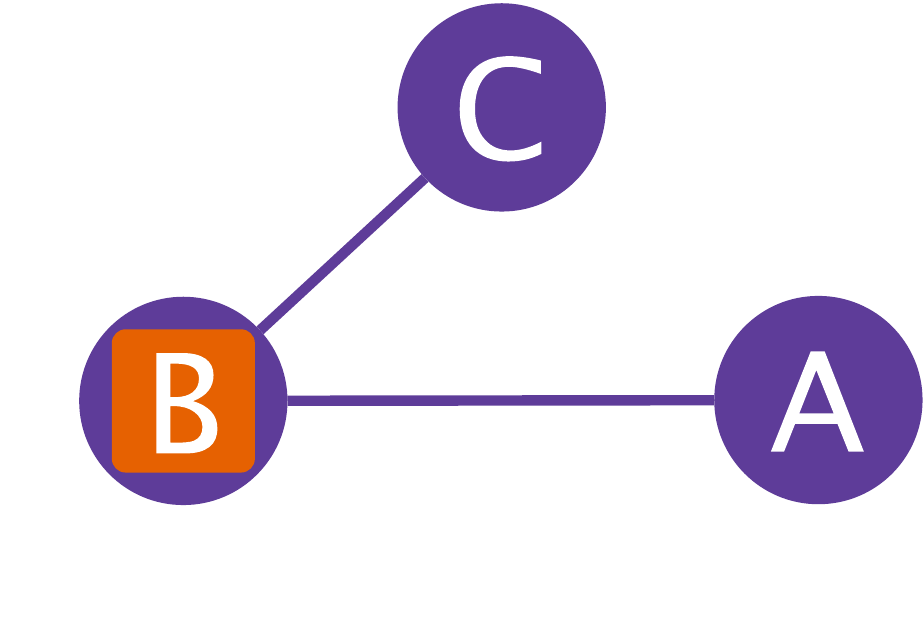}}
	\subfigure[]{\includegraphics[width=0.24\columnwidth]{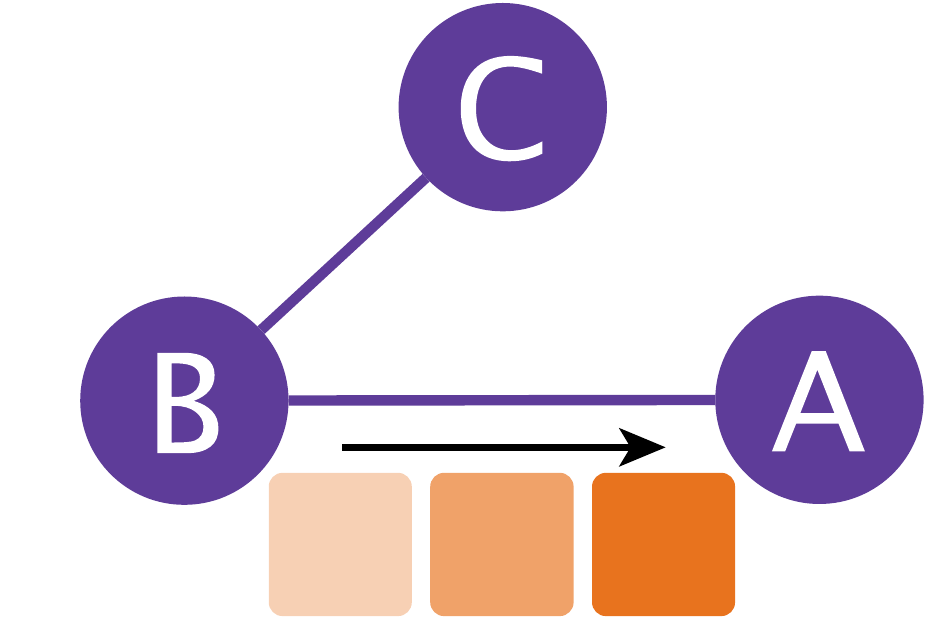}}
	\subfigure[]{\includegraphics[width=0.24\columnwidth]{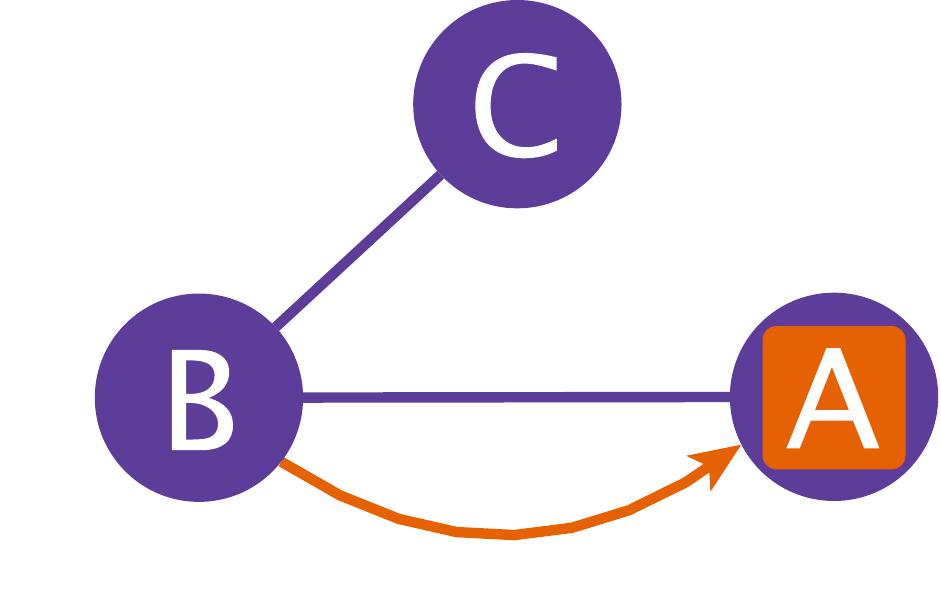}}
	\subfigure[]{\includegraphics[width=0.24\columnwidth]{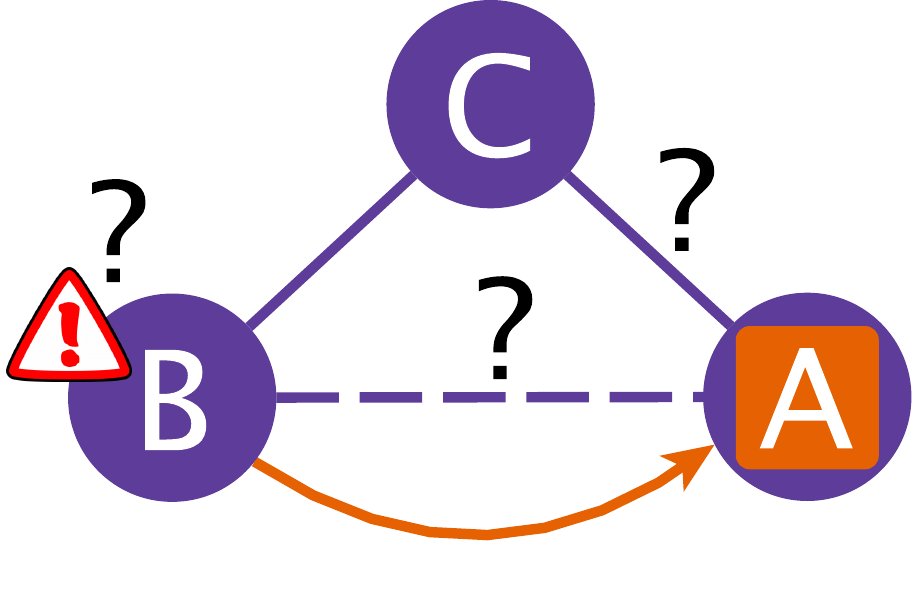}}
	\caption{Modeling self-healing dilemmas in a decentralized networked system. \textbf{(a)} Nodes (circles) are connected in a decentralized network and initiate self-healing agents (squares), which are responsible to detect, correct or tolerate faults of their parent node. \textbf{(b)} Self-healing agents find a host node to migrate for redundancy. In this way, a fault on their parent node does not influence their self-healing operations. \textbf{(c)} Self-healing agents monitor the health status of their parent nodes via, for instance, heartbeat signals. \textbf{(d)} Fault detection, determined by a time period during which heartbeat messages are not received from the parent node, comes along with uncertainties. For instance, the heartbeat messages may not be received because of high latency or network error on the $A$-$B$ link, rather than because Node $B$ is faulty~\cite{Lavinia2011}. In this example, self-healing agents perform fault correction by initiating a new connection with another node if they detect a fault in the parent node. Fault correction eliminates inconsistency cost if a node is actually faulty, but introduces inconsistency cost if the parent node is actually not faulty.   Therefore, the self-healing agent has the following dilemma: Should it establish the $A$-$C$ link (fault correction) or wait longer for heartbeat messages to arrive (fault tolerance)? There are four possible outcomes in this decision-making illustrated in Figure~\ref{fig:outcomes}.}\label{fig:concept}
\end{figure}

The fault-correction vs. fault-tolerance dilemmas come with four possible outcomes as shown in Figure~\ref{fig:outcomes}. Note that in Figure~\ref{fig:outcomes}a and~\ref{fig:outcomes}b there are two outcomes that do not have inconsistency cost (desirable outcomes). These are the \emph{true negative} outcome that is a result of effective fault tolerance and the \emph{true positive} outcome as a result of effective fault correction. Figure~\ref{fig:outcomes}c and~\ref{fig:outcomes}d show the two outcomes with an inconsistency cost (undesirable outcomes). These are the \emph{false negative} outcome\footnote{False negatives also originate from faults in the node hosting the self-healing agent.} by erroneous fault tolerance and the \emph{false positive} outcome by erroneous fault correction. 

\begin{figure}[!htb]
	\centering
	\includegraphics[width=0.6\columnwidth]{legend.pdf}\\
	\subfigure[True negative. Desirable.]{\includegraphics[width=0.24\columnwidth]{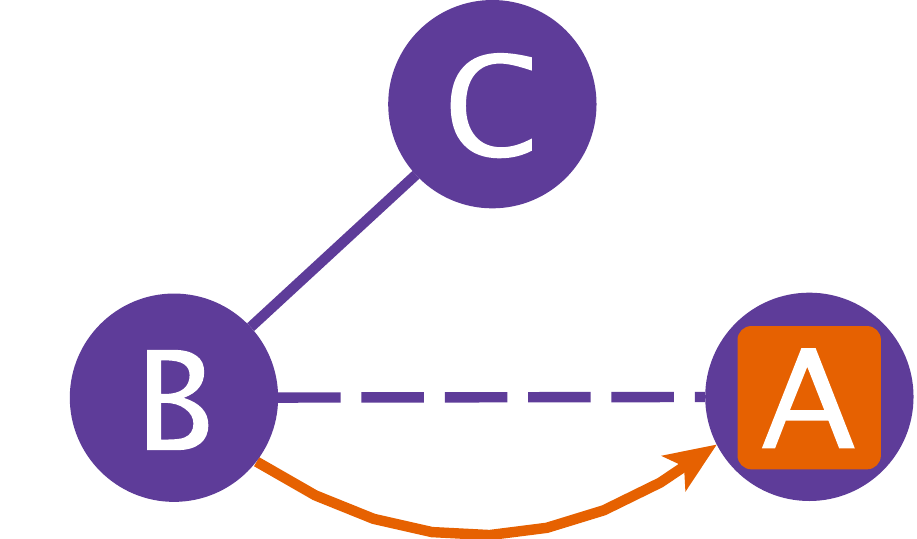}}
	\subfigure[True positive. Desirable.]{\includegraphics[width=0.24\columnwidth]{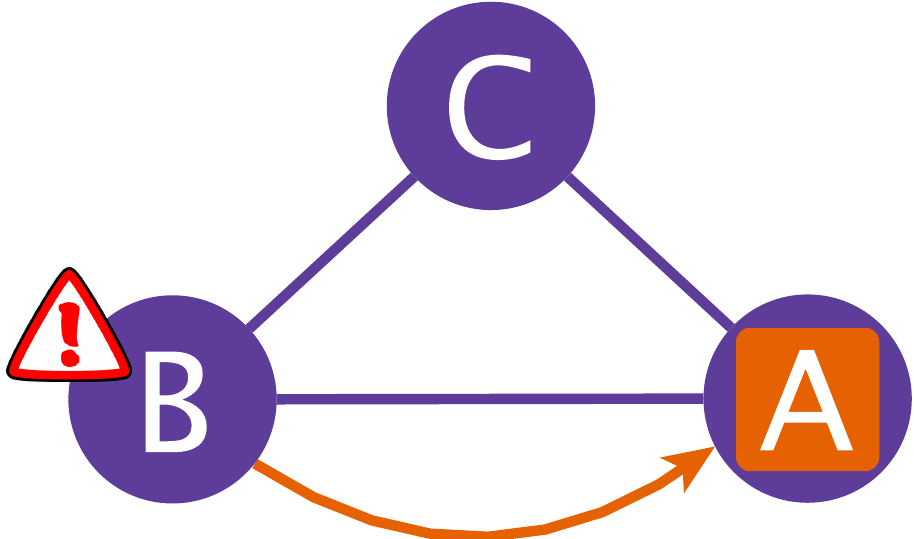}}
	\subfigure[False negative. Undesirable.]{\includegraphics[width=0.24\columnwidth]{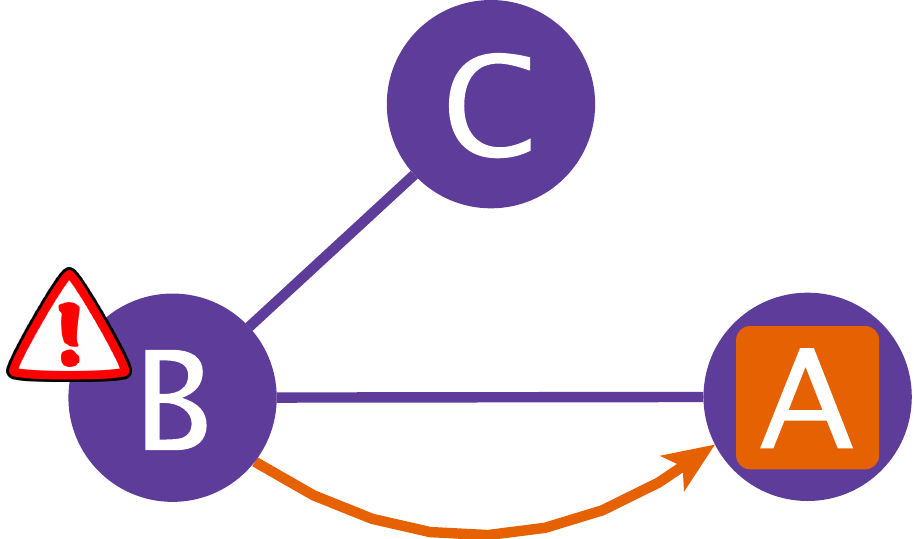}}
	\subfigure[False positive. Undesirable.]{\includegraphics[width=0.24\columnwidth]{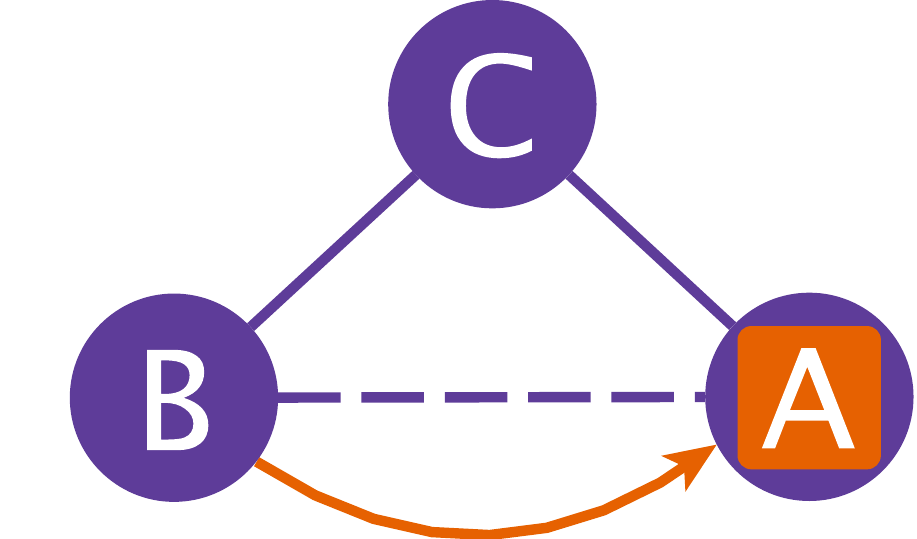}}
	\caption{Self-healing dilemmas in fault detection under uncertainty. In this illustrative example, fault correction is the reactive establishment of the link between Node $A$ and $C$ when no heartbeat messages are received in Node $A$ from Node $B$. In contrast, fault tolerance waits further for the heartbeat messages to arrive assuming a delay over the link between Node $A$ and $B$. Possible outcomes: (a) \emph{True negative}: Parent Node $B$ is healthy and the self-healing agent does not perform fault correction. This outcome has no inconsistency cost. (b) \emph{True positive}: Parent Node $B$ is faulty and the self-healing agent performs fault correction. This outcome has no inconsistency cost. (c) \emph{False negative}: Parent Node $B$ is faulty but the self-healing agent does not perform fault correction. This outcome has inconsistency cost. (d) \emph{False positive}: Parent Node $B$ is healthy but the self-healing agent performs fault correction. This outcome has inconsistency cost. }\label{fig:outcomes}
\end{figure}

The next section formalizes the fault scenarios of false negative and false positive outcomes during system runtime, which are the ones that come with an inconsistency cost. The modeled fault scenarios serve the following: (i) Predict the inconsistency cost of decentralized self-healing systems without application information. (ii) Design self-healing agents with a fault-detection capability that minimizes the inconsistency cost during system runtime. 

\section{Modeling Fault Scenarios at System Runtime}\label{sec:fault-scenarios}

Table~\ref{table:math-symbols} summarizes the mathematical symbols used in the rest of this paper. Assume once more here the pair of Nodes $A$ and $B$, where Node $A$ remotely monitors the health status of Node $B$. Tracking the inconsistency cost generated by this pair of nodes during system runtime is complex and challenging due to the uncertainty over the different \emph{fault scenarios} in the following: (i) Faults in either of the two (monitoring and monitored) nodes. (ii) Faulty detection in Node $A$, performed either too early or too late. Table~\ref{table:scenarios} illustrates the modeled fault scenarios that can occur during system runtime. 

\begin{table}[!htb]
	\caption{Mathematical notation used in this paper.}\label{table:math-symbols}
	\centering
	\resizebox{\columnwidth}{!}{%
		\begin{tabular}{l l}
			\toprule
			Symbol & Description\\
			\midrule
			$T$ & System runtime\\
			$\tau \in \{1, ..., T\}$ & Time unit\\
			$d \leq T$ & Detection time\\
			$t \leq T$ & Threshold for fault detection\\
			$F_{A}, F_{B} \leq T$ & Fault time of Node $A$ and $B$\\
			$l$ & Total number of fault scenarios\\
			$s \in \{1, ..., l\}$ & Fault scenario\\
			$\epsilon_{s}^{-}, \epsilon_{s}^{+} \in \mathbb{R}$ & Maximum inconsistency cost of a fault scenario $s$ generated by\\&a false negative and false positive state, during system runtime $T$\\
			$\rho_{s}^{-}, \rho_{s}^{+} \in [0, 1]$ & Relative inconsistency cost of a fault scenario $s$ generated by\\&a false negative and false positive state, during system runtime $T$\\
			$C, C^{-}, C^{+} \in \mathbb{R}$ & Total inconsistency cost as well as  inconsistency cost generated by\\&a false negative and false positive state during system runtime $T$\\
			$c \leq T$ & Time to restore aggregation accuracy by a corrective operation\\
			$R_{\mathsf{B}} \leq T$ & Recovery time of Node $B$\\
			$p$ & Time to propagate a node descriptor to another node\\
			$n \in \mathbb{N}$ & Total number of nodes\\
			$m \in \mathbb{N}$ & Number of batches of faulty nodes\\
			$k \in \mathbb{N}, m \dot k \leq n$ & Number of faulty nodes at each batch\\
			$\lambda \in [0,1]$ & Calibration factor\\
			$RMSE$ & Root mean square error\\
			$C_{\mathsf{R}}, C_{\mathsf{GR}}$ & Total inconsistency cost of regression and generalized\\&regression calibration methods\\
			$C_{\mathsf{D}}$ & Predicted inconsistency cost of DIAS\\
			\bottomrule
		\end{tabular}
	}
\end{table}

\begin{table}[!htb]
\caption{Modeling fault scenarios between the pair of Nodes $A$ and $B$ during system runtime $T$ and the relative inconsistency cost $\rho_{s}^{}$ by the false positive and false negative outcomes. These scenarios are formalized with Node $A$ monitoring the health status of Node $B$. The scenarios are modeled based on the following information: (i) The health status of the nodes, i.e. which of the two nodes are ON/OFF (healthy/faulty). (ii) The timing of the faults $F_{A}, F_{B}$, i.e. which node is faulty first. Cost calculations rely on the detection time $d$ of Node $A$ and the detection threshold $t$.} \label{table:scenarios}
\centering
\resizebox{\columnwidth}{!}{%
\begin{tabular}{l l l l l}
\toprule
&  \multicolumn{2}{ c }{Fault Scenario ($s$)} & \multicolumn{2}{c}{\makecell[c]{Relative\\Inconsistency Cost ($\rho_{s}^{}$)}}\\\cmidrule{2-3}
\cmidrule{4-5}
Depiction & \makecell[l]{Health\\Status} & \makecell[l]{Faults\\Timing} & \makecell[l]{False\\Positive} & \makecell[l]{False\\Negative}\\\addlinespace\toprule
\begin{minipage}{.25\textwidth}\includegraphics[width=1.0\textwidth]{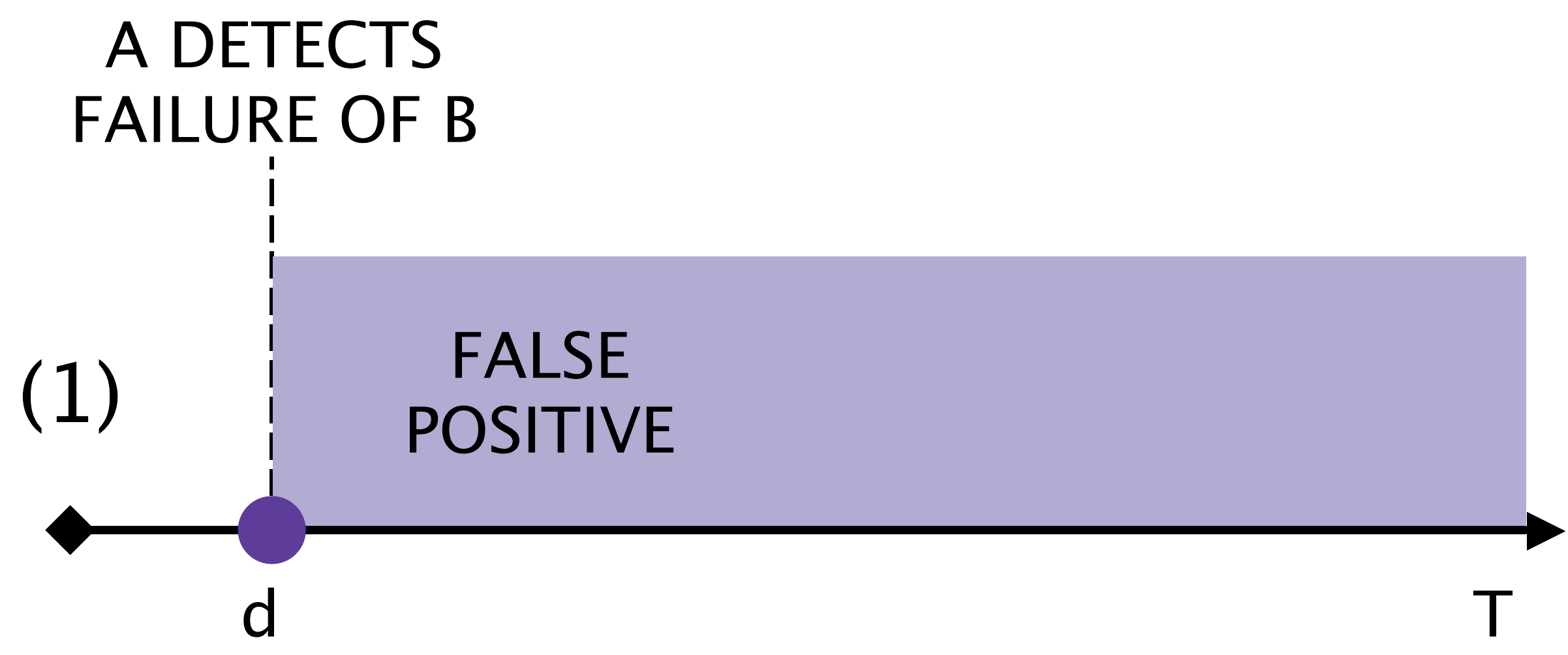}\end{minipage} & \makecell[l]{$A$: ON\\ \\$B$: ON}& - & $\frac{T-d}{T-t}$ & - \\
\begin{minipage}{.25\textwidth}\includegraphics[width=1.0\textwidth]{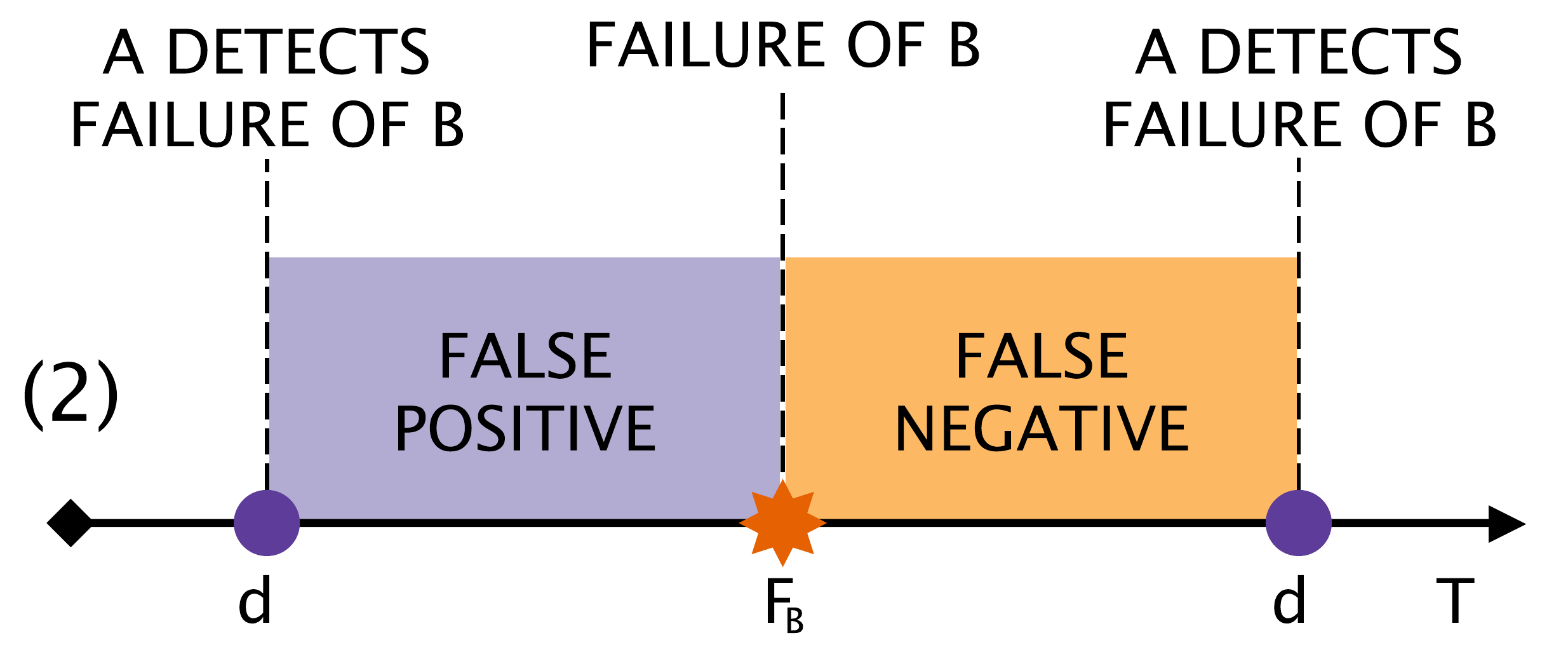}\end{minipage} & \makecell[l]{A: ON\\ \\B: OFF} & $F_{B}$ & $\frac{F_{B}-d}{F_{B}-t}$ & $\frac{d-F_{B}}{T-F_{B}}$ \\
\begin{minipage}{.25\textwidth}\includegraphics[width=1.0\textwidth]{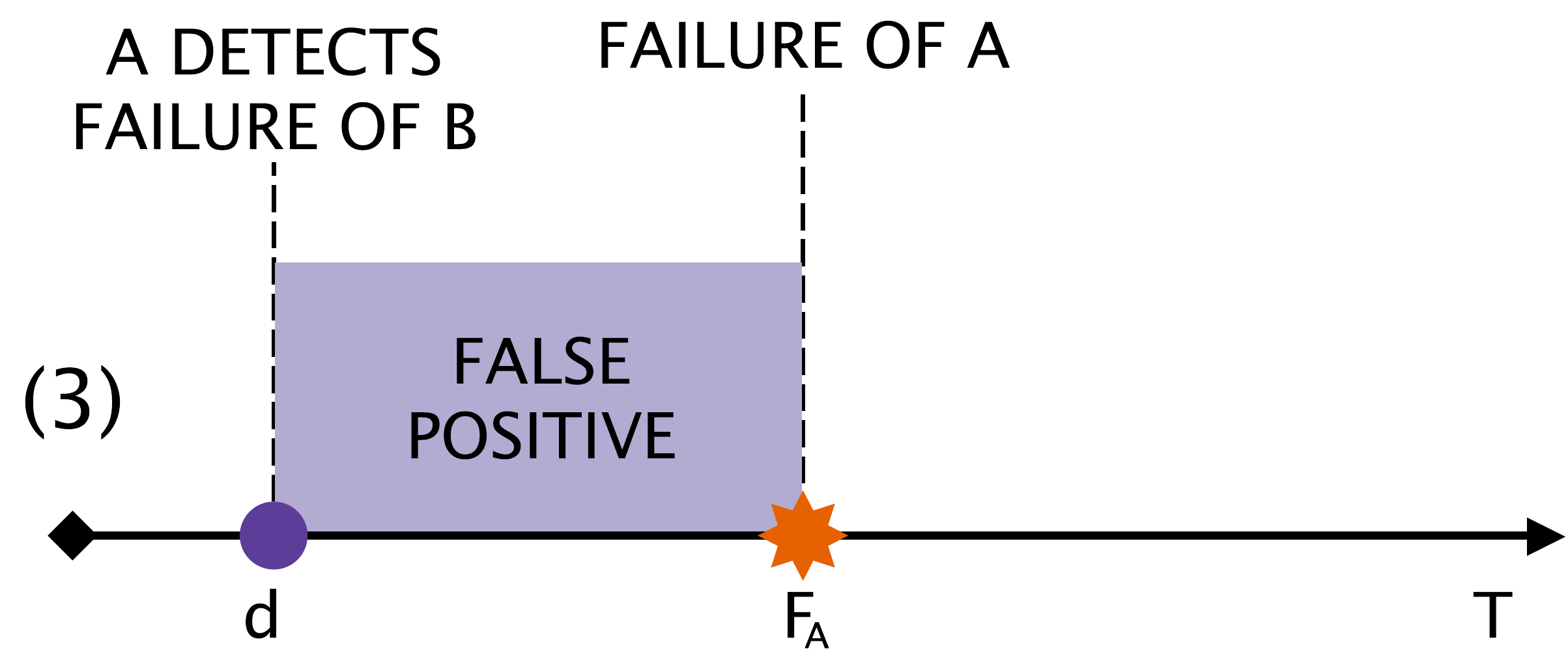}\end{minipage} & \makecell[l]{$A$: OFF\\ \\$B$: ON} & $F_{A}$ & $\frac{F_{A}-d}{F_{A}-t}$ & - \\
\begin{minipage}{.25\textwidth}\includegraphics[width=1.0\textwidth]{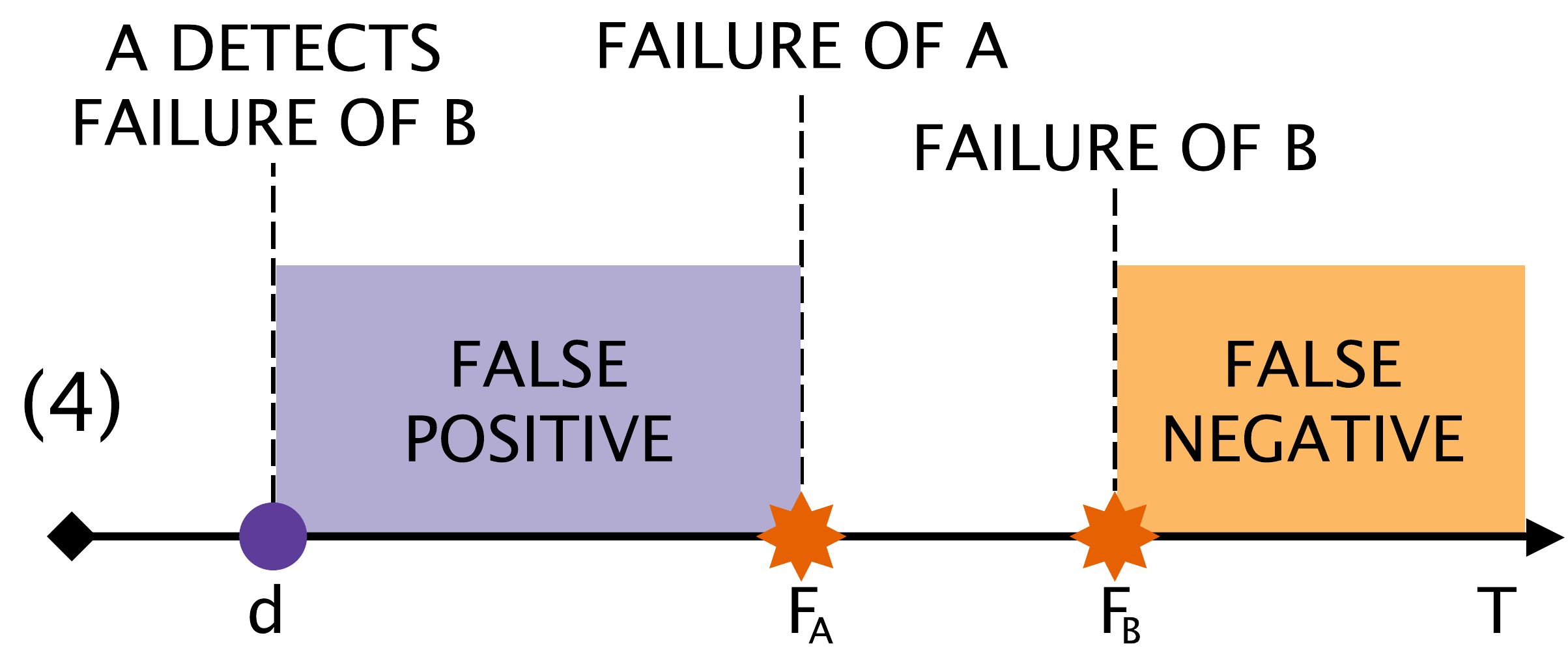}\end{minipage} & \makecell[l]{A: OFF\\ \\B: OFF} &  $F_{A}<F_{B}$ & $\frac{F_{A}-d}{F_{A}-t}$ & $\frac{T-F_{B}}{T-F_{B}}=1$ \\
\begin{minipage}{.25\textwidth}\includegraphics[width=1.0\textwidth]{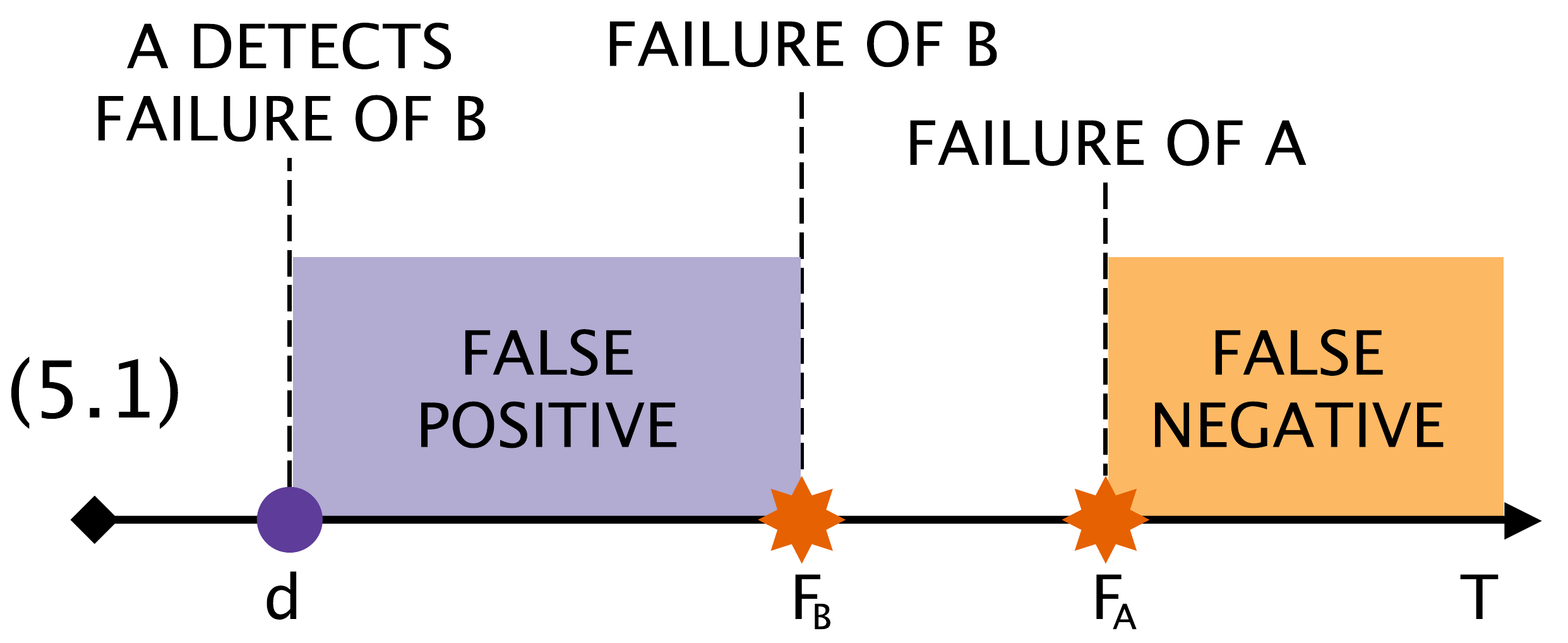}\end{minipage} & \makecell[l]{A: OFF\\ \\B: OFF} & \makecell[l]{$F_{A}>F_{B}$,\\ \\$d<F_{B}$} & $\frac{F_{B}-d}{F_{B}-t}$ & $\frac{T-F_{A}}{T-F_{A}}=1$ \\
\begin{minipage}{.25\textwidth}\includegraphics[width=1.0\textwidth]{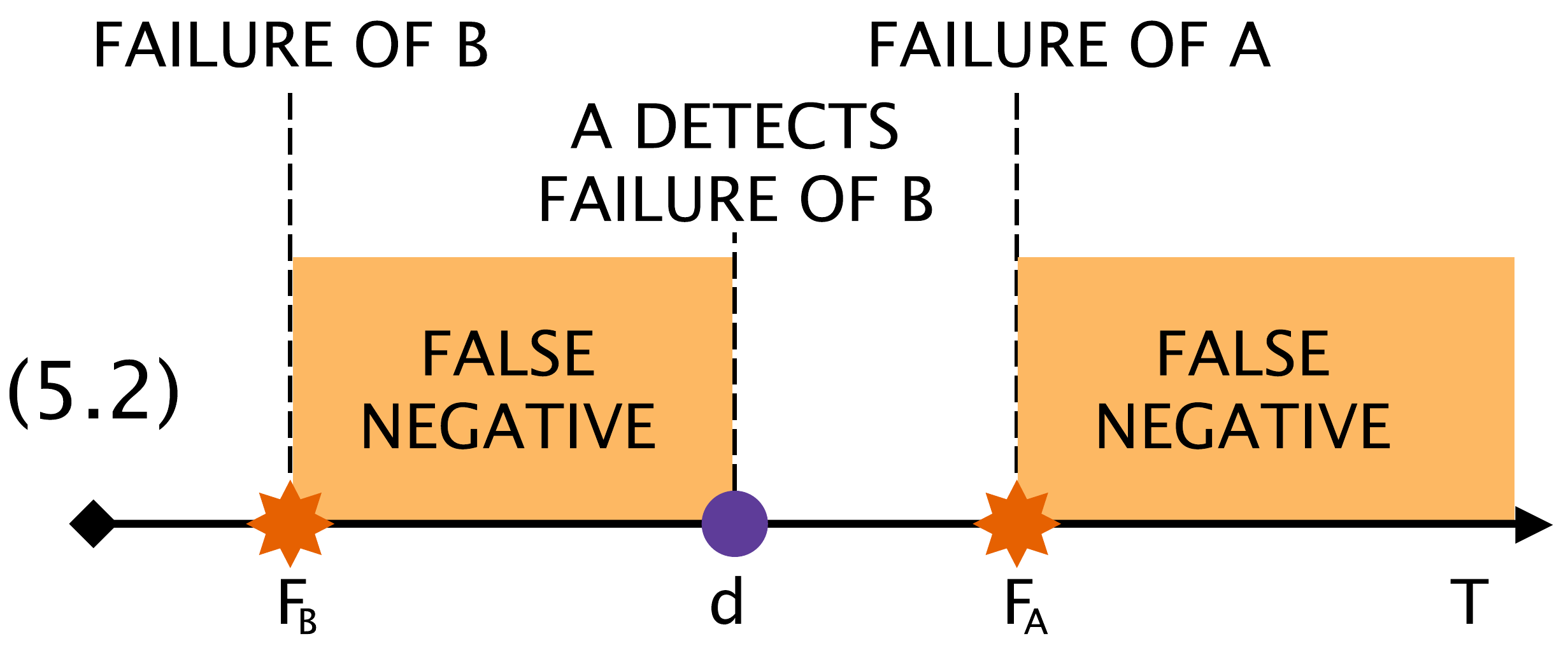}\end{minipage} & \makecell[l]{A: OFF\\ \\B: OFF} & \makecell[l]{$F_{A}>F_{B}$,\\ \\$d>F_{B}$} & - & \makecell[l]{$\frac{d-F_{B}}{F_{A}-F_{B}}$,\\\\$\frac{T-F_{A}}{T-F_{A}}=1$} \\
\begin{minipage}{.25\textwidth}\includegraphics[width=1.0\textwidth]{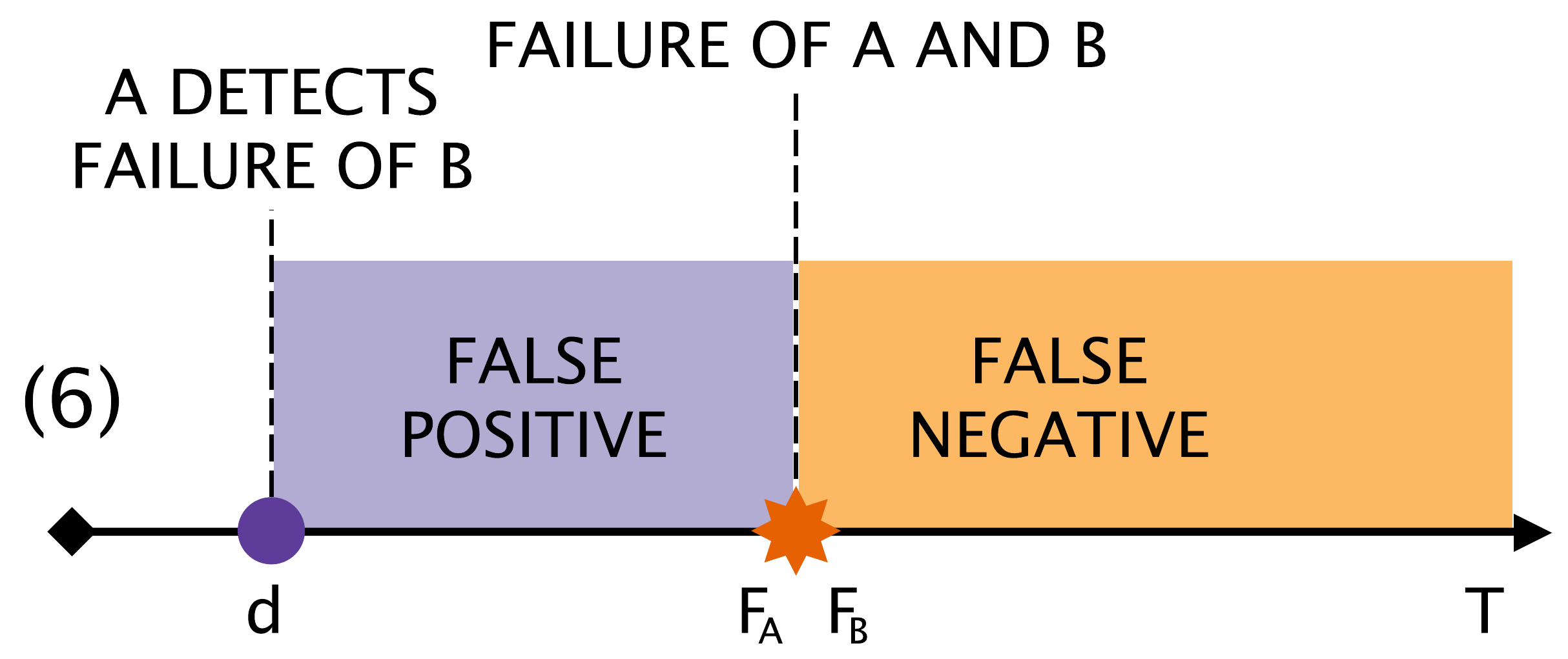}\end{minipage} & \makecell[l]{A: OFF\\ \\B: OFF} & $F_{A}=F_{B}$ & $\frac{F_{B}-d}{F_{B}-t}$ & $\frac{T-F_{A}}{T-F_{A}}=1$ \\\bottomrule
\end{tabular}
}
\end{table}

A \emph{system runtime} $T$ is studied over which the inconsistency cost is traced as well as the \emph{detection time} $d\leq T$ of Node $A$ identifying Node $B$ as faulty. Without loss of generality, Node $A$ triggers at time $d$ fault correction after a \emph{threshold} time period $t \leq T$ during which ($d-t, d-t+1,..., d-1,d$) the fault-detection criterion is satisfied, e.g. a heartbeat message is not received. Otherwise, Node $A$ performs fault tolerance, awaiting further for the heartbeat messages to arrive and relying on underlying preventive maintenance mechanisms for recovery. Both Node $A$ and $B$ can become faulty at any time $F_{A}, F_{B} \leq T$ respectively. At each time $\tau \in \{1,...,T\}$ the nodes can be in one of the states of Figure~\ref{fig:outcomes}. False negative and false positive states generate at each time $\tau$ an inconsistency cost value for a given \emph{fault scenario} $s \in \{1,...,l\}$ out of $l$ possible fault scenarios shown in Table~\ref{table:scenarios}. Moreover, for a fault scenario $s$, the time during which the pair of Nodes $A$ and $B$ are in a false negative or false positive state out of the total time period in which they can be in such a state during system runtime is measured by $\rho_{s}^{-}$ and $\rho_{s}^{+}$ respectively. Therefore, the total inconsistency cost $C$ generated by a pair of agents during the system runtime $T$ can be measured as follows:

\begin{equation}
C=C^{-}+C^{+}=\sum_{s=1}^{l} (\epsilon_{s}^{-}\cdot\rho_{s}^{-} + \epsilon_{s}^{+}\cdot\rho_{s}^{+}),\label{eq:inconsistency-cost}
\end{equation}

\noindent where $\epsilon_{s}^{-}$ and $\epsilon_{s}^{+}$ are the total (maximum) inconsistency cost that can be generated during system runtime by a fault scenario $s$. These can be a result of a constant unit of inconsistency cost value generated at each time point $\tau$ or they can be the output of functions $\epsilon_{s}^{-}=\sum_{\tau=1}^{T}f^{-}(\tau)$, $\epsilon_{s}^{+}=\sum_{\tau=1}^{T}f^{+}(\tau)$ representing an analytical or empirical model~\cite{Zhou2018,Bhandari2018,Gorbenko2019}. The fault scenarios of Table~\ref{table:scenarios} are illustrated as follows:

\begin{itemize}
\item[\textbf{1.}] $A$: ON, $B$: ON -- Both nodes do not defect.\\
The inconsistency cost by a false positive outcome is generated during the time period $T-d$ in which Node $A$ erroneously detects Node $B$ as faulty, while it is not. The maximum time during which nodes can be in this state is $T-t$. 
\item[\textbf{2.}] $A$: ON, $B$: OFF -- Node $B$ becomes faulty at time $F_{B}$, while Node $A$ does not defect. \\
If fault detection occurs before $F_{B}$, the inconsistency cost by a false positive outcome is generated during the time period $F_{B}-d$, with a maximum duration of $F_{B}-t$. If fault detection occurs after $F_{B}$, the inconsistency cost by a false negative outcome is generated during the time period $d-F_{B}$, with a maximum duration of $T-F_{B}$. 
\item[\textbf{3.}] $A$: OFF, $B$: ON -- Node $A$ becomes faulty at time $F_{A}$, while Node $B$ does not defect.\\
If fault detection occurs before $F_{A}$, the inconsistency cost by a false positive outcome is generated during the time period $F_{A}-d$, with a maximum duration of $F_{A}-t$.
\item[\textbf{4.}] $A$: OFF, $B$: OFF -- Node $A$ becomes faulty at time $F_{A}$ and Node $B$ at $F_{B}>F_{A}$.\\
If fault detection occurs before $F_{A}$, the inconsistency cost by a false positive outcome is generated during the time period $F_{A}-d$, with a maximum duration of $F_{A}-t$. The inconsistency cost by a false negative outcome is generated during the time period $T-F_{B}$ that is also the maximum duration.  This is because Node $A$ is faulty to perform fault detection of Node $B$. 
\item[\textbf{5.1}] $B$: OFF, $A$: OFF, $d<F_{B}$ -- Node $A$ becomes faulty at time $F_{A}$ and Node $B$ at $F_{B}<F_{A}$.\\
If fault detection occurs before $F_{B}$, the inconsistency cost by a false positive outcome is generated during the time period $F_{B}-d$, with a maximum duration of $F_{B}-t$. The inconsistency cost by a false negative outcome is generated during the time period $T-F_{A}$ that is also the maximum duration.  This is because Node $A$ becomes faulty and is unable to perform fault correction of Node $B$. 
\item[\textbf{5.2}] $B$: OFF, $A$: OFF, $d>F_{B}$ -- Node $A$ becomes faulty at time $F_{A}$ and Node $B$ at $F_{B}<F_{A}$.\\
The inconsistency cost by a false negative outcome is generated during the time period $d-F_{B}$, with a maximum duration of $F_{A}-F_{B}$. This represents the lag of detecting the faulty Node $B$. There is also an additional inconsistency cost by a false negative outcome during the time period $T-F_{A}$ that is also the maximum duration. This is because Node $A$ becomes faulty and is unable to perform fault correction of Node $B$. 
\item[\textbf{6.}] $A$, $B$: OFF -- Node $A$ and $B$ become faulty at time $F_{A}=F_{B}$.\\
The inconsistency cost by a false positive outcome is generated during the time period $F_{B}-d$, with a maximum duration of $F_{B}-t$. The inconsistency cost by a false negative outcome is generated during the time period $T-F_{A}$ that is also the maximum duration. This is because Node $A$ becomes faulty and is unable to perform fault correction of Node $B$. 
\end{itemize}

As proven below, these fault scenarios are \emph{sufficient} to model the overall system health status:

\begin{theorem}\label{th:health-status-space} 
The fault times $F_{A}$, $F_{B}$ of each possible pair of a Node $A$ monitoring the health status of Node $B$ are sufficient to calculate the health status of a decentralized system of $n$ nodes that arbitrary defect in $m$ batches, each of size $k<n$. 
\end{theorem}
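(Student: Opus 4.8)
The plan is to reduce the global health status, which a priori lives in the exponentially large space $\{\text{ON},\text{OFF}\}^{n\times T}$, to the comparatively small collection of individual fault times, and then to argue that the pairwise data of Table~\ref{table:scenarios} recover exactly this collection. The first step I would establish is a \emph{monotonicity} observation, which is already implicit in every row of Table~\ref{table:scenarios}: for a single node $i$ with fault time $F_i\leq T$, its health trajectory is the step function that is ON for every $\tau<F_i$ and OFF for every $\tau\geq F_i$, with the non-defecting case captured by the convention $F_i>T$. Consequently the entire history of node $i$ over $\{1,\dots,T\}$ is encoded by the single scalar $F_i$, and the global health status of the $n$ nodes is encoded, with no loss of information, by the vector $(F_1,\dots,F_n)$. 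This turns the claim into the equivalent statement that the pairwise fault times determine this vector.

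The second step is a covering argument over the monitoring relation, which I would organize as an induction on the number of batches $m$. Each monitoring pair $(A,B)$ reports the two fault times $F_A$ and $F_B$, and since we range over \emph{every possible pair}, each node $i$ occurs in at least one pair and its fault time is read off directly. In the base case $m=0$ no node defects, every trajectory is identically ON, and the health status is trivially known. For the inductive step I would assume the health status generated by $m-1$ batches is determined by the pairwise data, then add the $m$-th batch of $k$ nodes: each node in this batch shares the batch fault time, appears in some pair, and hence contributes a known $F_i$; combining these $k$ updated trajectories with the previously reconstructed ones yields the full health status for $m$ batches. The constraint $mk\leq n$ enters here only to guarantee feasibility, namely that the $mk$ defecting nodes and the $n-mk$ permanently healthy nodes form a genuine partition of the node set, with $k<n$ ensuring no single batch exhausts the system.

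The part I expect to require the most care is verifying that the seven rows of Table~\ref{table:scenarios} are \emph{exhaustive and mutually exclusive} as a function of the relative ordering of $F_A$, $F_B$ and the detection time $d$, so that every pair falls into exactly one scenario and the recorded $(F_A,F_B)$ are always well defined. I would carry out the case split on the health pattern (both ON; exactly one OFF; both OFF) and, in the both-OFF case, on the sign of $F_A-F_B$ and, when $F_B<F_A$, on whether $d<F_B$ or $d>F_B$; checking that these cases tile every admissible configuration is the crux, since it is what makes the pairwise decomposition \emph{complete} rather than merely consistent. Once exhaustiveness is in hand, the covering argument guarantees that the union of the fault times reported across all pairs equals $\{F_i\}_{i=1}^{n}$, and by the monotonicity reduction this determines the global health status, which establishes sufficiency.
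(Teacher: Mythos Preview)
Your proposal is correct in spirit but takes a genuinely different route from the paper, and it is worth seeing the contrast.

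The paper's proof is a pure counting argument. It interprets the ``health status space'' as the set of all $n^{2}-n$ ordered monitoring pairs $(A,B)$ with $A\neq B$, and then shows that the fault scenarios of Table~\ref{table:scenarios} partition this set exactly. Concretely, it computes the number of pairs falling into each scenario as a function of $n$, $m$, $k$: $(n-mk)(n-mk-1)$ for both ON, $2mk(n-mk)$ for exactly one OFF, $mk^{2}(m-1)$ for both OFF in different batches, and $mk(k-1)$ for both OFF in the same batch, and verifies algebraically that these four quantities sum to $n^{2}-n$. Exhaustiveness and disjointness are thus established in one stroke by arithmetic, with no case analysis on $d$ or on the ordering of $F_A,F_B$ beyond what is needed to define the classes.

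Your approach instead reduces the global health status to the vector $(F_1,\dots,F_n)$ via monotonicity of the ON/OFF trajectories, then argues that the pairwise data recover this vector because every node appears in some pair. This is a valid and more information-theoretic reading of the theorem, and your third step (checking that the rows of Table~\ref{table:scenarios} tile all configurations) is exactly the substance of what the paper proves---you just propose to do it by qualitative case-splitting rather than by counting. Two minor remarks: the induction on $m$ is unnecessary, since your covering argument already works for arbitrary $m$ directly once you observe each node sits in at least one pair; and the counting route has the advantage that it simultaneously delivers the explicit scenario frequencies used later in Table~\ref{tab:profiles} and Figure~\ref{fig:pairs-per-scenario-scale}, which your argument does not produce as a by-product.
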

\begin{proof}
The size of the health status space in a decentralized system of $n$ nodes, where each node monitors the health status of all other nodes, is $n^{2}-n$ as each node monitors the health status of all other $n-1$ nodes. The fault times $F_{A}$, $F_{B}$ $\neq 0$ and the six fault scenarios they determine (outlined in Table~\ref{table:scenarios}) are sufficient to calculate this space, if the number of node pairs determined at each fault scenario sum up to $n^{2}-n$. For each fault scenario, the number of node pairs are determined as follows:
\begin{itemize}
	\item $A$: ON, $B$: ON, $F_{A}=F_{B}=0$ \\
	Each of the $n-mk$ healthy nodes monitors the health status of all other $n-mk-1$ healthy nodes, i.e. nodes do not monitor their own health status. This is the number of healthy node pair variations without repetition:
		\begin{equation}\label{eq:fault-scenario-1}
			\binom {n-mk}{2} 2! = \frac{(n-mk)!}{2!(n-mk-2)!} = (n-mk)(n-mk-1)
		\end{equation}
		
	\item $A$: ON, $B$: OFF, $F_{A}=0<F_{B}\leq T$ and \\
	$A$: OFF, $B$: ON, $F_{B}=0<F_{A}\leq T$ \\
	Each of the $n-mk$ healthy nodes monitors the health status of all other $mk$ faulty nodes and vice versa for these two fault scenarios. The total number of these node pairs for both fault scenarios are calculated as: 
		\begin{equation}\label{eq:fault-scenario-2-3}
			2mk(n-mk)
		\end{equation}
		
	\item $A$: OFF, $B$: OFF, $0<F_{A}<F_{B}\leq T$ and\\
	$B$: OFF, $A$: OFF, $0<F_{B}<F_{A}\leq T$ \\
	These fault scenarios involve $k^{2}$ pairs of faulty Nodes $A$ and $B$ for each possible pair of different batches of faulty nodes. The total number of all possible batch pairs in which a pair of two faulty Nodes $A$ and $B$ reside is the number of batch pair variations without repetition. Therefore, the total number of these faulty node pairs for both fault scenarios is calculated as follows: 
		\begin{equation}\label{eq:fault-scenario-4-5}
			k^{2} \binom{m}{2}2! = k^{2} \frac{m!}{2!(m-2)!}2! = mk^{2}(m-1)
		\end{equation}
		
	\item $A$, $B$: OFF, $0<F_{A}=F_{B} \leq T$ \\
	This fault scenario involves $k^{2}-k$ pairs of faulty Nodes $A$ and $B$ defecting at the same time at each of the $m$ batches of faulty nodes. This is the number of faulty node pair variations without repetition: 
		\begin{equation}\label{eq:fault-scenario-6}
			m\binom{k}{2}2! = m\frac{k!}{2!(k-2)!}2!=mk(k-1)
		\end{equation}
\end{itemize}
The number of node pairs from all fault scenarios sum up as follows:
\begin{align}
	\overbrace{(n-mk)(n-mk-1)}^{\text{Fault scenario 1, Eq.~\ref{eq:fault-scenario-1}}}	+&  \nonumber\\
	\overbrace{2mk(n-mk)}^{\text{Fault scenario 2 \& 3, Eq.~\ref{eq:fault-scenario-2-3}}} +  &\nonumber\\
	\overbrace{mk^{2}(m-1)}^{\text{Fault scenario 4 \& 5, Eq.~\ref{eq:fault-scenario-4-5}}} + &  \nonumber\\
	\overbrace{mk(k-1)}^{\text{Fault scenario 6, Eq.~\ref{eq:fault-scenario-6}}} =& \text{ } n^{2}-n, 
\end{align}
which is the overall health status space of a decentralized system. 
\end{proof}
\section{Model Applicability}\label{sec:model-applicability}

This section illustrates the applicability of the fault scenarios for self-healing in decentralized in-network data aggregation. Fault detection, fault correction and fault tolerance are illustrated. 

\subsection{Fault detection via gossip-based communication}\label{subsec:fault-detection}

Gossip-based communication~\cite{Shah2009} is selected for fault detection given the following: (i) It is a communication protocol for large-scale and highly decentralized systems that falls within the scope of this paper. (ii) It is general-purpose and fundamental as it can be widely used for fast information dissemination, new information discovery, preserving network robustness by keeping the network connected, and other core operations required in decentralized systems~\cite{Jelasity2007,Shah2009,Snyder2012}. (iii) It finds real-world applicability in several systems such as peer-to-peer networks~\cite{Jelasity2007}, cloud computing~\cite{Marzolla2011,Lim2018}, Big Data systems~\cite{Cao2018}, distributed ledgers~\cite{He2019,Baird2016}, middleware systems~\cite{Preisler2015} etc. (iv) It is probabilistic in nature and as a result, fault detection based on gossiping communication comes with uncertainties within which dilemmas of fault correction vs. fault tolerance can be systematically studied. 

Gossip-based communication realizes health status monitoring as illustrated in Figure~\ref{fig:concept}c. Nodes execute a gossiping protocol such as the peer sampling service~\cite{Jelasity2007} that equips each node with a limited-size list of node descriptors, each containing the IP address, port number, timestamp and application information. This list is the partial view that nodes have of the system. It is periodically updated with new random node descriptors during peer-to-peer gossip exchanges with other random nodes selected from the partial view (the same list). 

The health status of the parent node is locally determined by the time period passed since the last time the descriptor\footnote{Descriptors of the parent node with a timestamp value later than the migration time are the ones counted. Earlier descriptors of the parent node may be present and circulated in the network. They are eventually replaced with the latest one during the gossip exchanges~\cite{Jelasity2007}.} of the parent node was present in the partial view of the node in which the self-healing agent resides. If the threshold $t$ is surpassed, the parent node is considered faulty and fault correction is initiated. Otherwise, fault tolerance is performed by awaiting further for the parent node descriptor to arrive by relying on underlying gossip-based communication.

An effective choice of the threshold $t$ depends on the system size and the internal configuration of the gossip-based communication protocol: (i) The size of the partial view. (ii) The execution period. (iii) The node and view selection policy that determine the level of randomization in the communication and exchange of node descriptors respectively. The threshold choice also depends on the external environment, e.g. latency, convergence speed of the communication model, bandwidth and load of the network~\cite{Lavinia2011}. Even if all these uncertainties that determine whether the parent node is truly faulty or not are controlled, the dilemma of the self-healing agent remains: is it fault tolerance or fault correction that results in lower inconsistency cost? Given that the inconsistency cost is context/application dependent, this paper introduces the computational case study of decentralized data aggregation within which inconsistency cost is assessed.

\subsection{Computational case study: decentralized data aggregation}\label{subsec:DIAS}

The computational problem of dynamic in-network data aggregation is studied~\cite{Fasolo2007}. More specifically, this paper studies how self-healing can improve the accuracy in decentralized computations of aggregation functions when nodes fail. The computational case study is the following: Each node in the network is a \emph{data supplier} and \emph{data consumer} (extreme performance benchmark). Data suppliers generate and share data (streams) with data consumers. Data consumers collect data (streams) from data suppliers and compute/update aggregation functions such as average, summation, count, maximum/minimum and other. When a data supplier disconnects from the network, data consumers need to update their aggregation function by performing a reverse computation, i.e. rollback, that removes the counted input data of the departing data supplier. 

Preserving accurate estimations of aggregation functions in this computational case study is challenging given that (i) data suppliers and consumers need to discover each other in a decentralized unstructured network, (ii) data suppliers can change the input data of the aggregation functions, (iii) data consumers may compute any aggregation function given the input data of data suppliers and (iv) reverse computations are required when data suppliers leave the network. In contrast to earlier decentralized aggregation methodologies such as gossiping~\cite{Jelasity2005,Pianini2016}, tree-based~\cite{Ding2003} or synopsis diffusion~\cite{Nath2008}, DIAS\footnote{Available at \url{http://dias-net.org} (last accessed: March 2021).}, the \emph{Dynamic Intelligent Aggregation Service}~\cite{Pournaras2017,Pournaras2017b,Pournaras2017c} is a decentralized gossip-based aggregation system designed to meet all these requirements\footnote{This is made possible by using an efficient and scalable distributed memory system based on probabilistic data structures, the Bloom filters~\cite{Bloom1970}. Based on Bloom filters, a data supplier can reason whether it has earlier communicated with a data consumer to share data and vice versa a data consumer can reason whether is has earlier communicated with a data supplier to aggregate data. Data suppliers and consumers can also reason about what data have been shared and aggregated, i.e. the most recent ones or outdated ones, so that aggregation inaccuracies are minimized, while unnecessary communication is limited. Further information about DIAS is out of the scope of this paper and can be found in earlier work~\cite{Pournaras2017,Pournaras2017b,Pournaras2017c} .} and therefore it is used to assess how well the inconsistency cost of the fault scenarios predicts the aggregation inaccuracies.

The inconsistency cost is measured by the average relative approximation error in the estimation of the aggregation functions among all data consumers in the network. In other words, the inconsistency cost measures how far the estimation of the aggregates is from the actual true values of the aggregates. Apparently, when nodes hosting data suppliers become faulty, reversed (rollback) computations are required by data consumers that have earlier aggregated data of these now faulty data suppliers. Without such computations, the estimations of the aggregates diverge from the actual ones generating inconsistency cost (false negative state in Figure~\ref{fig:outcomes}c). However, inconsistency cost may also result by reversed (rollback) computations because of an erroneous gossip-based fault detection, e.g. a very low threshold value $t$ that determines the node hosting the data supplier as faulty when actually it is not (false positive state in Figure~\ref{fig:outcomes}d). Therefore, the self-healing dilemma is highly applicable in this computational case study and the rest of this section introduces the functionality of the fault correction and fault tolerance in DIAS.

\subsection{Fault-corrective aggregation}\label{subsec:fault-correction}

This paper extends an earlier self-corrective aggregation mechanism~\cite{Pournaras2017c} for nodes joining and leaving the network into a fault-correction mechanism when nodes arbitrary fail. The rationale of self-correction when a node with a data supplier leaves the network is the following: A self-healing agent creates a replica of the data supplier with which it migrates to a remote random neighboring host node (see Figure~\ref{fig:concept}b) selected via the peer sampling service\footnote{Random selection of the migration host is performed for load-balancing. Without loss of generality, DIAS reuses the peer sampling service for the purpose of the migrations to limit the need for another such mechanism that comes with additional performance overhead, i.e. communication, processing and storage cost. Other methodologies for migration include random walks in the network or allocating dedicated nodes for redundancy~\cite{Oyediran2016}.} based on which DIAS operates. The migrated data supplier initiates corrective rollback operations with the data consumers in the network to update the aggregation functions. This process either completes or is interrupted when the self-healing agent detects\footnote{The returned parent node is detected when its descriptor appears in the partial view of the migrated node with a timestamp value later than the leave.} via the peer sampling service that the parent node has joined again the network. In the latter case, the migrated self-healing agent together with the migrated data supplier return back to the parent node to continue their operations as before. Migrations can be consecutive if the migrated host node leaves the network as well. More information about the protocol specification and evaluation results can be found in earlier work~\cite{Pournaras2017c}. 

The limitation of this mechanism is that self-corrective operations are initiated reactively by the parent node before leaving the network. This is not realistic in a scenario of arbitrary node failures that can terminate all local processes before self-corrective operations are initiated. This paper extends this model by proactively migrating each self-healing agent to a remote host, where it runs as a daemon monitoring the health status of the parent node as shown in Figure~\ref{fig:concept}c. Monitoring is performed by reusing the peer sampling service\footnote{Other mechanisms such as heartbeat messages~\cite{Hasan2018,Gyamfi2019} can be used.} according to the fault-detection mechanism introduced in Section~\ref{subsec:fault-detection} so that no other performance overhead is introduced.

\subsection{Fault-tolerant aggregation}\label{subsec:fault-tolerance}

The alternative to fault correction is fault tolerance that determines no corrective operations until the threshold $t$ is reached. Fault tolerance eliminates inconsistency costs originated by false positive states (see Figure~\ref{fig:outcomes}d). Moreover, fault tolerance is cost-effective when the faulty node can recover promptly, given the time required for corrective operations to complete. More specifically, fault tolerance eliminates inconsistency cost if it holds: 
\vspace{-0.1cm}
\begin{equation}
F_{B}+t+c>R_{B}+p
\end{equation}\label{eq:fault-tolerance}

\noindent where $F_{B}$ is the time when Node $B$ becomes faulty, $t$ is the fault-detection threshold and $c$ is the duration for the corrective operations to restore a required aggregation accuracy level. On the other side of the inequality, $R_{B}$ is the time when Node $B$ recovers\footnote{The scenario in which $F_{B}+t+c<R_{B}+p$ is more complex to determine whether fault tolerance or fault correction should be performed as it depends on the relation of $t$ and $p$, the data consumers with which corrective operations have been performed and their aggregated data.} and $p$ is the time required by Node $A$ to detect the recovery, i.e. propagation time of the Node $B$ descriptor by the peer sampling service. This inequality can be used to determine threshold values $t$ for each node given empirical models for $R_{B}-F_{B}$, which are though not the focus of this paper. Instead, different threshold values and their influence on inconsistency cost are studied. 

\section{Experimental Methodology}\label{sec:methodology}

This study has the following three objectives: (i) Profiling of the inconsistency cost generated by the modeled fault scenarios under varying fault scales, fault profiles and fault-detection thresholds. (ii) Validation of whether the inconsistency cost of the modeled fault scenarios is a good general predictor of the accuracy observed in the application scenario of decentralized aggregation of real-world power consumption data. (iii) Comparison of different model calibrators for the prediction of aggregation accuracy. Table~\ref{settings} outlines the experimental parameterization\footnote{The system parameterization of the peer sampling service and DIAS is chosen based on earlier experimental findings~\cite{Jelasity2007,Pournaras2017,Pournaras2017c,Pournaras2017b} and on the rationale of a cost-effective operation of the decentralized data aggregation under no faulty nodes. In this way, the effect of the faults on the data aggregation and how this effect can be controlled via a self-healing tuning (choice of threshold) can be isolated and studied systematically.}. All studied systems are implemented with an improved version~\cite{Fanitabasi2020} of the Protopeer prototyping toolkit~\cite{Galuba2009} for distributed systems.

\begin{table}[!htb]
\caption{An overview of the experimental parameterization.}\label{settings}
\centering
\resizebox{\columnwidth}{!}{%
\begin{tabular}{l l l l}
\toprule
System Parameter & Value & System Parameter & Value \\\midrule
ECBT data\footnotemark[10]~\cite{Pournaras2017} & Day 199 (January $\nth{4}$)  & DIAS execution period & $1 s$ \\
Num. of nodes & 3000 & Num. of aggregation sessions~\cite{Pournaras2017} & 4 \\
Num. of epochs & 3200 & Partial view size~\cite{Jelasity2007} & 50\\
Epoch duration & $250 ms$ & Swap parameter~\cite{Jelasity2007} & 24 \\
Fault scales & $10\%,20\%,...,80\%$ & Healer parameter~\cite{Jelasity2007} & 1\\
Fault profiles (Table~\ref{tab:profiles}) & \nth{1},  \nth{2},  \nth{3} & Fault detection threshold & $[100,800]$ with step 25\\
Num of epochs for bootstrapping &  400 &&\\
\bottomrule
\end{tabular}
}
\end{table}

The following scales of faulty nodes are studied: $\{10\%,20\%,...,80\%\}$ of the total number. Three fault profiles are introduced that come with 1, 2 and 4 batches of faulty nodes respectively: (i) \emph{\nth{1} profile}: All faulty nodes defect in one batch on half of system runtime that is on the $\nth{1600}$ epoch. (ii) \emph{\nth{2} profile}: Faulty nodes defect in two batches, with half of the faulty nodes defecting on the $\nth{1332}$ epoch and the other half on the $\nth{2264}$ epoch. (iii) \emph{\nth{3} profile}: Faulty nodes defect in four batches of equal size on the $\nth{1060}$, $\nth{1620}$, $\nth{2180}$ and $\nth{2740}$ epoch. Such parameters can accurately model failures observed in real-world systems, i.e. failure bursts correlated in time/space~\cite{Gallet2010,Kondo2010,Ghahremani2020}, while the evaluated parameter space with extreme fault scales stretches the experimental evaluations. Table~\ref{tab:profiles} summarizes the applicability of the three fault profiles to the modeled fault scenarios. Note in particular that nodes can defect in any of these batches except the case of the \nth{4} and \nth{5} fault scenarios that determine faulty nodes in different batches, while either Node $A$ or $B$ defects first. In the  \nth{2} profile, Node $A$ cannot defect at the \nth{2} batch if it defects first and respectively, Node $B$ cannot defect at the \nth{1} batch if it defects second. Similarly in the \nth{3} profile, the node that defects first cannot defect at the \nth{4} batch and the node that defects second cannot defect at the \nth{1} batch.

\begin{table}[!htb]
	\caption{Applicability of three fault profiles to each false positive (FP) and false negative (FN) state of the fault scenarios. The frequencies of fault-scenarios sum up to $n^{2}-n$ (Theorem~\ref{th:health-status-space}), where $n$ is number of nodes in the network, $k$ the number of faulty nodes at each batch of defected nodes out of a total of $m$ batches.}\label{tab:profiles}
	\centering
	\resizebox{\columnwidth}{!}{%
		\begin{tabular}{l l l l l l l l l}
			\toprule
			& \multirow{2}{*}[-0.3em]{Health} & \multirow{2}{*}[-0.3em]{Frequency} & \multirow{2}{*}[-0.3em]{State} & \multirow{2}{*}[-0.3em]{Node} & \multicolumn{3}{c}{Defect Batch IDs}\\\cmidrule{6-8}
			& & & & & \makecell[c]{\nth{1} Profile\\$m=1$} & \makecell[c]{\nth{2} Profile\\$m=2$} & \makecell[c]{\nth{3} Profile\\$m=4$} \\\midrule
			1. & \makecell[l]{$A$: ON\\ $B$: ON} & $(n-mk)(n-mk-1)$ & FP & None & \xmark & \xmark & \xmark \\\cmidrule{1-8}
			2. & \makecell[l]{$A$: ON\\$B$: OFF} & $mk(n-mk)$ & FP, FN & $B$ & \makecell[l]{1} & \makecell[l]{1, 2} & \makecell[l]{1, 2, 3, 4} \\\cmidrule{1-8}	
			3. & \makecell[l]{$A$: OFF\\$B$: ON} & $mk(n-mk)$ & FP & $A$ & \makecell[l]{1} & \makecell[l]{1, 2} & \makecell[l]{1, 2, 3, 4}\\\cmidrule{1-8}
			\multirow{2}{*}[-0.3em]{\makecell[l]{4.}} & \multirow{2}{*}[-0.3em]{\makecell[l]{$A$: OFF\\$B$: OFF}} & \multirow{2}{*}[-0.3em]{$\frac{1}{2}mk^{2}(m-1)$} & \multirow{2}{*}[-0.3em]{FP, FN} & $A$ & \xmark & \makecell[l]{1} & \makecell[l]{1, 2, 3}\\\cmidrule{5-8}& & & & $B$ & \xmark & \makecell[l]{2} & \makecell[l]{2, 3, 4} \\\cmidrule{1-8}
			\multirow{2}{*}[-0.3em]{\makecell[l]{5.}} & \multirow{2}{*}[-0.3em]{\makecell[l]{$B$: OFF\\$A$: OFF}} & \multirow{2}{*}[-0.3em]{$\frac{1}{2}mk^{2}(m-1)$} & \multirow{2}{*}[-0.3em]{FP, FN} & $B$ & \xmark & \makecell[l]{1} & \makecell[l]{1, 2, 3} \\\cmidrule{5-8} & & & & $A$ & \xmark & \makecell[l]{2} & \makecell[l]{2, 3, 4} \\\cmidrule{1-8}
			6. & $A$, $B$: OFF & $mk(k-1)$ & FP, FN & $A$, $B$ & \makecell[l]{1} & \makecell[l]{1, 2} & \makecell[l]{1, 2, 3, 4} \\
			\bottomrule		
		\end{tabular}
	}
\end{table}

To address the first objective, the fault profiles are applied to a decentralized network of 3000 nodes each running fault detection with the peer sampling service~\cite{Jelasity2007} as illustrated in Section~\ref{subsec:fault-detection} and with the respective parameters of Table~\ref{settings}. The threshold values of $\{100,125,150,...,775,800\}$ epochs are evaluated. By knowing which nodes defect at which time point during system runtime, all false positive and false negative states in the six possible fault scenarios of Table~\ref{table:scenarios} can be measured and analyzed. This analysis is performed exhaustively to profile the inconsistency cost across three dimensions: $8 \text{ \emph{fault scales}} \times 3 \text{ \emph{fault profiles}}\times 29 \text{ \emph{thresholds}}=696 \text{ \emph{experimental settings}}$.  

The finest-grain measurements of inconsistency cost are performed with size $3000^{2}-3000=8997000$ according to Theorem~\ref{th:health-status-space}: every node monitors the health status of every other node in the network. Given the fault scale ($k$) and fault profile ($m$), the health status of all node pairs is calculated according to the equations of Table~\ref{tab:profiles} (Theorem~\ref{th:health-status-space}) and these calculations result in the relative frequencies of Figure~\ref{fig:pairs-per-scenario-scale}. 

\begin{figure}[!htb]
\centering
\includegraphics[width=\columnwidth]{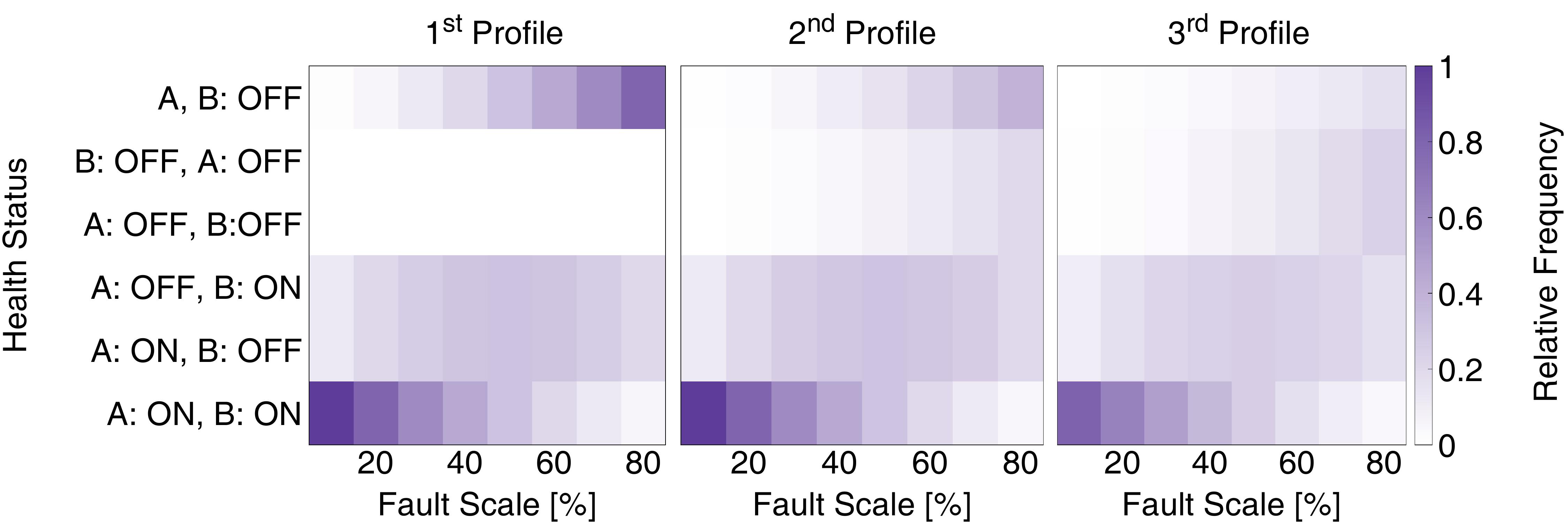}
\caption{Relative frequencies of the health status among all node pairs for different fault scales and fault profiles.}\label{fig:pairs-per-scenario-scale}
\end{figure}

To address the second objective, the modeled fault scenarios are evaluated by measuring how well they predict the inconsistency cost of a self-healing computational/application scenario with faulty nodes. This scenario is the decentralized aggregation of DIAS in which self-healing is performed in terms of fault correction (executing self-corrective operations, see Section~\ref{subsec:fault-correction}) and fault tolerance (postponing self-corrective operations, see Section~\ref{subsec:fault-tolerance}). The prediction of the inconsistency cost is the prediction of the aggregation accuracy measured by the \emph{average relative approximation error} between the \emph{estimated} aggregate values and the \emph{actual} aggregates. In other words, this paper assesses for all 696 experimental settings how good predictor the total inconsistency cost (Equation \ref{eq:inconsistency-cost}) is of the average relative approximation error of DIAS measured over all nodes and throughout system runtime. 

The experiments focus on the summation (total power load) of real-world power consumption data\footnote{Avalable at http://www.ucd.ie/issda/data/commissionforenergyregulationcer/ (last access: March 2021)} from ECBT, the \emph{Electricity Customer Behavior Trial} during 2009-2010 in Ireland. They are collected from smart meters with a frequency of 30 minutes. The power records of the \nth{199} day (4.1.2009) are used for the experiments that are $2 \text{ records/hour} \times 24 \text{ hours}=48 \text{ records}$ uniformly distributed over the system runtime of 2800 epochs, plus 400 epochs for system bootstrapping. Out of the total of 6435 residential and small-medium enterprise consumers in the dataset, 3000 residential consumers are mapped to the 3000 nodes of the decentralized network. Each operates as both data supplier and consumer to evaluate the most demanding computational scenario in which every node shares and aggregates power consumption data. 

Predicting the DIAS accuracy is highly challenging given that the modeled fault scenarios are totally agnostic of the applied (i) computational problem, i.e. aggregation, (ii) algorithm, i.e. DIAS and (iii) data, i.e. power consumption. As such, it is assumed that all fault scenarios $l$ can generate during runtime a total (maximum) inconsistency cost of $\epsilon_{s}^{-}=\epsilon_{s}^{+}=1$. To improve prediction, three model calibration methods are applied that rely on the profiling of inconsistency cost calculated for the first objective of this study. All three calibration methods use application-independent features. One of these methods is totally agnostic of any information about the aggregation problem or the DIAS algorithm, while the other two use DIAS performance target data for fitting a model. Therefore, significant comparisons can be made between a non-calibrated prediction vs. calibrated predictions as well as the application-agnostic calibrations vs. the ones that fit a model to the data. 


\subsubsection{False negative calibration}\label{subsec:calibration-corrective-weighted-average}

The fault scenarios of Table~\ref{table:scenarios} with a false negative state given by $\frac{T-F_{A}}{T-F_{A}}=1$ assume that the fault of Node $A$ generates inconsistency cost throughout the time period $T-F_{A}$ as the fault of Node $B$ cannot be anymore detected (and corrected) during this period. However, recovery may occur earlier, which means in practice that the inconsistency cost may be eliminated within a short period of time, for instance, self-corrective operations in DIAS converge in a finite time period~\cite{Pournaras2017c}. Therefore, this calibration method introduces the calibration factor $\lambda \in [0,1]$ as an additional coefficient for these fault scenarios with false negative state. For $\lambda=1$, no calibration is performed. For each fault scale, the $\lambda$ value with the lowest root mean square error between the predicted and the DIAS inconsistency cost is selected for the comparison with the other calibration methods as shown in Figure~\ref{fig:parameter-configuration}a. 

\begin{figure}[!htb]
\centering
\subfigure[]{\includegraphics[width=1.0\columnwidth]{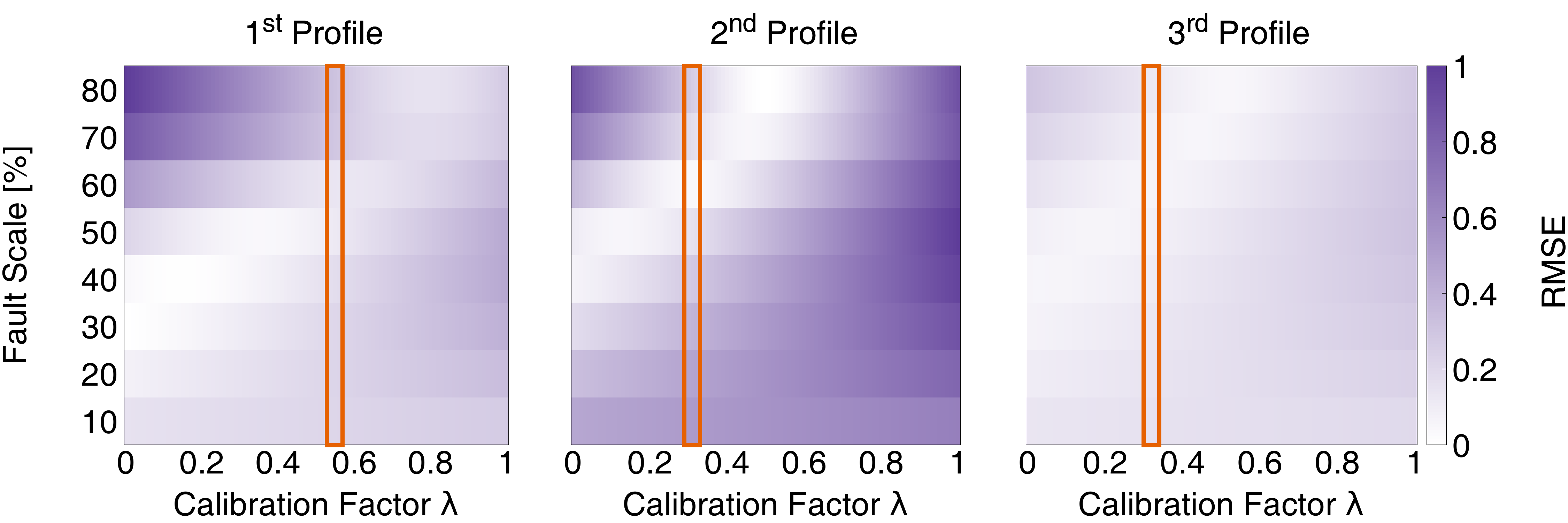}}	
\subfigure[]{\includegraphics[width=0.8\columnwidth]{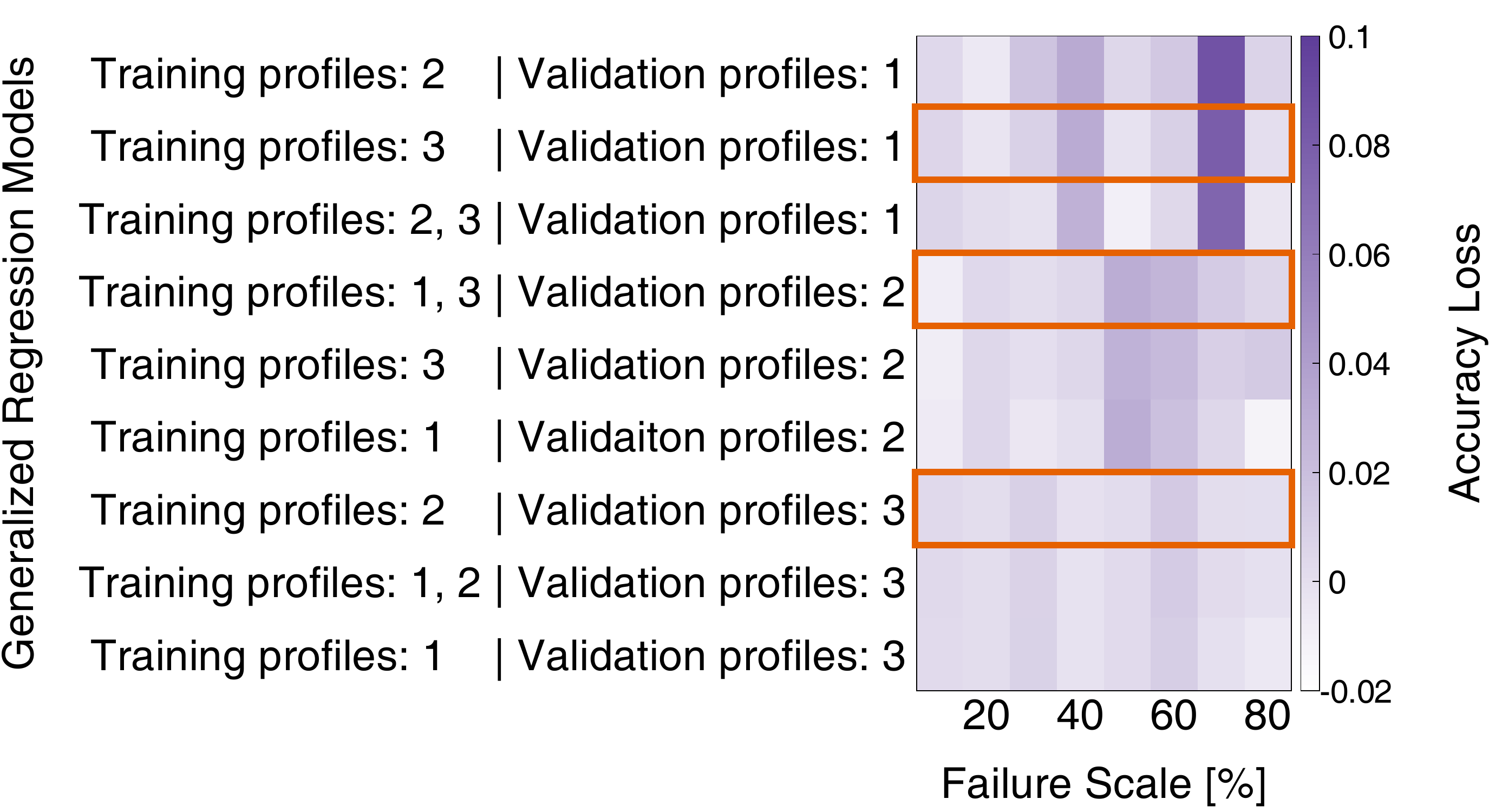}}	
\caption{Calibration configurations and their prediction performance for two calibration methods: (a) False negative calibration. (b) Generalized regression. The best calibration configurations are marked for further comparison of the different methods.}\label{fig:parameter-configuration}
\end{figure}

The other two model calibration methods are designed as follows: For each experimental setting, a feature vector of size $12 \times 5+2 = 62$ is constructed. This vector contains 5 quantiles (\nth{10}, \nth{30}, \nth{50}, \nth{70}, \nth{90}) of inconsistency cost for each of the $12$ calculations of the fault scenarios (Table~\ref{table:scenarios}). These values are extracted from the fault profiles applied to the peer sampling service~\cite{Jelasity2007}. The feature vector also contains the respective relative (to the maximum of 800) threshold and the fault scale for each experimental setting. All values of the feature vector are in the range $[0,1]$. Application-level data, i.e. the DIAS inconsistency cost, are used as target values for training, while features are agnostic of DIAS. Regression relies on the ordinary least squares model and its Python implementation of the statsmodels\footnote{Available at https://www.statsmodels.org/stable/index.html (last access: March 2021).}. The prediction based on linear regression is validated with two schemes: 

\subsubsection{Regression}\label{subsec:calibration-regression}

This second scheme uses all 696 experimental settings to train the linear regression model without regularization. It represents the best possible fit (intentional overfit) to the inconsistency costs observed in DIAS. 

\subsubsection{Generalized regression}\label{subsec:calibration-generalized-regression}

In this third scheme, training is limited to certain fault profiles and validation is performed on profiles on which training is not performed, assuming that the inconsistency cost for different fault profiles is generated from the same distribution. Figure~\ref{fig:parameter-configuration}b illustrates the prediction performance of all possible combinations of training and validation fault profiles for a generalized regression, measured with the accuracy loss: 

\vspace{-0.1cm}
\begin{equation}
RMSE(C_{\mathsf{GR}},C_{\mathsf{D}})-RMSE(C_{\mathsf{GR}},C_{\mathsf{R}}),
\end{equation}\label{eq:accuracy-loss}
\vspace{-0.3cm}

\noindent where $RMSE$ is the root mean square error, $C_{\mathsf{R}},C_{\mathsf{GR}}$ are the inconsistency cost of regression and generalized regression respectively and $C_{\mathsf{D}}$ is the predicted inconsistency cost of DIAS, i.e. the average relative approximation error of the summation. For each of the three validation fault profiles, the best fits observed in Figure~\ref{fig:parameter-configuration}b are selected to compare generalized regression to the other predictors. Generalized regression is performed with regularization\footnote{Elastic net is used with strength parameter of $\alpha=0.07$ and $L_{1}=0.05$ representing the preference of LASSO regularization over the RIDGE one.}.

\section{Experimental Evaluation}\label{sec:results}

This section illustrates the profiling of the inconsistency cost and how it can be used to improve the effectiveness of self-healing in decentralized data aggregation. It also shows a comparison of the calibration methods.

\subsection{Profiling of inconsistency cost}\label{subsec:profiling}

For the first evaluation objective, the inconsistency cost is profiled as follows: The density of the inconsistency cost and the relative frequency of each fault scenario are measured under varying fault scales, fault profiles and fault-detection thresholds. Due to space limitations, Figures~\ref{fig:cost-profiling-first-profile}-\ref{fig:cost-profiling-third-profile} focus on the fault scales of 20\%, 50\% and 80\% that depict the overall trend. 


\begin{figure*}[!htb]
\centering
\subfigure[Fault scale: 20\%]{\includegraphics[width=1.25\columnwidth]{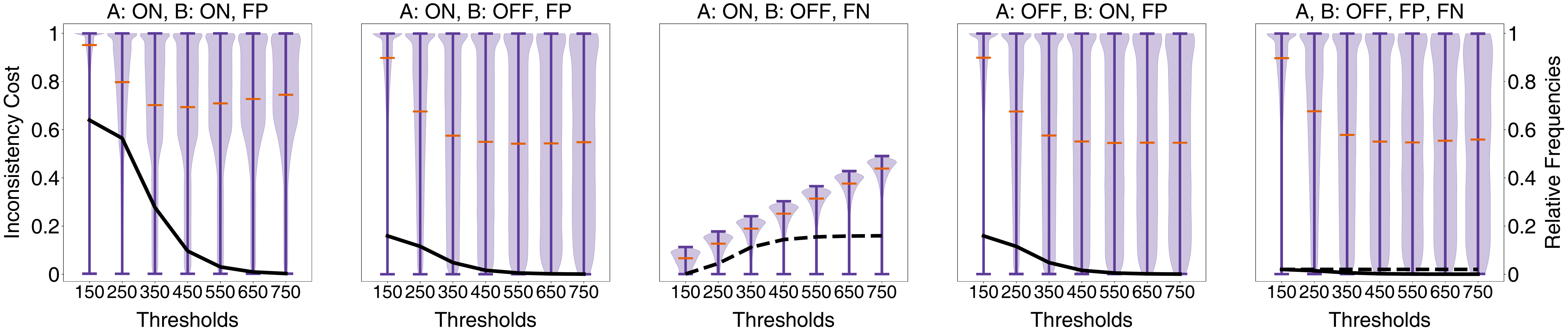}	}
\subfigure[Fault scale: 50\%]{\includegraphics[width=1.25\columnwidth]{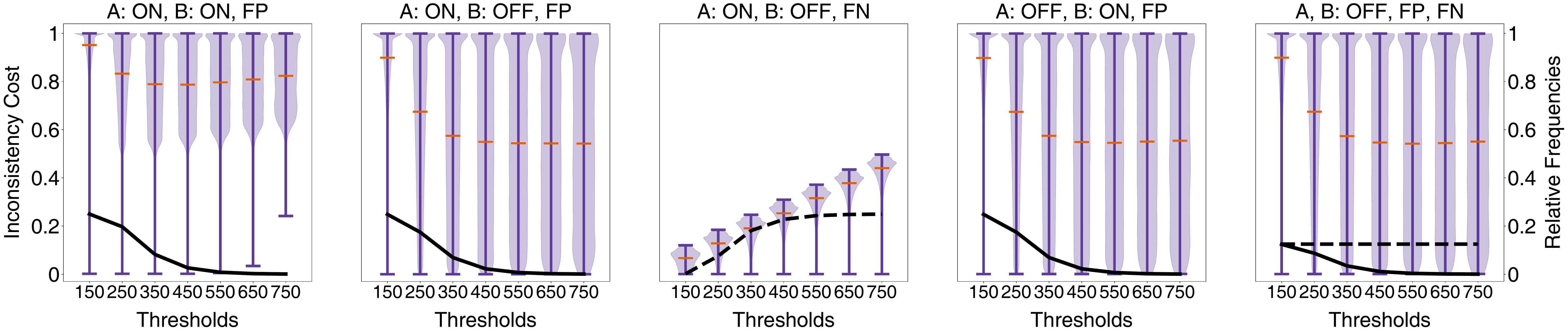}	}
\subfigure[Fault scale: 80\%]{\includegraphics[width=1.25\columnwidth]{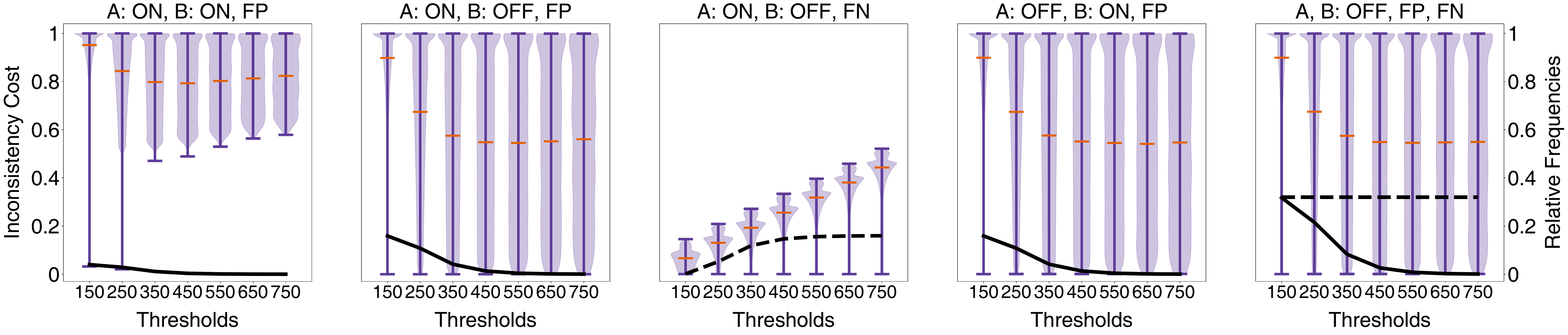}}
\caption{The inconsistency cost  of the fault scenarios (violins with density values on the left Y-axis) and their relative frequency (lines with values on the right Y-axis) under a fault scale of 20\%, 50\% and 80\% in the \nth{1} fault profile. The solid lines depict the relative frequency for false positive (FP) states, while the dashed lines the one for false negative (FN) states. }\label{fig:cost-profiling-first-profile}
\end{figure*}

\begin{figure*}[!htb]
\centering
\subfigure[Fault scale: 20\%]{\includegraphics[width=2.0\columnwidth]{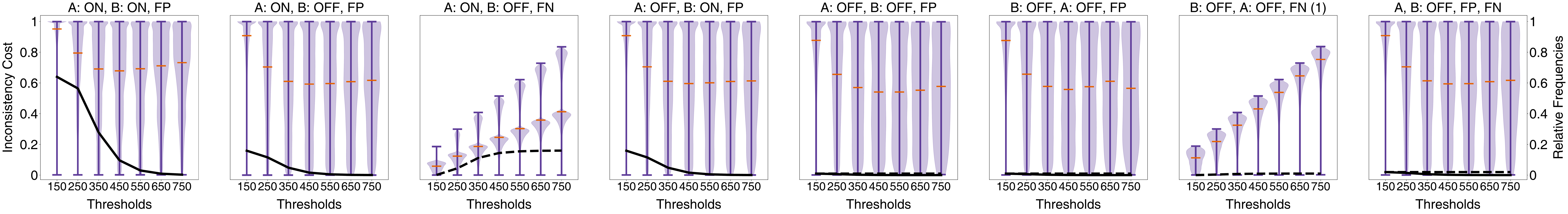}	}
\subfigure[Fault scale: 50\%]{\includegraphics[width=2.0\columnwidth]{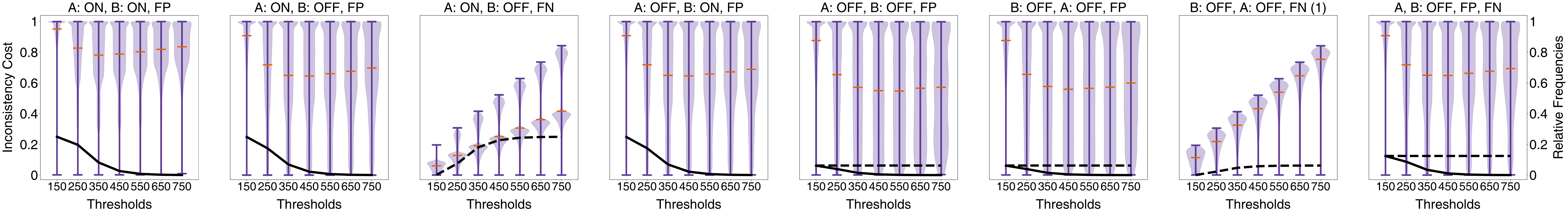}	}
\subfigure[Fault scale: 80\%]{\includegraphics[width=2.0\columnwidth]{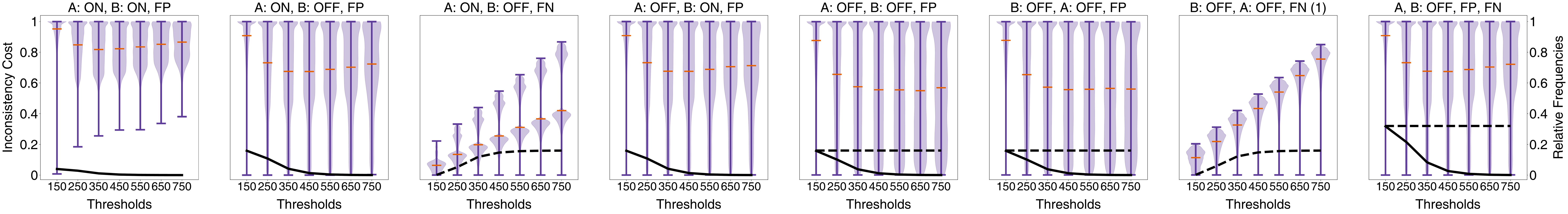}}
\caption{The inconsistency cost  of the fault scenarios (violins with density values on the left Y-axis) and their relative frequency (lines with values on the right Y-axis) under a fault scale of 20\%, 50\% and 80\% in the \nth{2} fault profile. The solid lines depict the relative frequency for false positive (FP) states, while the dashed lines the one for false negative (FN) states.}\label{fig:cost-profiling-second-profile}
\end{figure*}

\begin{figure*}[!htb]
\centering
\subfigure[Fault scale: 20\%]{\includegraphics[width=2.0\columnwidth]{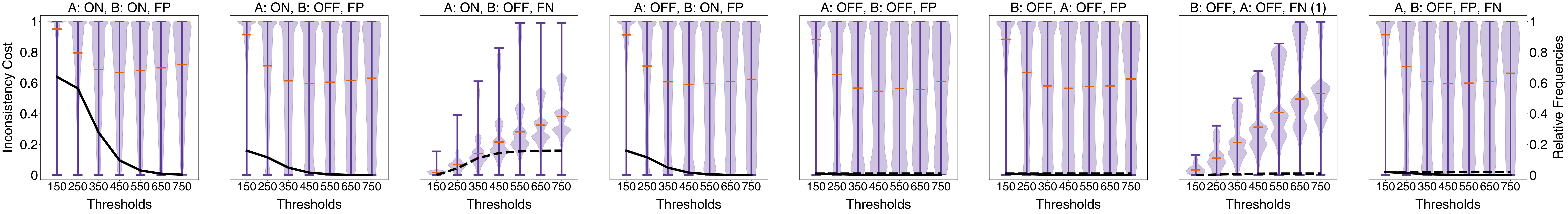}	}
\subfigure[Fault scale: 50\%]{\includegraphics[width=2.0\columnwidth]{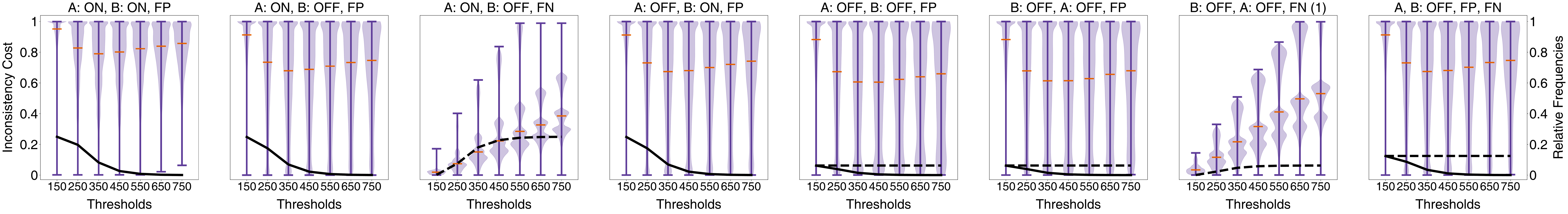}	}
\subfigure[Fault scale: 80\%]{\includegraphics[width=2.0\columnwidth]{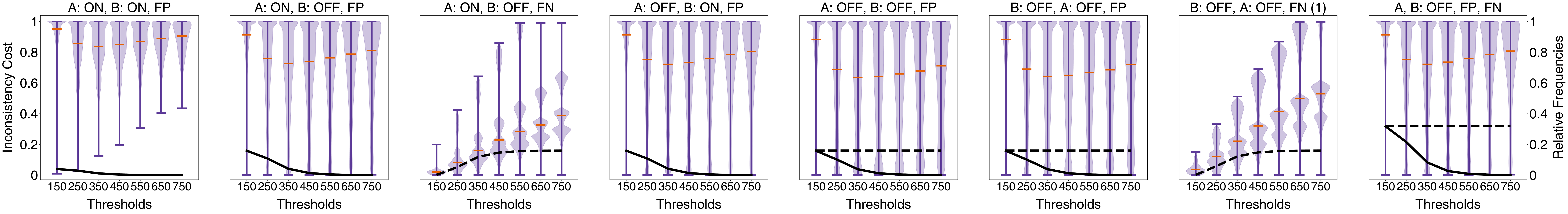}}
\caption{The inconsistency cost  of the fault scenarios (violins with density values on the left Y-axis) and their relative frequency (lines with values on the right Y-axis) under a fault scale of 20\%, 50\% and 80\% in the \nth{3} fault profile. The solid lines depict the relative frequency for false positive (FP) states, while the dashed lines the one for false negative (FP) states.}\label{fig:cost-profiling-third-profile}
\end{figure*}

The following observations can be made in Figures~\ref{fig:cost-profiling-first-profile}-\ref{fig:cost-profiling-third-profile}: (i) The inconsistency cost by false positive states has on average higher magnitude than the one of false negative states across the different fault profiles and  scales. (ii) With an increasing fault scale, the inconsistency cost slightly increases, especially for the fault-scenario $A$: ON, $B$: ON, false positive.  However, the relative frequency of the inconsistency cost for this fault scenario decreases, in exchange for an increase in fault scenarios $A$: OFF, B: OFF, false positive, $B$: OFF, $A$: OFF, false positive, $B$: OFF, $A$: OFF, false negative and $A$, $B$: OFF, false positive and negative. (iii) For each fault profile when nodes do not fail, the magnitude of the inconsistency cost by false positives is respectively 22.37\%, 17.2\% and 16.07\% higher on average than the one with defecting nodes. For fault scales of 20\%, 50\% and 80\%, this difference is 13.0\%, 20.99\% and 22.63\% higher on average when nodes do not fail. (iv) The inconsistency cost by false positives is minimized for middle threshold values, i.e. 450 epochs for 20\% fault scale, 350 epochs for 50\% and 80\% fault scale. There thresholds though depend on the system size and parameters with which the peer sampling service is chosen to operate, i.e. partial view size, execution period, swap/healer parameters~\cite{Jelasity2007,Pournaras2017}. In other words, different configurations of the underlying system yield to a different profiling of the inconsistency cost. (v) In the second profile, the density of the inconsistency cost for the fault scenario of $A$: ON, $B$: OFF, false negative, has two peaks that originate from the two different times in which the nodes defect (respectively three peaks at the 3rd profile). Larger thresholds shift the peaks to larger inconsistency costs ($d-F_{B}$ is maximized) and increase the distance between the peaks as also confirmed for the fault scenario $B$: OFF, $A$: OFF, false negative. (vi) The relative frequency of fault scenarios with a false positive state decreases for higher thresholds, while it increases or remains constant for a false negative state. All these observations confirm that the profiling of the inconsistency cost generated by the fault scenarios can provide a highly insightful analysis of the trade-offs involved in tuning fault-detection mechanisms in decentralized systems with uncertainties.

\subsection{Self-healing decentralized data aggregation}\label{subsec:profiling}

More cost-effective self-healing mechanisms can be designed, tailored to minimize the predicted inconsistency cost of specific fault scenarios. Note for instance Figure~\ref{fig:aggregation} that illustrates the applicability of self-healing in DIAS in the three fault profiles and the fault scales of 20\%, 50\% and 80\%. The actual aggregate of summation is compared to the faulty estimate (no corrective operations) and two corrective estimates (without any calibration): (i) A reference with a fixed threshold at 100 epochs. (ii) The one with the threshold that minimizes the inconsistency cost. Therefore, the profiling of the inconsistency cost provides the required tuning to fault detection to minimize the relative approximation error of the aggregation. The root mean square error between the actual sum and the faulty estimate (no self-corrective operations) is on average 28.17\% higher than the DIAS estimate with the threshold resulting in minimal inconsistency cost. Across fault profiles, the corresponding errors are 4.29\%, 34.83\% and 34.72\% higher for the fault scales of 20\%, 50\% and 80\%, respectively. The DIAS corrective estimate with reference threshold $t=100$ performs worse than the faulty estimate across all fault profiles and scales\footnote{On average, the root mean square error between the actual sum and the faulty estimate is 113.07\% lower than the DIAS estimate with a reference threshold of $t=100$}, demonstrating the implication of an erroneous fault correction and, apparently, how dramatic can a misconfiguration of fault detection be for a decentralized application.

\begin{figure}[!htb]
\centering
\subfigure[\nth{1} fault profile, 20\% fault scale]{\includegraphics[width=0.31\columnwidth]{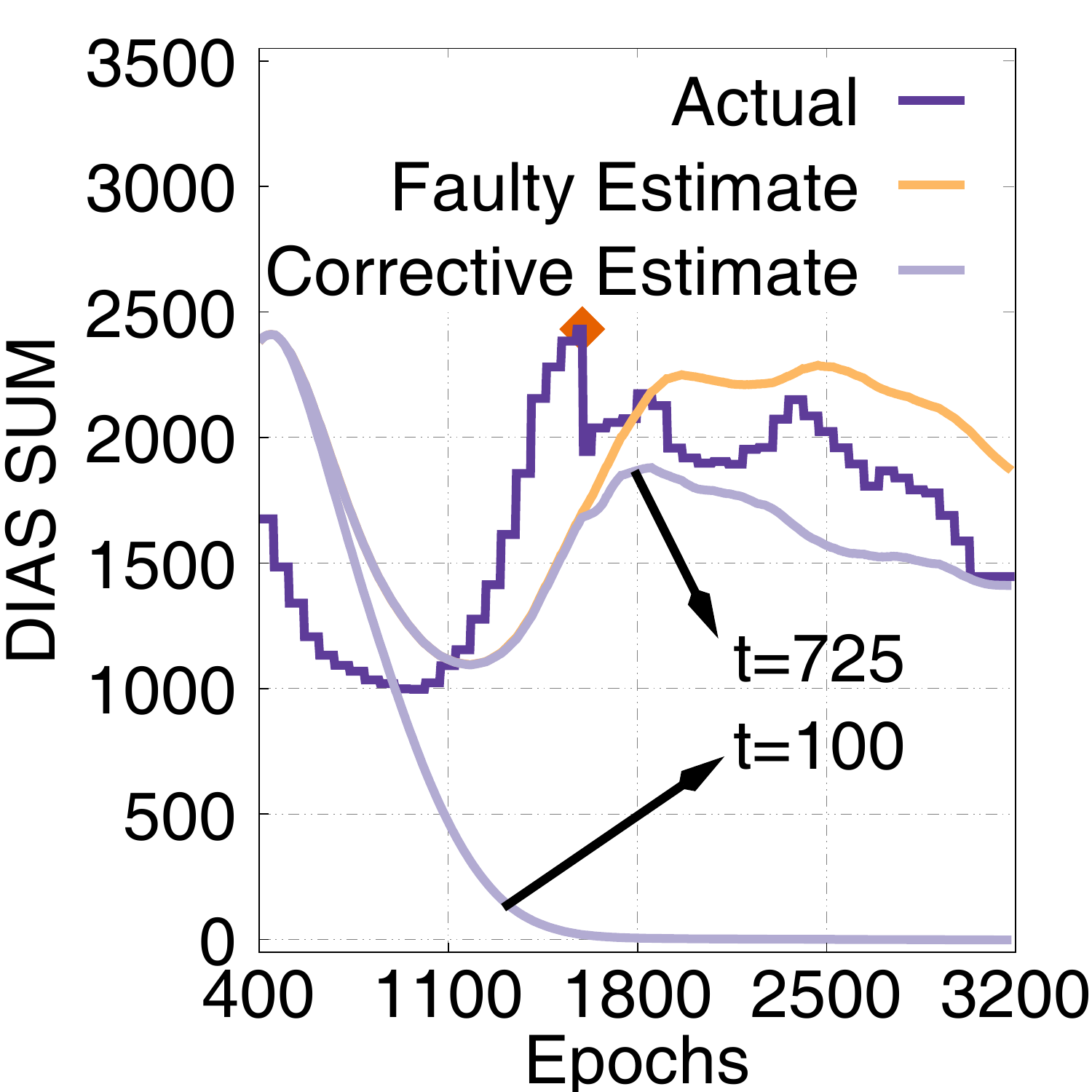}}
\subfigure[\nth{1} fault profile, 50\% fault scale]{\includegraphics[width=0.31\columnwidth]{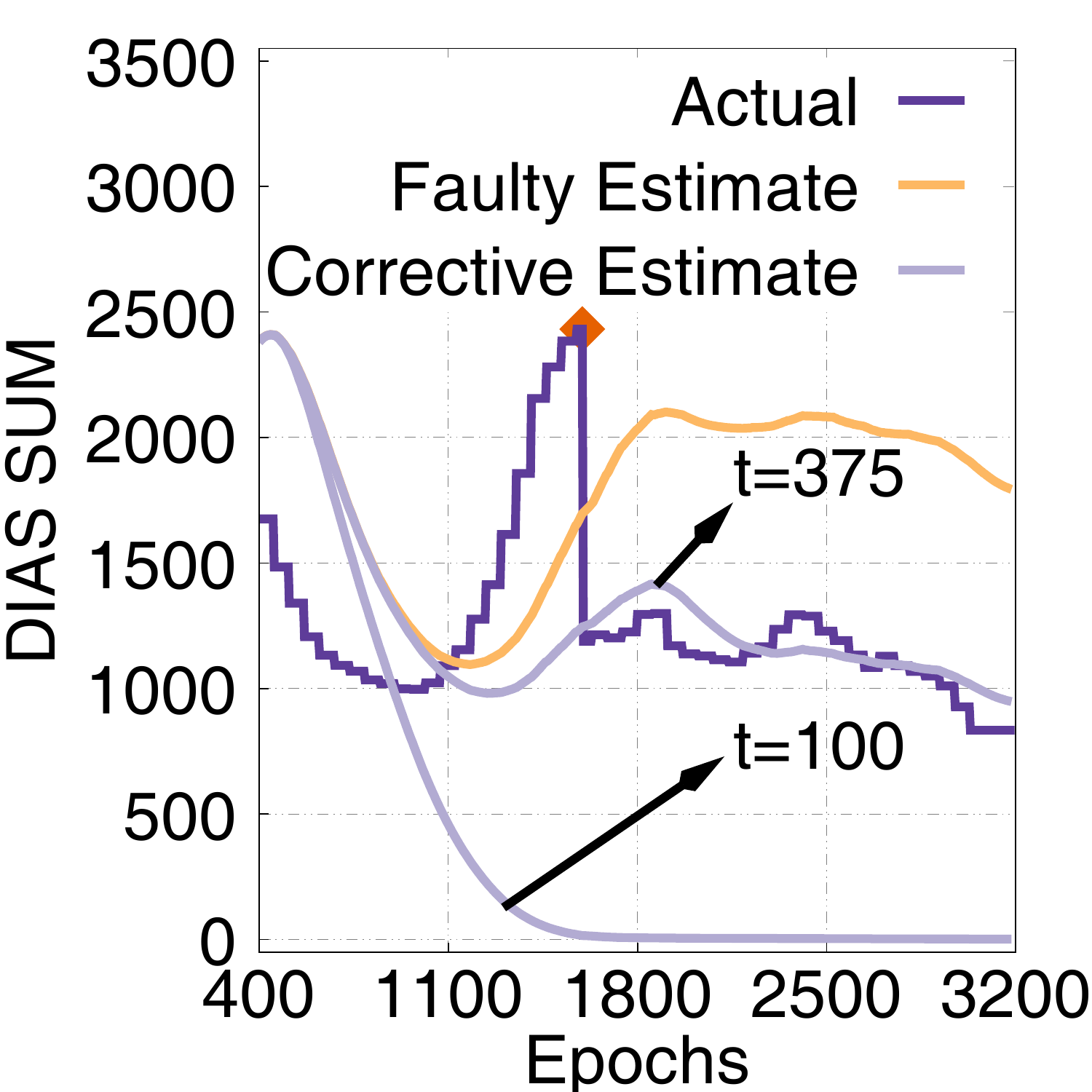}}
\subfigure[\nth{1} fault profile, 80\% fault scale]{\includegraphics[width=0.31\columnwidth]{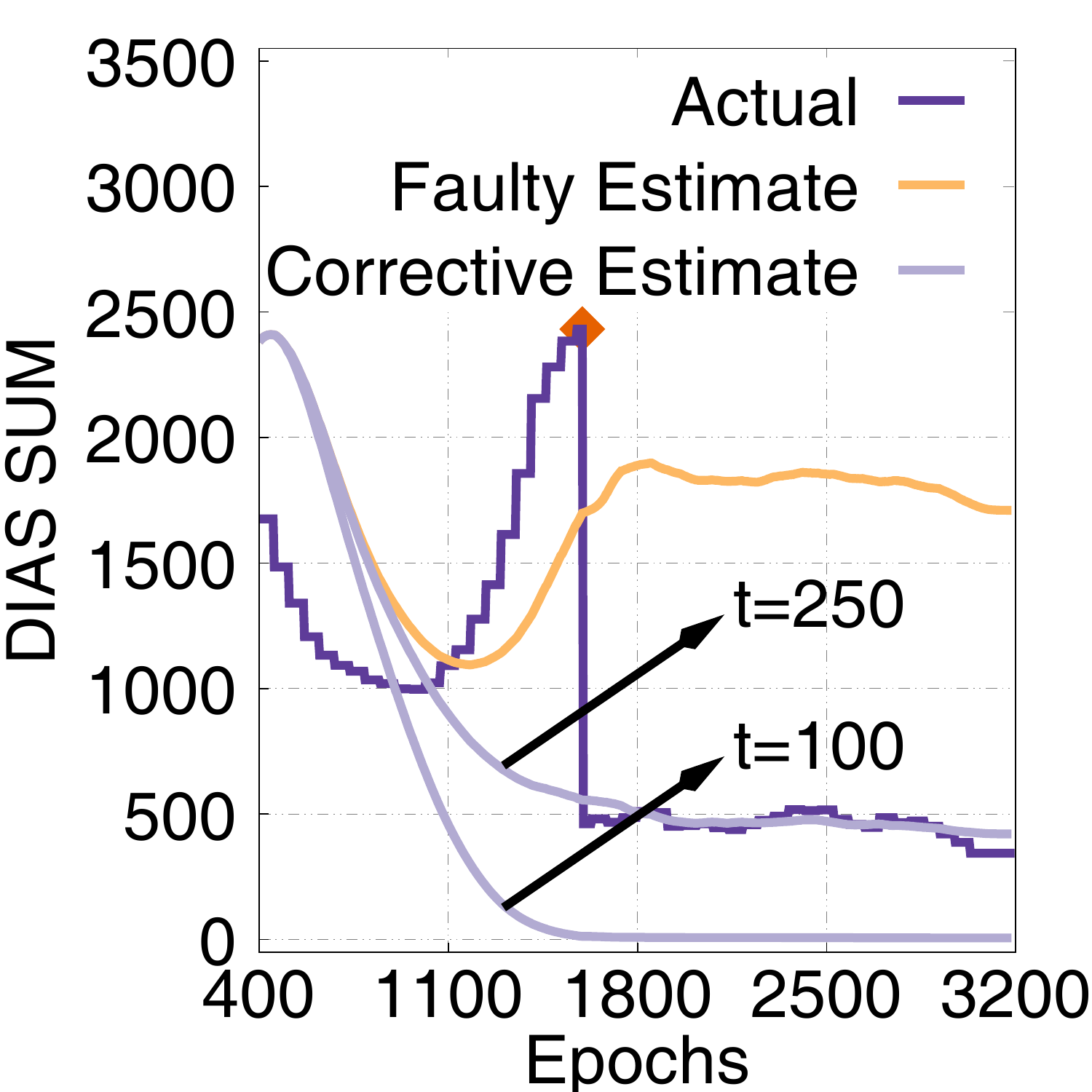}} 
\subfigure[\nth{2} fault profile, 20\% fault scale]{\includegraphics[width=0.31\columnwidth]{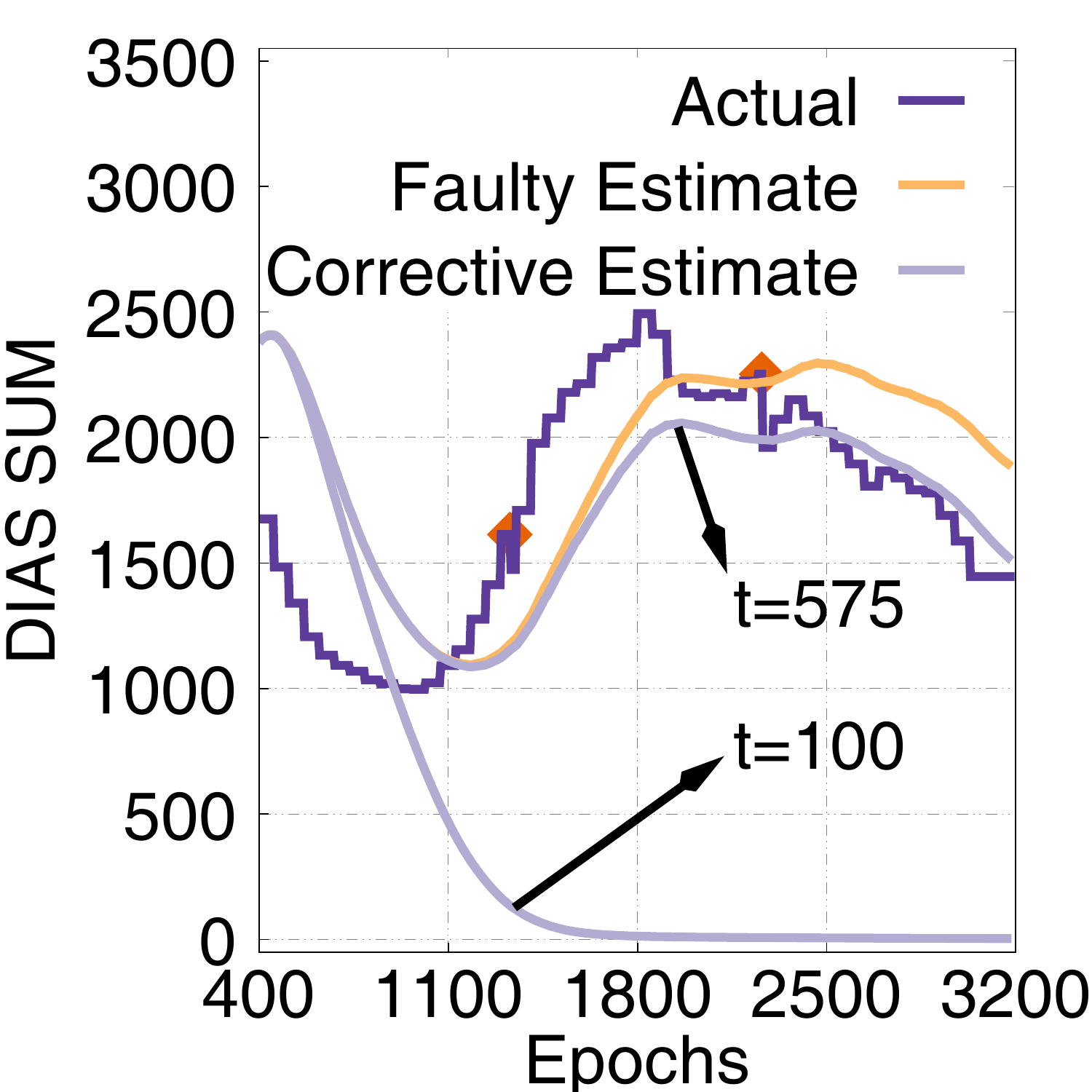}}
\subfigure[\nth{2} fault profile, 50\% fault scale]{\includegraphics[width=0.31\columnwidth]{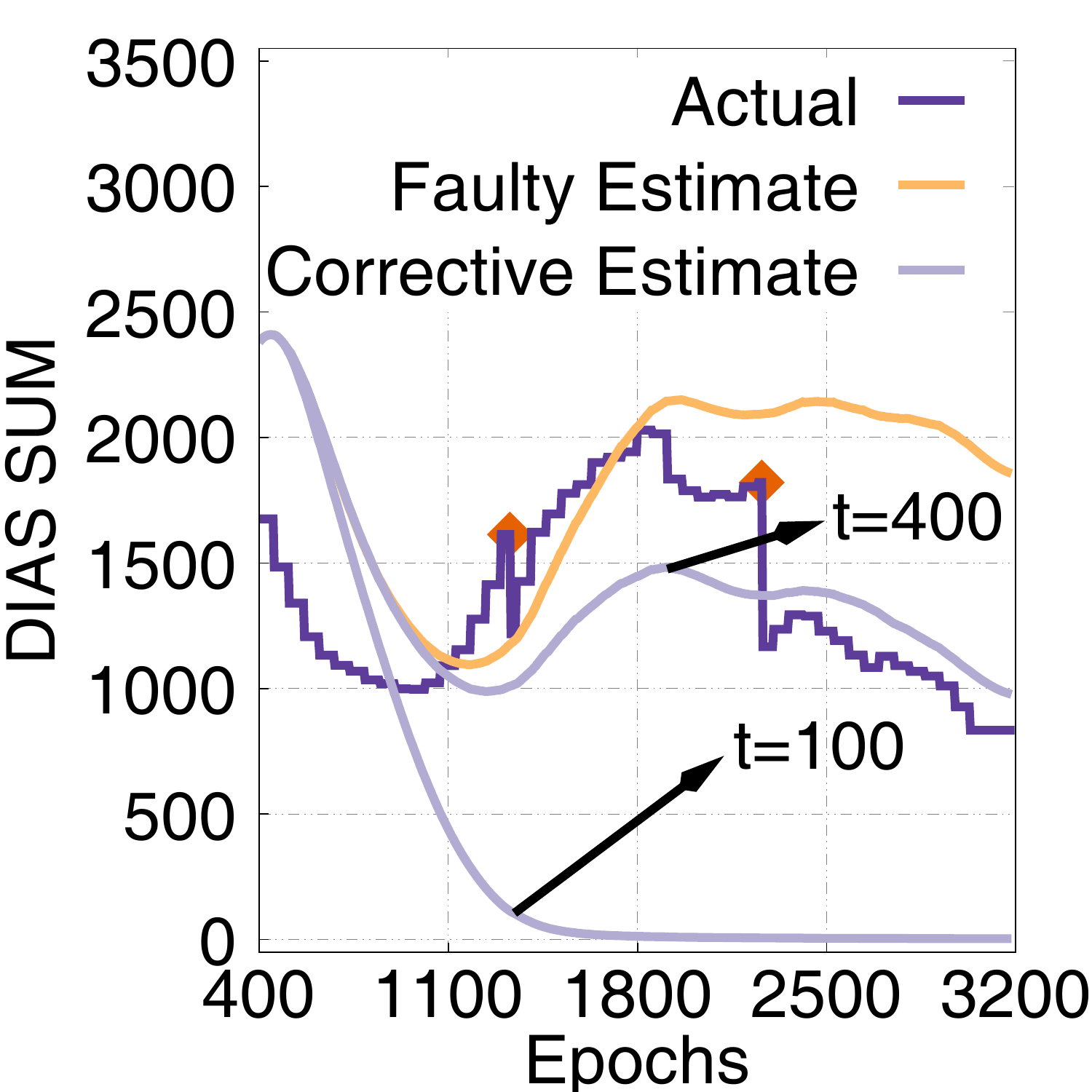}}
\subfigure[\nth{2} fault profile, 80\% fault scale]{\includegraphics[width=0.31\columnwidth]{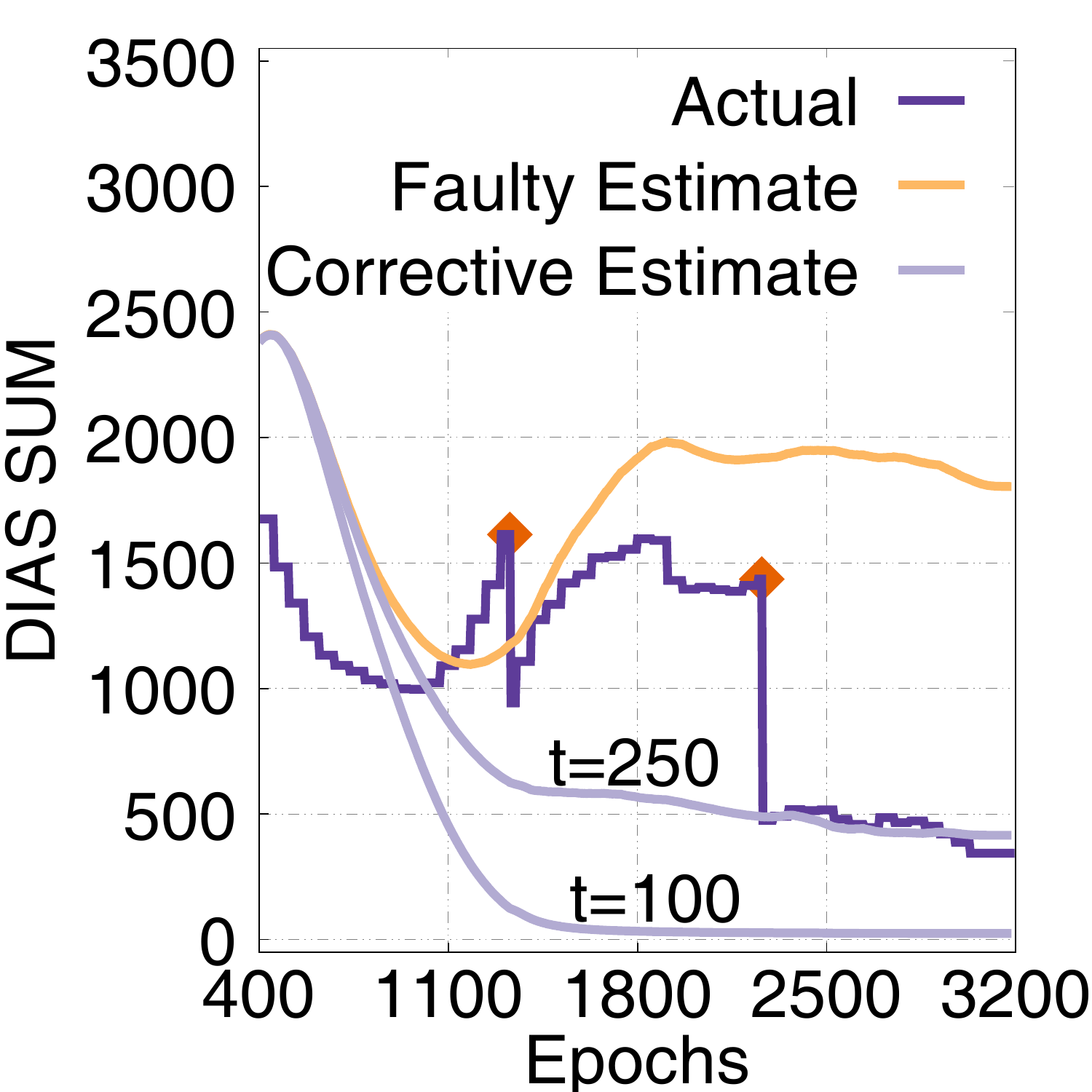}} 
\subfigure[\nth{3} fault profile, 20\% fault scale]{\includegraphics[width=0.31\columnwidth]{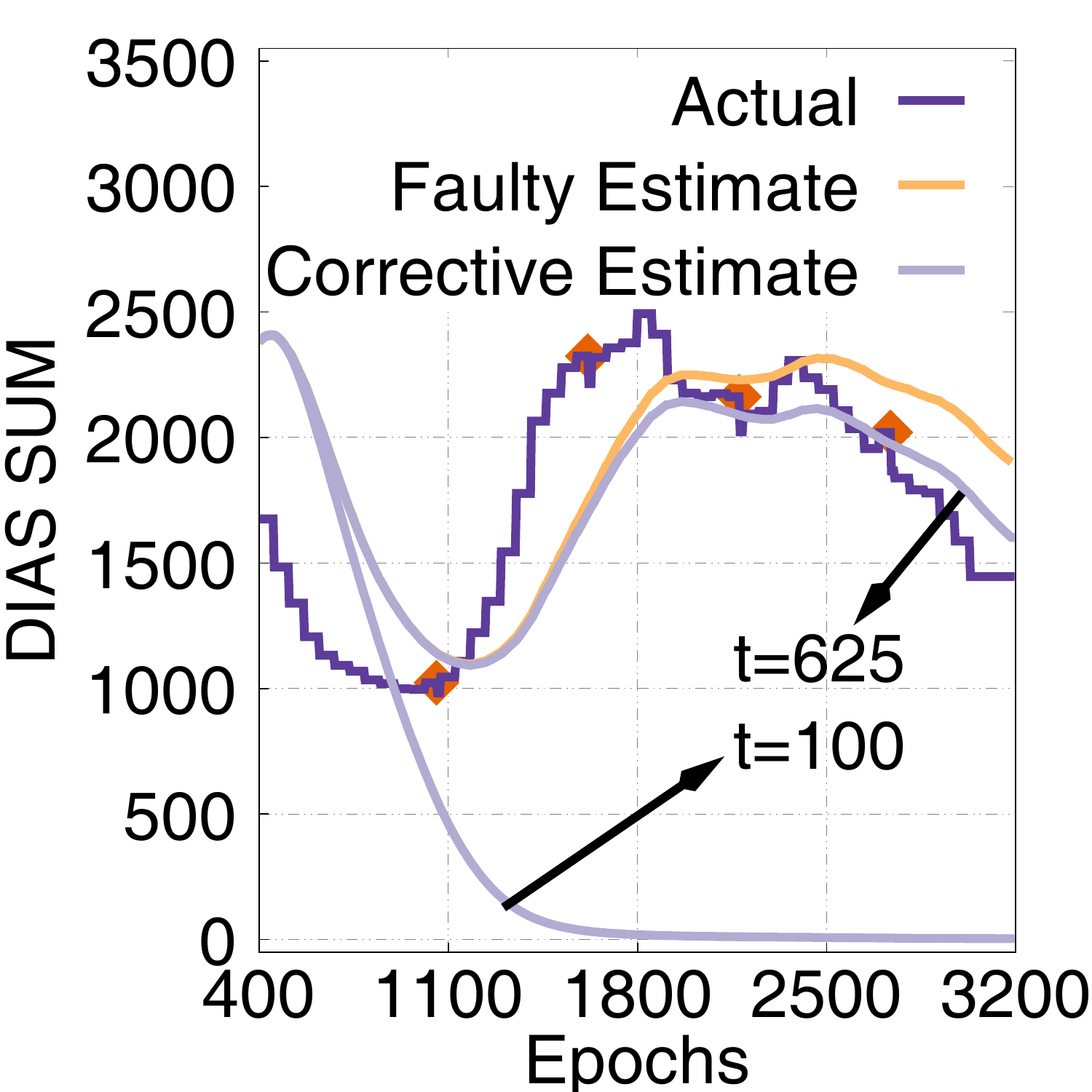}}
\subfigure[\nth{3} fault profile, 50\% fault scale]{\includegraphics[width=0.31\columnwidth]{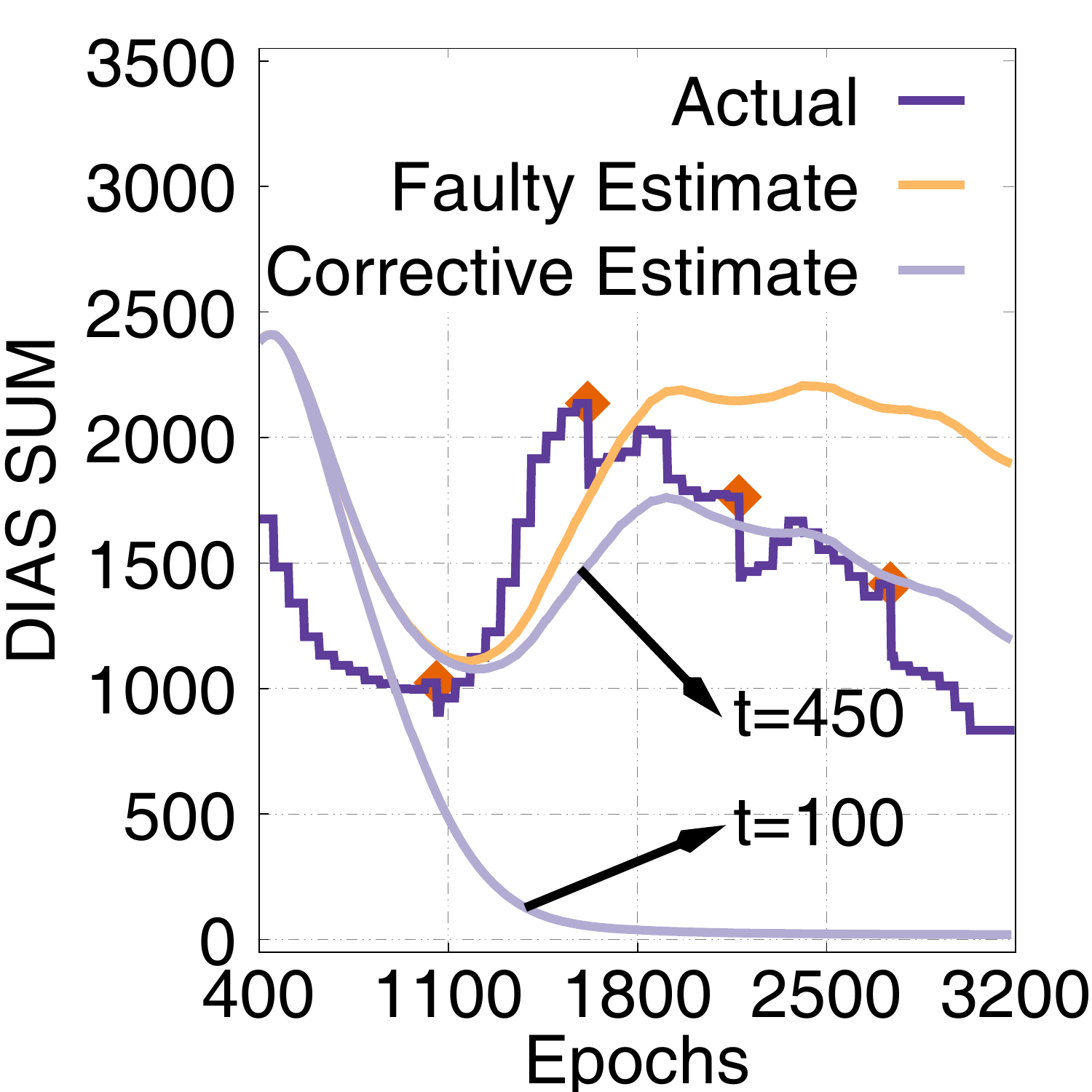}}
\subfigure[\nth{3} fault profile, 80\% fault scale]{\includegraphics[width=0.31\columnwidth]{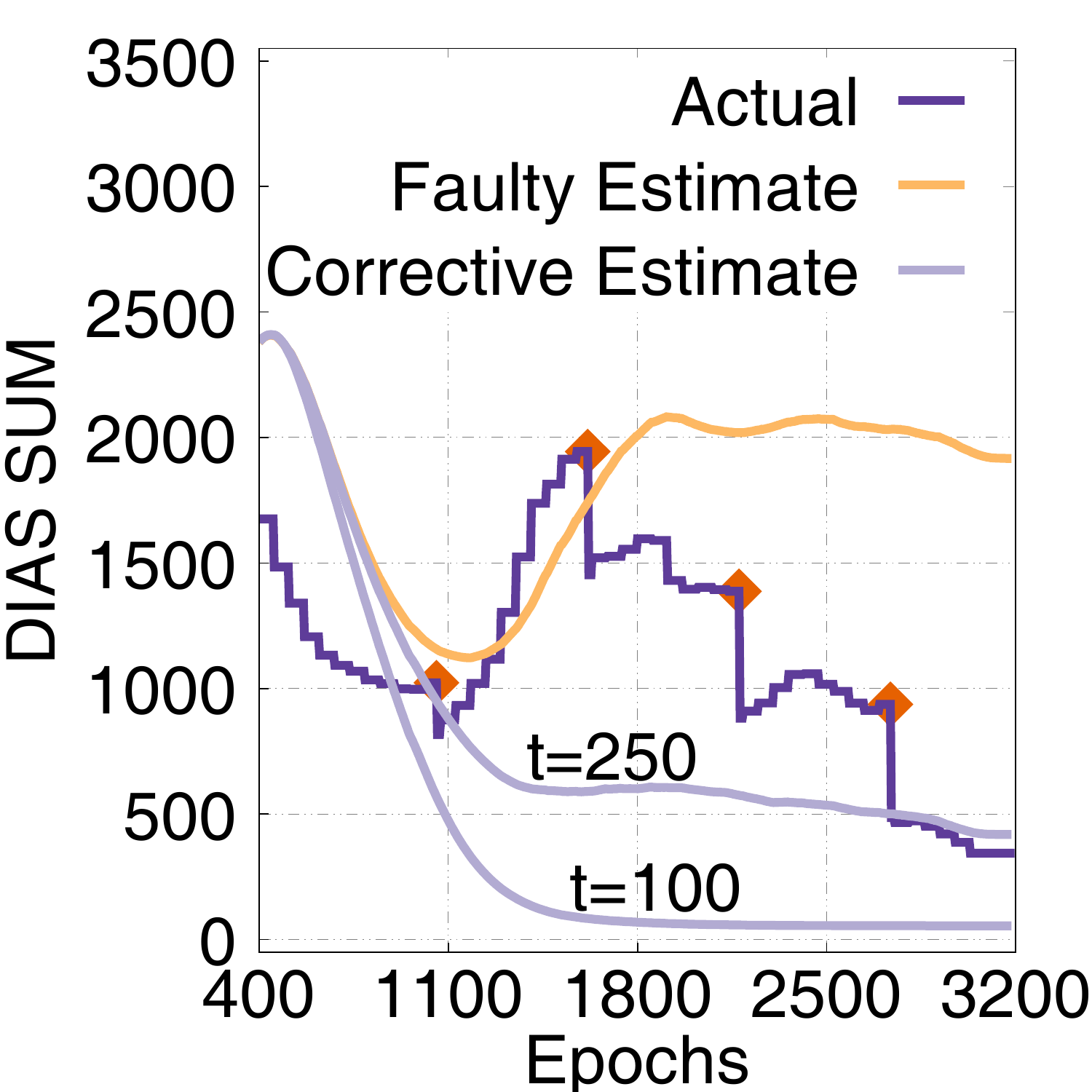}} 
\caption{DIAS self-healing under 20\%, 50\% and 80\% fault scales in the three fault profiles. In the corrective estimates, the threshold of $t=100$ is shown as reference vs. the threshold that minimizes the inconsistency cost.}\label{fig:aggregation}
\end{figure}

\subsection{Evaluation of model calibration methods}\label{subsec:profiling}

Figure~\ref{fig:predictions} addresses the second and third objective of the experimental evaluation that is the predictive performance of the inconsistency cost by the modeled fault scenarios. The average relative approximation error of the DIAS sum estimations is compared to the calibrated predictions made by the modeled fault scenarios under different thresholds in the three fault profiles. 

\begin{figure}[!htb]
\centering
\subfigure[\nth{1} fault profile, 20\% fault scale]{\includegraphics[width=0.31\columnwidth]{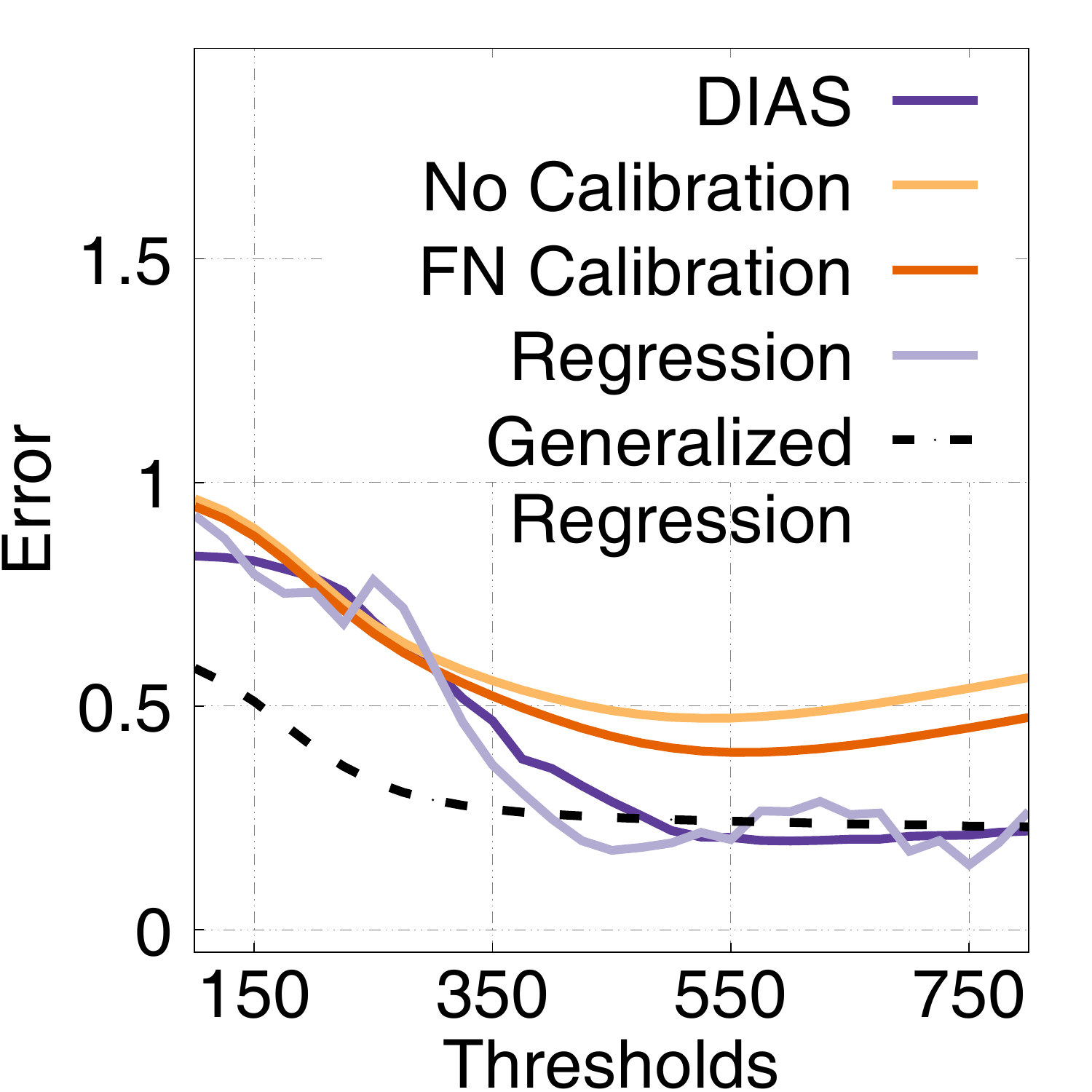}}
\subfigure[\nth{1} fault profile, 50\% fault scale]{\includegraphics[width=0.31\columnwidth]{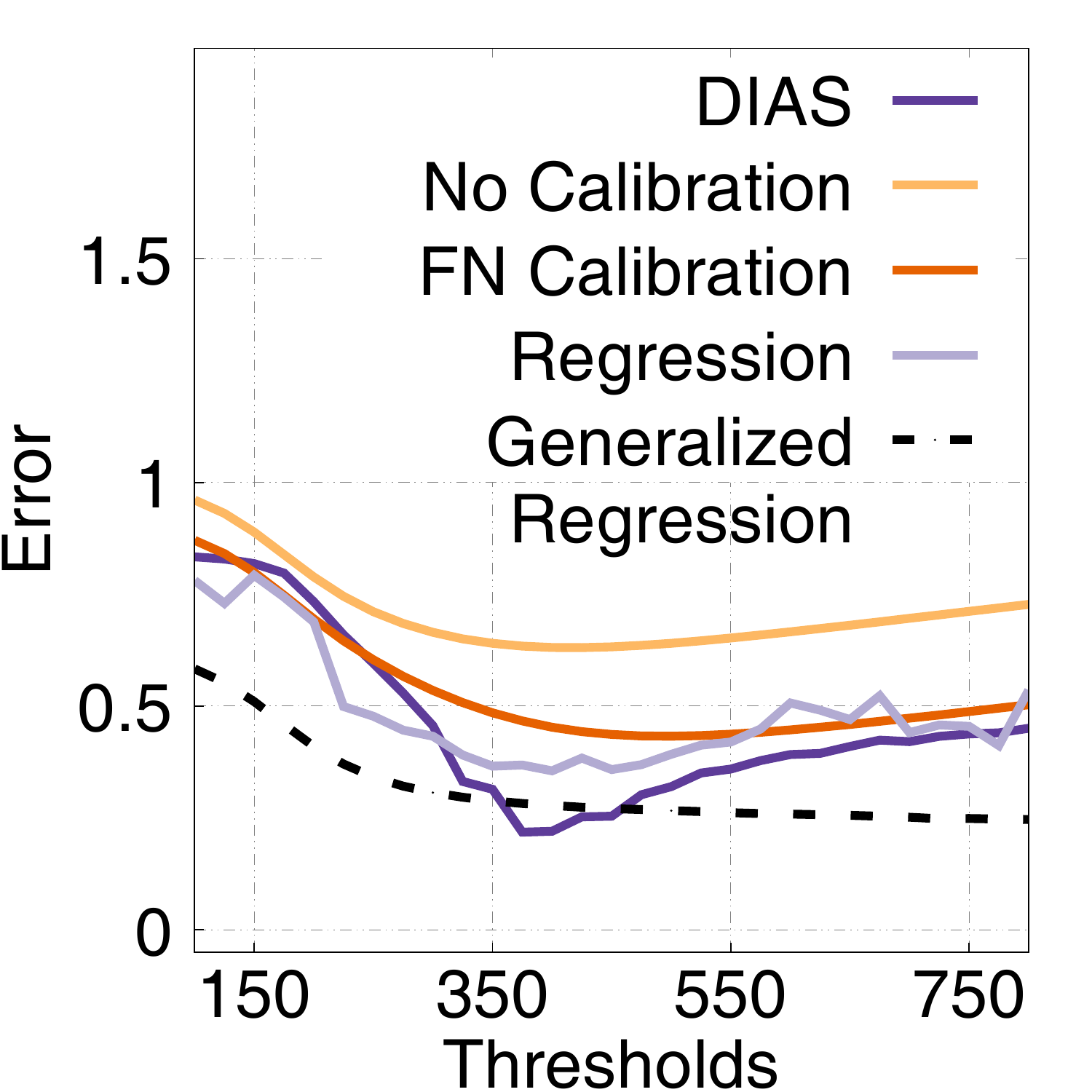}}
\subfigure[\nth{1} fault profile, 80\% fault scale]{\includegraphics[width=0.31\columnwidth]{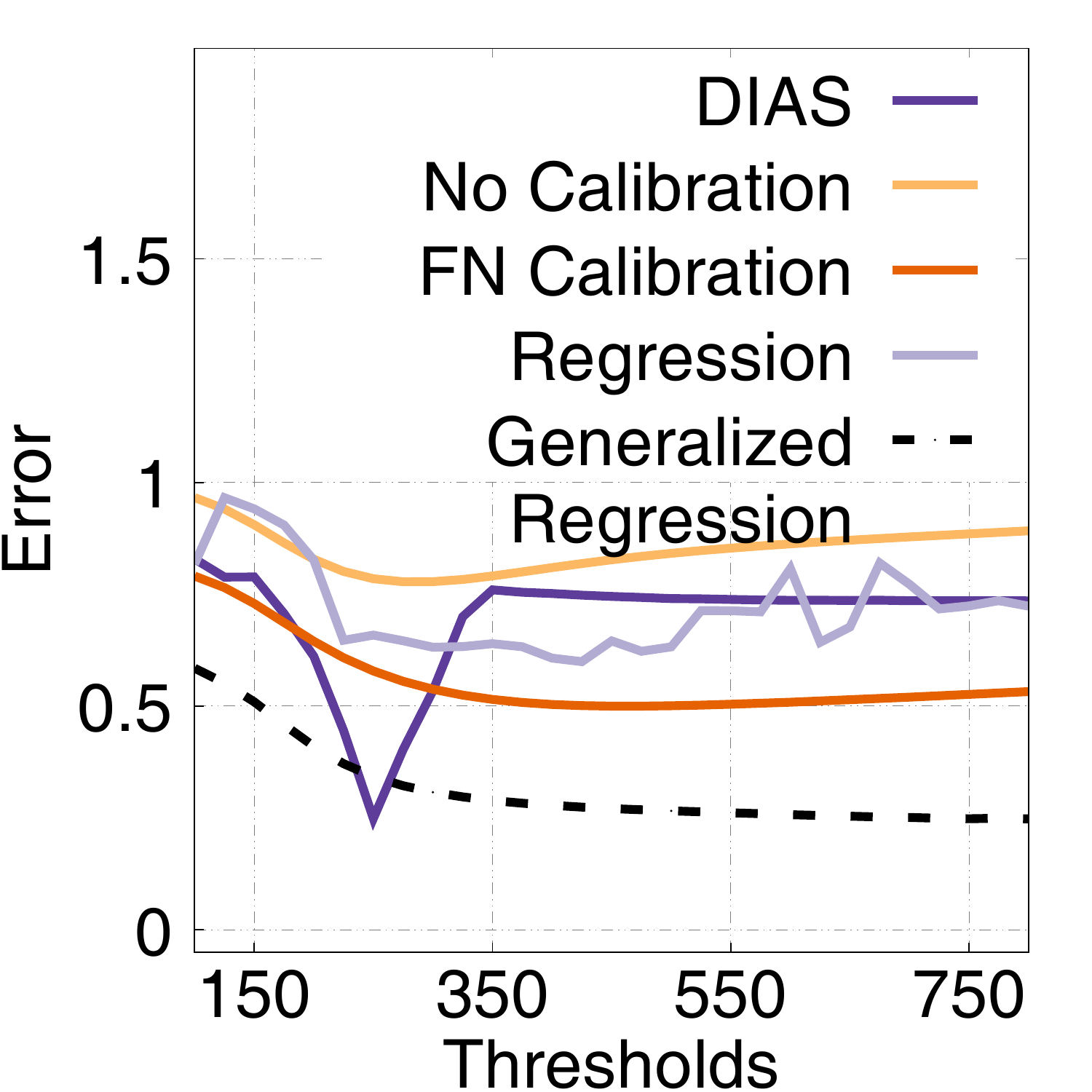}}	
\subfigure[\nth{2} fault profile, 20\% fault scale]{\includegraphics[width=0.31\columnwidth]{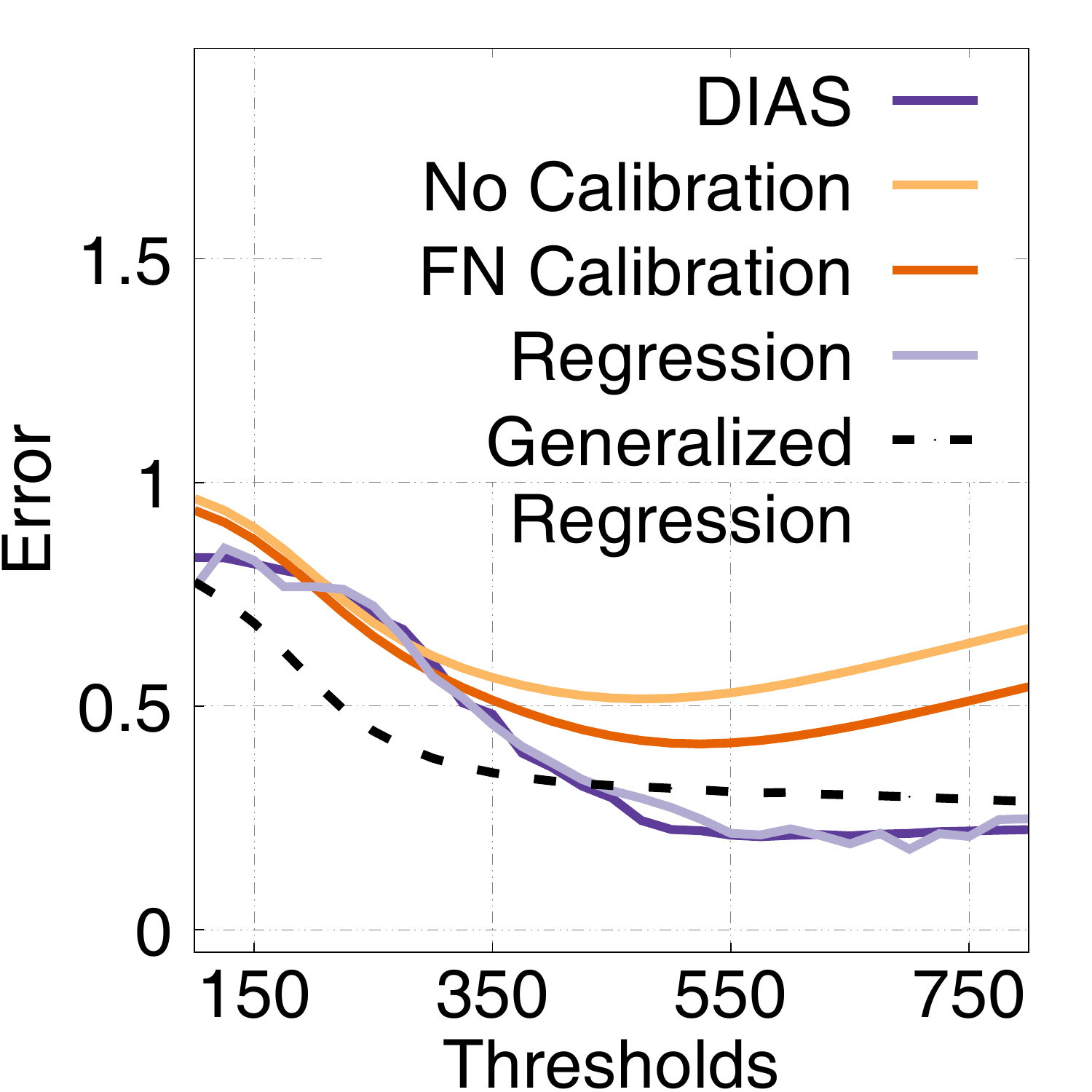}}
\subfigure[\nth{2} fault profile, 50\% fault scale]{\includegraphics[width=0.31\columnwidth]{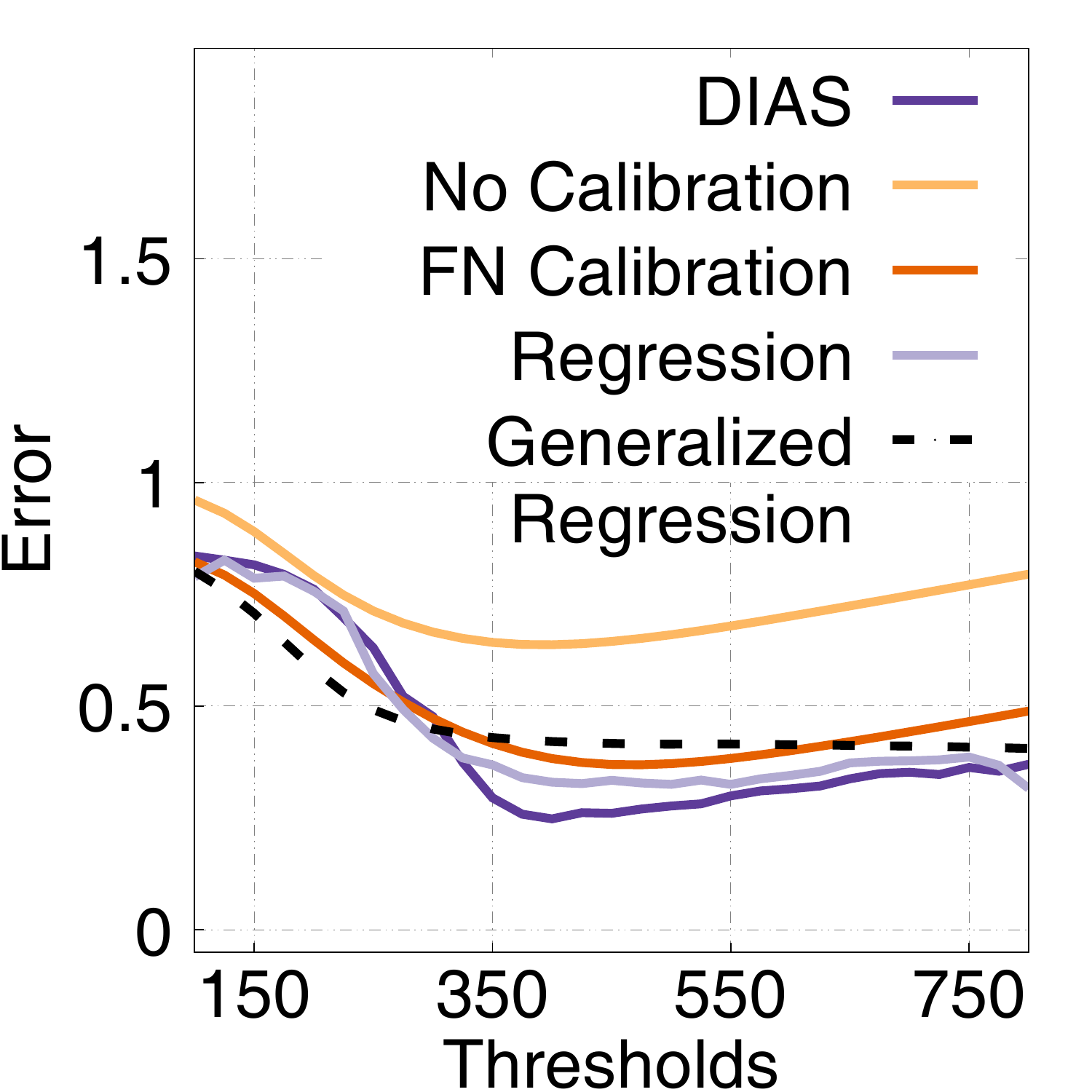}}
\subfigure[\nth{2} fault profile, 80\% fault scale]{\includegraphics[width=0.31\columnwidth]{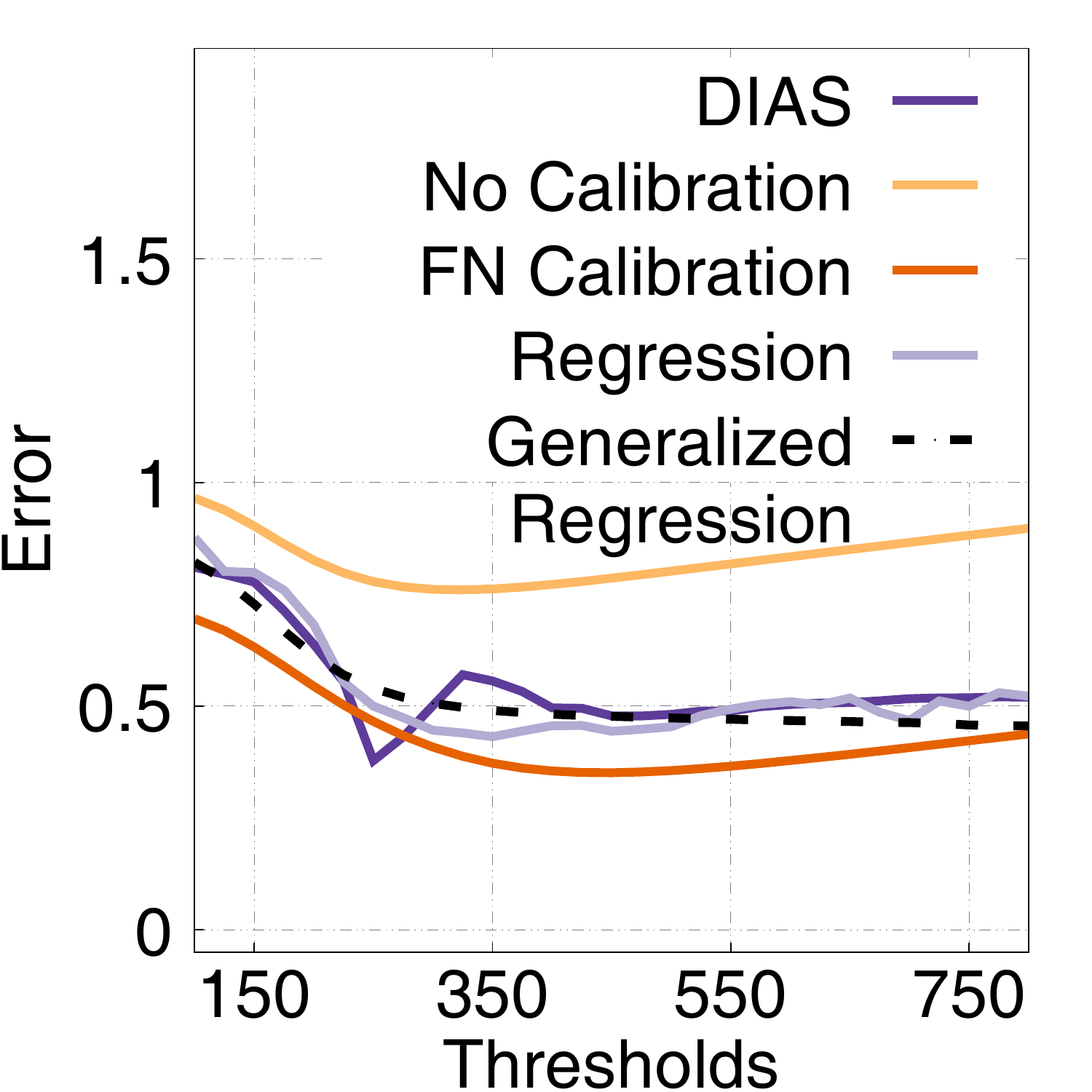}}	
\subfigure[\nth{3} fault profile, 20\% fault scale]{\includegraphics[width=0.31\columnwidth]{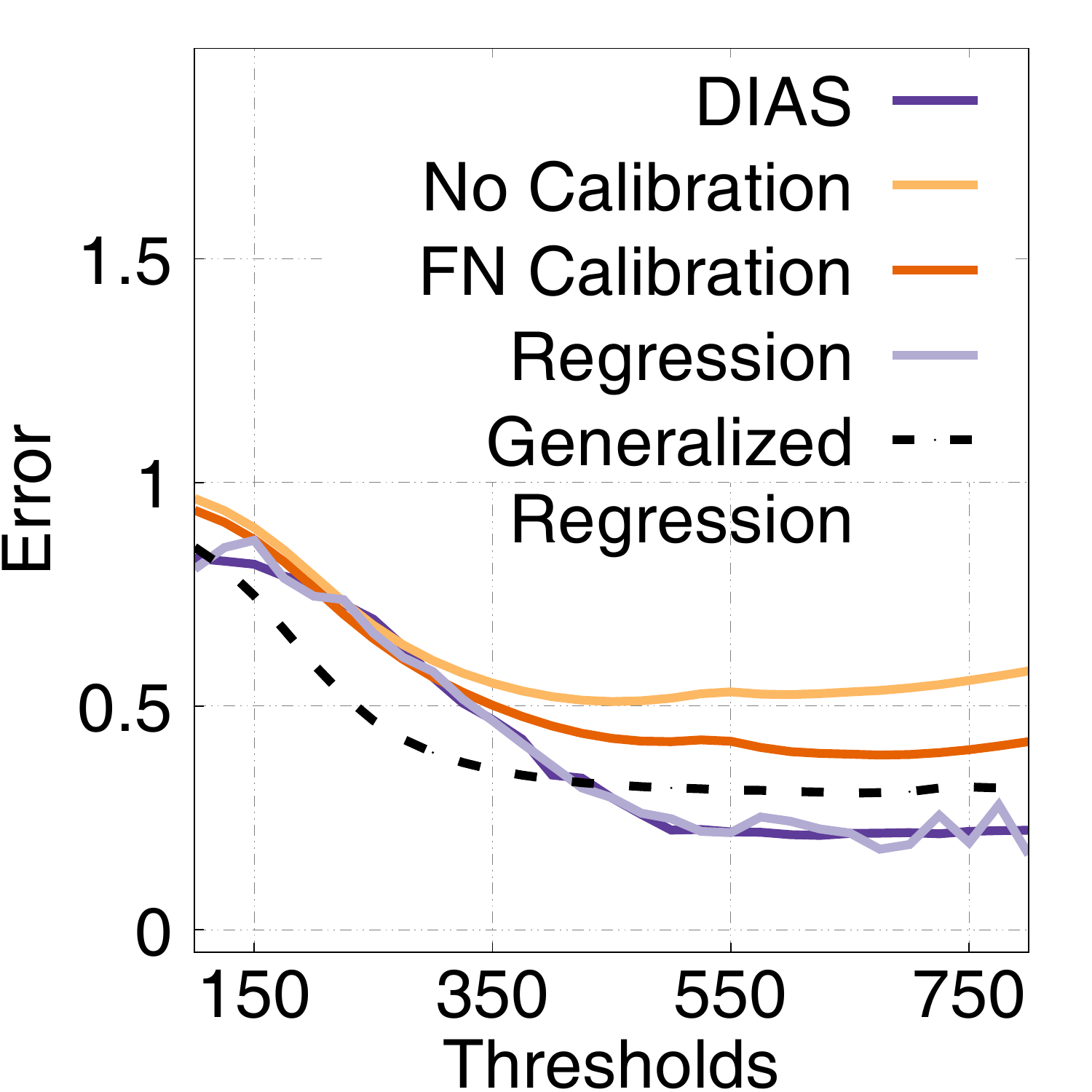}}
\subfigure[\nth{3} fault profile, 50\% fault scale]{\includegraphics[width=0.31\columnwidth]{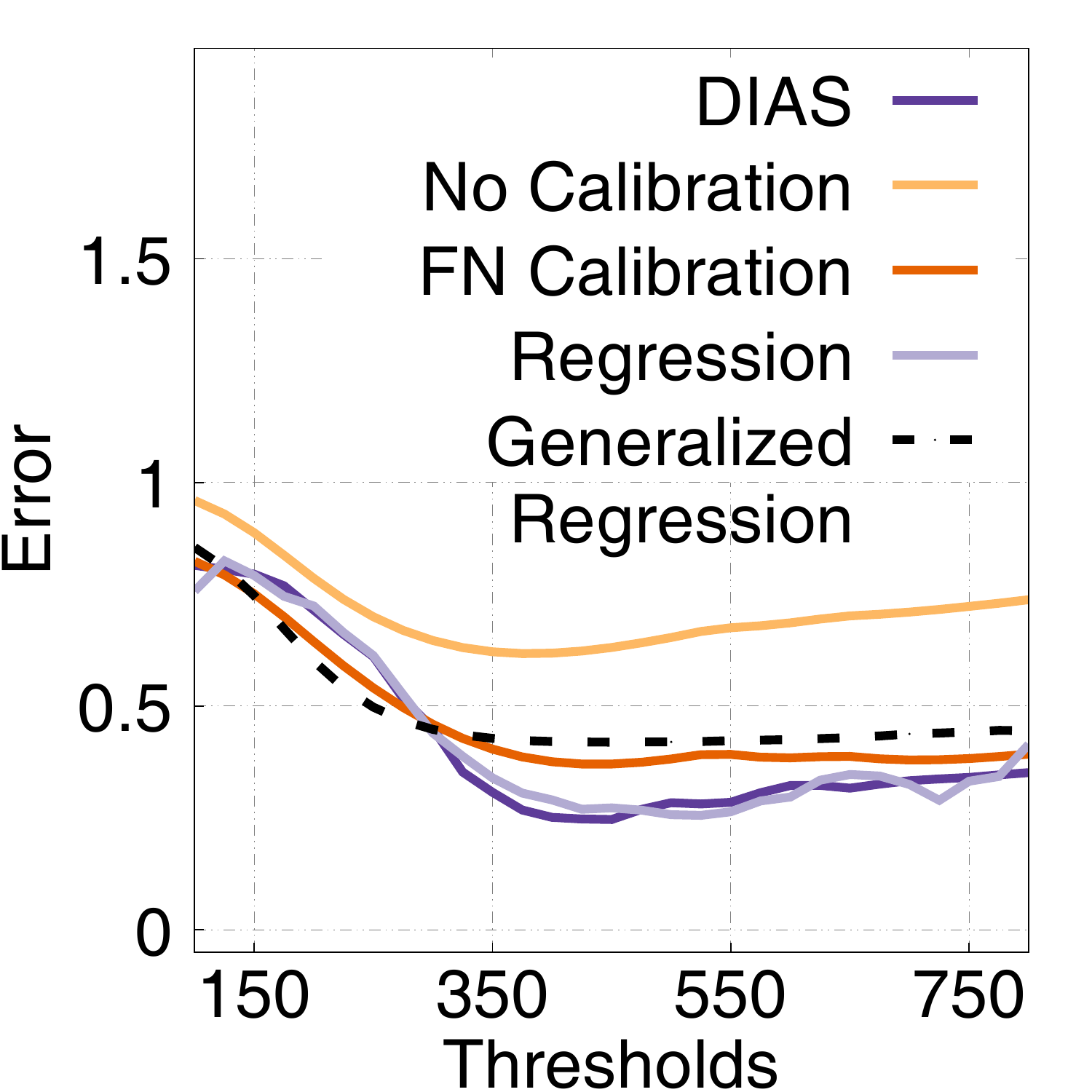}}
\subfigure[\nth{3} fault profile, 80\% fault scale]{\includegraphics[width=0.31\columnwidth]{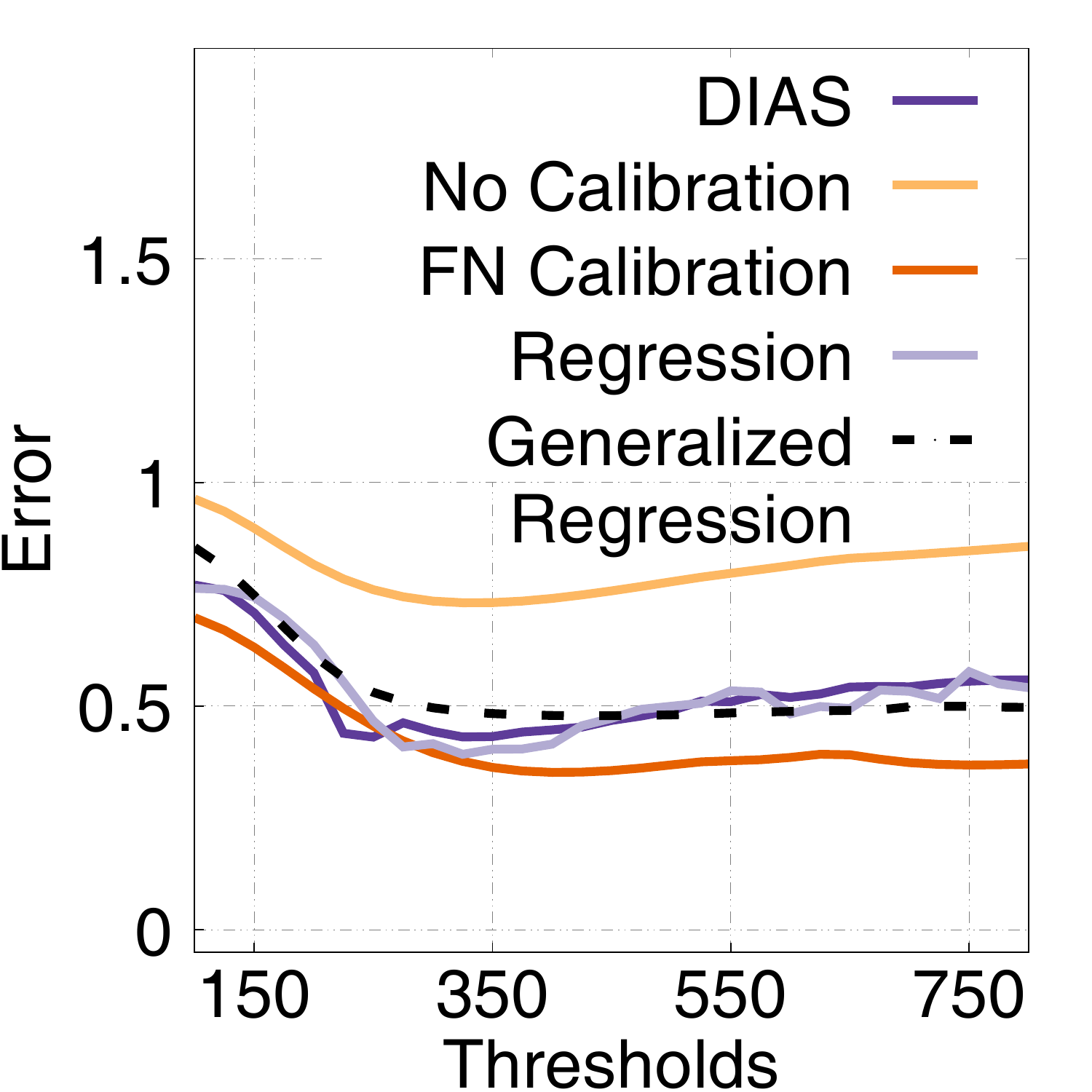}}	
\caption{Prediction performance of the calibration methods. 20\%, 50\% and 80\% fault-scale respectively for the three fault profile.}\label{fig:predictions}
\end{figure}
The followings observations are made: (i) The application agnostic calibration methods, i.e. non-calibrated prediction and false negative calibration, correlate well with DIAS, with correlation coefficient of 0.83 and 0.82, respectively across the fault profiles and scales. However, note the improvement of the latter to match the magnitude of the DIAS errors. These correlation values cannot be improved by more than  7.8\% on average with the linear regression calibration. This is an expected result as linear regression represents the best fit, i.e. linear transformation, of the application-independent features to the DIAS inconsistency cost data. On the contrary, the correlation coefficient between DIAS and generalized regression decreases to 0.79 on average. This confirms the feasibility of selecting effective thresholds for fault detection without information about the application that makes use of self-healing. (ii) Without calibration, the fault scenarios overestimate the magnitude of the DIAS errors, especially for large fault scales and thresholds. This is because the modeling of the total inconsistency cost assumes a uniform generation of inconsistency cost during runtime for each fault scenario. However, the magnitude of estimation errors in the summation aggregation function as well as how errors cancel out each other are highly dependent on the data. (iii) The hypothesis that the inconsistency cost in false negative states with the value of 1 (Table~\ref{table:scenarios}) is a worst case scenario in practice is actually confirmed: Across all thresholds, the root mean square error between DIAS and the false negative calibration is on average 0.51, 2.27, and 0.72 times lower than no calibration for a fault scale of 20\%, 50\% and 80\% respectively. These numbers are 0.44, 1.25, and 1.47 times lower across the fault scales for the \nth{1}, \nth{2} and \nth{3} fault profile respectively. 

\section{Conclusion and Future Work}\label{sec:conclusion}

This paper demonstrates how the performance of self-healing systems operating in decentralized asynchronous environments is significantly influenced by uncertainties in fault detection inherited from such systems. However, it also concludes that this influence can be accurately predicted and mitigated by modeling a number of fault scenarios that identify the origin of inconsistencies. This paper also shows how to minimize inconsistencies by tuning appropriately fault detection at the design phase. The significance of these findings stems from application-independence: A high prediction performance in the aggregation accuracy of real-world power demand data is confirmed under 696 experimental settings of different fault scales, fault profiles and fault detection thresholds. 

Future work focuses on addressing some of the limitations of this study as well as unfolding some new promising research pathways: The prediction performance of the inconsistency cost of other distributed application scenarios is required for further validation. Other costs with more complex performance trade-offs can be modeled, e.g. inconsistency vs. communication cost. Comparing the inconsistency profiles of different decentralized systems with different size, fault profiles/models, connectivity and fault-detection mechanisms can provide  further new insights on how to design, deploy and operate self-healing systems. 

\section*{Acknowledgment}

This study is supported by the Swiss National Science Foundation (SNSF) as part of the National Research Programme NRP77 Digital Transformation, project no. 187249.

\bibliography{DIAS} 

\begin{thebibliography}{10}
\providecommand{\url}[1]{#1}
\csname url@samestyle\endcsname
\providecommand{\newblock}{\relax}
\providecommand{\bibinfo}[2]{#2}
\providecommand{\BIBentrySTDinterwordspacing}{\spaceskip=0pt\relax}
\providecommand{\BIBentryALTinterwordstretchfactor}{4}
\providecommand{\BIBentryALTinterwordspacing}{\spaceskip=\fontdimen2\font plus
\BIBentryALTinterwordstretchfactor\fontdimen3\font minus
  \fontdimen4\font\relax}
\providecommand{\BIBforeignlanguage}[2]{{%
\expandafter\ifx\csname l@#1\endcsname\relax
\typeout{** WARNING: IEEEtran.bst: No hyphenation pattern has been}%
\typeout{** loaded for the language `#1'. Using the pattern for}%
\typeout{** the default language instead.}%
\else
\language=\csname l@#1\endcsname
\fi
#2}}
\providecommand{\BIBdecl}{\relax}
\BIBdecl

\bibitem{Mei2011}
S.~Mei, X.~Zhang, and M.~Cao, \emph{Power grid complexity}.\hskip 1em plus
  0.5em minus 0.4em\relax Springer Science \& Business Media, 2011.

\bibitem{Xiong2018}
Z.~Xiong, Y.~Zhang, D.~Niyato, P.~Wang, and Z.~Han, ``When mobile blockchain
  meets edge computing,'' \emph{IEEE Communications Magazine}, vol.~56, no.~8,
  pp. 33--39, 2018.

\bibitem{VanRenesse1998}
R.~Van~Renesse, Y.~Minsky, and M.~Hayden, ``A gossip-style failure detection
  service,'' in \emph{Middleware’98}.\hskip 1em plus 0.5em minus 0.4em\relax
  Springer, 1998, pp. 55--70.

\bibitem{Lavinia2011}
A.~Lavinia, C.~Dobre, F.~Pop, and V.~Cristea, ``A failure detection system for
  large scale distributed systems,'' \emph{International Journal of Distributed
  Systems and Technologies (IJDST)}, vol.~2, no.~3, pp. 64--87, 2011.

\bibitem{Pournaras2017}
E.~Pournaras, J.~Nikolic, A.~Omerzel, and D.~Helbing, ``Engineering
  democratization in internet of things data analytics,'' in \emph{2017 IEEE
  31st International Conference on Advanced Information Networking and
  Applications (AINA)}.\hskip 1em plus 0.5em minus 0.4em\relax IEEE, 2017, pp.
  994--1003.

\bibitem{Shah2009}
D.~Shah \emph{et~al.}, ``Gossip algorithms,'' \emph{Foundations and
  Trends{\textregistered} in Networking}, vol.~3, no.~1, pp. 1--125, 2009.

\bibitem{Oyediran2016}
M.~Oyediran, T.~Fagbola, S.~Olabiyisi, E.~Omidiora, and A.~Fawole, ``A survey
  on migration process of mobile agent,'' in \emph{Proceedings of the world
  congress on engineering and computer science}, vol.~1, 2016.

\bibitem{Croce2016}
D.~Croce, F.~Giuliano, I.~Tinnirello, A.~Galatioto, M.~Bonomolo, M.~Beccali,
  and G.~Zizzo, ``Overgrid: A fully distributed demand response architecture
  based on overlay networks,'' \emph{IEEE Transactions on Automation Science
  and Engineering}, vol.~14, no.~2, pp. 471--481, 2016.

\bibitem{Pournaras2017c}
E.~Pournaras and J.~Nikoli{\'c}, ``Self-corrective dynamic networks via
  decentralized reverse computations,'' in \emph{2017 IEEE International
  Conference on Autonomic Computing (ICAC)}.\hskip 1em plus 0.5em minus
  0.4em\relax IEEE, 2017, pp. 11--20.

\bibitem{Stankovic2017}
R.~Stankovi{\'c}, M.~{\v{S}}tula, and J.~Maras, ``Evaluating fault tolerance
  approaches in multi-agent systems,'' \emph{Autonomous agents and multi-agent
  systems}, vol.~31, no.~1, pp. 151--177, 2017.

\bibitem{Mukwevho2018}
M.~A. Mukwevho and T.~Celik, ``Toward a smart cloud: A review of
  fault-tolerance methods in cloud systems,'' \emph{IEEE Transactions on
  Services Computing}, 2018.

\bibitem{Isong2013}
B.~E. Isong and E.~Bekele, ``A systematic review of fault tolerance in mobile
  agents,'' \emph{American Journal of Software Engineering and Applications},
  vol.~2, no.~5, pp. 111--124, 2013.

\bibitem{Fedoruk2002}
A.~Fedoruk and R.~Deters, ``Improving fault-tolerance by replicating agents,''
  in \emph{Proceedings of the first international joint conference on
  Autonomous agents and multiagent systems: part 2}, 2002, pp. 737--744.

\bibitem{Marin2001}
O.~Marin, P.~Sens, J.-P. Briot, and Z.~Guessoum, ``Towards adaptive
  fault-tolerance for distributed multi-agent systems,'' in \emph{Proceedings
  of ERSADS}, 2001, pp. 195--201.

\bibitem{Arfat2016}
Y.~Arfat and F.~E. Eassa, ``A survey on fault tolerant multi agent system,''
  \emph{IJ Inf. Technol. Comput. Sci}, vol.~9, pp. 39--48, 2016.

\bibitem{Almeida2006}
A.~Luna-Almeida, S.~Aknine, J.-P. Briot, and J.~Malenfant, ``Plan-based
  replication for fault-tolerant multi-agent systems,'' in \emph{Proceedings of
  the 11th IEEE Workshop on Dependable Parallel, Distributed and
  Network-Centric Systems (DPDNS’06)}, 2006, pp. 413--418.

\bibitem{Jin2004}
G.~Jin, B.~Ahn, and K.~D. Lee, ``A fault-tolerant protocol for mobile agent,''
  in \emph{International Conference on Computational Science and Its
  Applications}.\hskip 1em plus 0.5em minus 0.4em\relax Springer, 2004, pp.
  993--1001.

\bibitem{Koldehofe2013}
B.~Koldehofe, R.~Mayer, U.~Ramachandran, K.~Rothermel, and M.~V{\"o}lz,
  ``Rollback-recovery without checkpoints in distributed event processing
  systems,'' in \emph{Proceedings of the 7th ACM international conference on
  Distributed event-based systems}, 2013, pp. 27--38.

\bibitem{Park2004}
K.~Park, ``A fault-tolerant mobile agent model in replicated secure services,''
  in \emph{International Conference on Computational Science and Its
  Applications}.\hskip 1em plus 0.5em minus 0.4em\relax Springer, 2004, pp.
  500--509.

\bibitem{Kumar2000}
S.~Kumar and P.~R. Cohen, ``Towards a fault-tolerant multi-agent system
  architecture,'' in \emph{Proceedings of the fourth international conference
  on Autonomous agents}, 2000, pp. 459--466.

\bibitem{Terry2016}
D.~Terry, ``{Toward a new approach to IoT fault tolerance},'' \emph{Computer},
  vol.~49, no.~8, pp. 80--83, 2016.

\bibitem{Ratasich2019}
D.~Ratasich, M.~Platzer, R.~Grosu, and E.~Bartocci, ``Adaptive fault detection
  exploiting redundancy with uncertainties in space and time,'' in \emph{2019
  IEEE 13th International Conference on Self-Adaptive and Self-Organizing
  Systems (SASO)}.\hskip 1em plus 0.5em minus 0.4em\relax IEEE, 2019, pp.
  23--32.

\bibitem{Sterritt2005}
R.~Sterritt, ``Autonomic computing,'' \emph{Innovations in systems and software
  engineering}, vol.~1, no.~1, pp. 79--88, 2005.

\bibitem{Panda2015}
M.~Panda and P.~M. Khilar, ``Distributed byzantine fault detection technique in
  wireless sensor networks based on hypothesis testing,'' \emph{Computers \&
  Electrical Engineering}, vol.~48, pp. 270--285, 2015.

\bibitem{Rahnama2017}
A.~Rahnama and P.~J. Antsaklis, ``Resilient learning-based control for
  synchronization of passive multi-agent systems under attack,'' \emph{arXiv
  preprint arXiv:1709.10142}, 2017.

\bibitem{Su2016}
L.~Su and N.~H. Vaidya, ``Non-bayesian learning in the presence of byzantine
  agents,'' in \emph{International symposium on distributed computing}.\hskip
  1em plus 0.5em minus 0.4em\relax Springer, 2016, pp. 414--427.

\bibitem{Ghahremani2020}
S.~Ghahremani and H.~Giese, ``Evaluation of self-healing systems: An analysis
  of the state-of-the-art and required improvements,'' \emph{Computers},
  vol.~9, no.~1, p.~16, 2020.

\bibitem{Fischer1985}
M.~J. Fischer, N.~A. Lynch, and M.~S. Paterson, ``Impossibility of distributed
  consensus with one faulty process,'' \emph{Journal of the ACM (JACM)},
  vol.~32, no.~2, pp. 374--382, 1985.

\bibitem{Chandra1996}
T.~D. Chandra, V.~Hadzilacos, and S.~Toueg, ``The weakest failure detector for
  solving consensus,'' \emph{Journal of the ACM (JACM)}, vol.~43, no.~4, pp.
  685--722, 1996.

\bibitem{Sridhar2006}
N.~Sridhar, ``Decentralized local failure detection in dynamic distributed
  systems,'' in \emph{2006 25th IEEE Symposium on Reliable Distributed Systems
  (SRDS'06)}.\hskip 1em plus 0.5em minus 0.4em\relax IEEE, 2006, pp. 143--154.

\bibitem{Gyamfi2019}
K.~S. Gyamfi, J.~Brusey, E.~Gaura, and R.~Wilkins, ``{Heartbeat design for
  energy-aware IoT: Are your sensors alive?}'' \emph{Expert Systems with
  Applications}, vol. 128, pp. 124--139, 2019.

\bibitem{Zhou2018}
X.~Zhou, X.~Peng, T.~Xie, J.~Sun, C.~Ji, W.~Li, and D.~Ding, ``Fault analysis
  and debugging of microservice systems: Industrial survey, benchmark system,
  and empirical study,'' \emph{IEEE Transactions on Software Engineering},
  2018.

\bibitem{Bhandari2018}
G.~P. Bhandari and R.~Gupta, ``Fault analysis of service-oriented systems: a
  systematic literature review,'' \emph{IET Software}, vol.~12, no.~6, pp.
  446--460, 2018.

\bibitem{Gorbenko2019}
A.~Gorbenko, A.~Romanovsky, and O.~Tarasyuk, ``Fault tolerant internet
  computing: Benchmarking and modelling trade-offs between availability,
  latency and consistency,'' \emph{Journal of Network and Computer
  Applications}, vol. 146, p. 102412, 2019.

\bibitem{Jelasity2007}
M.~Jelasity, S.~Voulgaris, R.~Guerraoui, A.-M. Kermarrec, and M.~Van~Steen,
  ``Gossip-based peer sampling,'' \emph{ACM Transactions on Computer Systems
  (TOCS)}, vol.~25, no.~3, pp. 8--es, 2007.

\bibitem{Snyder2012}
P.~L. Snyder, G.~Valetto, J.~L. Fernandez-Marquez, and G.~D.~M. Serugendo,
  ``Augmenting the repertoire of design patterns for self-organized software by
  reverse engineering a bio-inspired p2p system,'' in \emph{2012 IEEE Sixth
  International Conference on Self-Adaptive and Self-Organizing Systems}.\hskip
  1em plus 0.5em minus 0.4em\relax IEEE, 2012, pp. 199--204.

\bibitem{Marzolla2011}
M.~Marzolla, O.~Babaoglu, and F.~Panzieri, ``Server consolidation in clouds
  through gossiping,'' in \emph{2011 IEEE International Symposium on a World of
  Wireless, Mobile and Multimedia Networks}.\hskip 1em plus 0.5em minus
  0.4em\relax IEEE, 2011, pp. 1--6.

\bibitem{Lim2018}
J.~Lim, K.-S. Chung, H.~Lee, K.~Yim, and H.~Yu, ``Byzantine-resilient dual
  gossip membership management in clouds,'' \emph{Soft Computing}, vol.~22,
  no.~9, pp. 3011--3022, 2018.

\bibitem{Cao2018}
X.~Cao, S.~Gao, and L.~Chen, ``Gossip-based load balance strategy in big data
  systems with hierarchical processors,'' \emph{Wireless Personal
  Communications}, vol.~98, no.~1, pp. 157--172, 2018.

\bibitem{He2019}
X.~He, Y.~Cui, and Y.~Jiang, ``An improved gossip algorithm based on
  semi-distributed blockchain network,'' in \emph{2019 International Conference
  on Cyber-Enabled Distributed Computing and Knowledge Discovery
  (CyberC)}.\hskip 1em plus 0.5em minus 0.4em\relax IEEE, 2019, pp. 24--27.

\bibitem{Baird2016}
L.~Baird, ``The swirlds hashgraph consensus algorithm: Fair, fast, byzantine
  fault tolerance,'' \emph{Swirlds, Inc. Technical Report SWIRLDS-TR-2016},
  vol.~1, 2016.

\bibitem{Preisler2015}
T.~Preisler, T.~Dethlefs, and W.~Renz, ``Middleware for constructing
  decentralized control in self-organizing systems,'' in \emph{2015 IEEE
  International Conference on Autonomic Computing}.\hskip 1em plus 0.5em minus
  0.4em\relax IEEE, 2015, pp. 325--330.

\bibitem{Fasolo2007}
E.~Fasolo, M.~Rossi, J.~Widmer, and M.~Zorzi, ``In-network aggregation
  techniques for wireless sensor networks: a survey,'' \emph{IEEE Wireless
  Communications}, vol.~14, no.~2, pp. 70--87, 2007.

\bibitem{Jelasity2005}
M.~Jelasity, A.~Montresor, and O.~Babaoglu, ``Gossip-based aggregation in large
  dynamic networks,'' \emph{ACM Transactions on Computer Systems (TOCS)},
  vol.~23, no.~3, pp. 219--252, 2005.

\bibitem{Pianini2016}
D.~Pianini, J.~Beal, and M.~Viroli, ``Improving gossip dynamics through
  overlapping replicates,'' in \emph{International Conference on Coordination
  Languages and Models}.\hskip 1em plus 0.5em minus 0.4em\relax Springer, 2016,
  pp. 192--207.

\bibitem{Ding2003}
M.~Ding, X.~Cheng, and G.~Xue, ``Aggregation tree construction in sensor
  networks,'' in \emph{EEE 58th Vehicular Technology Conference. VTC
  2003-Fall}, vol.~4.\hskip 1em plus 0.5em minus 0.4em\relax IEEE, 2003, pp.
  2168--2172.

\bibitem{Nath2008}
S.~Nath, P.~B. Gibbons, S.~Seshan, and Z.~Anderson, ``Synopsis diffusion for
  robust aggregation in sensor networks,'' \emph{ACM Transactions on Sensor
  Networks (TOSN)}, vol.~4, no.~2, pp. 1--40, 2008.

\bibitem{Pournaras2017b}
E.~Pournaras and J.~Nikoli{\'c}, ``On-demand self-adaptive data analytics in
  large-scale decentralized networks,'' in \emph{2017 IEEE 16th International
  Symposium on Network Computing and Applications (NCA)}.\hskip 1em plus 0.5em
  minus 0.4em\relax IEEE, 2017, pp. 1--10.

\bibitem{Bloom1970}
B.~H. Bloom, ``Space/time trade-offs in hash coding with allowable errors,''
  \emph{Communications of the ACM}, vol.~13, no.~7, pp. 422--426, 1970.

\bibitem{Hasan2018}
M.~Hasan and M.~S. Goraya, ``Fault tolerance in cloud computing environment: A
  systematic survey,'' \emph{Computers in Industry}, vol.~99, pp. 156--172,
  2018.

\bibitem{Fanitabasi2020}
F.~Fanitabasi, E.~Gaere, and E.~Pournaras, ``A self-integration testbed for
  decentralized socio-technical systems,'' \emph{Future Generation Computer
  Systems}, vol. 113, pp. 541--555, 2020.

\bibitem{Galuba2009}
W.~Galuba, K.~Aberer, Z.~Despotovic, and W.~Kellerer, ``Protopeer: a p2p
  toolkit bridging the gap between simulation and live deployement,'' in
  \emph{Proceedings of the 2nd International Conference on Simulation Tools and
  Techniques}, 2009, pp. 1--9.

\bibitem{Gallet2010}
M.~Gallet, N.~Yigitbasi, B.~Javadi, D.~Kondo, A.~Iosup, and D.~Epema, ``A model
  for space-correlated failures in large-scale distributed systems,'' in
  \emph{European Conference on Parallel Processing}.\hskip 1em plus 0.5em minus
  0.4em\relax Springer, 2010, pp. 88--100.

\bibitem{Kondo2010}
D.~Kondo, B.~Javadi, A.~Iosup, and D.~Epema, ``The failure trace archive:
  Enabling comparative analysis of failures in diverse distributed systems,''
  in \emph{2010 10th IEEE/ACM International Conference on Cluster, Cloud and
  Grid Computing}.\hskip 1em plus 0.5em minus 0.4em\relax IEEE, 2010, pp.
  398--407.

\end{thebibliography}
\bibliographystyle{IEEEtran}


%



\ifCLASSOPTIONcaptionsoff
  \newpage
\fi

\begin{IEEEbiography}[{\includegraphics[width=1.1in,height=1.35in,clip,keepaspectratio]{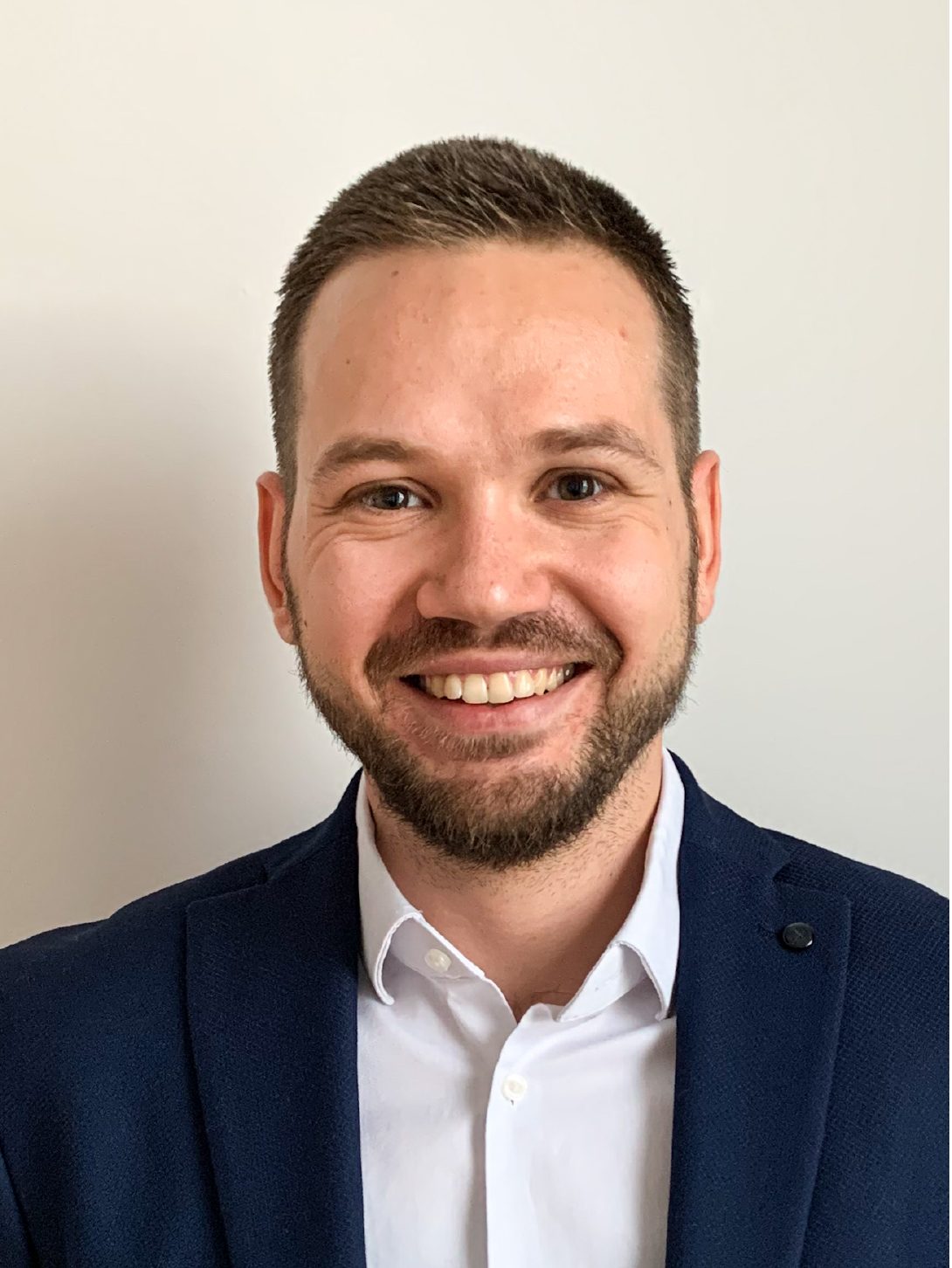}}]{Jovan Nikoli\'c}
is a Software Engineer at Google, Z\"urich, Switzerland. Since 2019, he holds a MSc in Computer Science from Swiss Federal Institute of Technology (ETH) Z\"urich, Switzerland, and since 2015 a BSc degree in Electrical Engineering and Computer Science from University of Belgrade, Belgrade, Serbia. He was a member of Computational Social Science group, ETH Z\"urich from 2015 to 2019, conducting research focusing on intelligent multi-agent systems, combinatorial multi-objective optimization and machine learning. 
\end{IEEEbiography}

\begin{IEEEbiography}[{\includegraphics[width=1.1in,height=1.35in,clip,keepaspectratio]{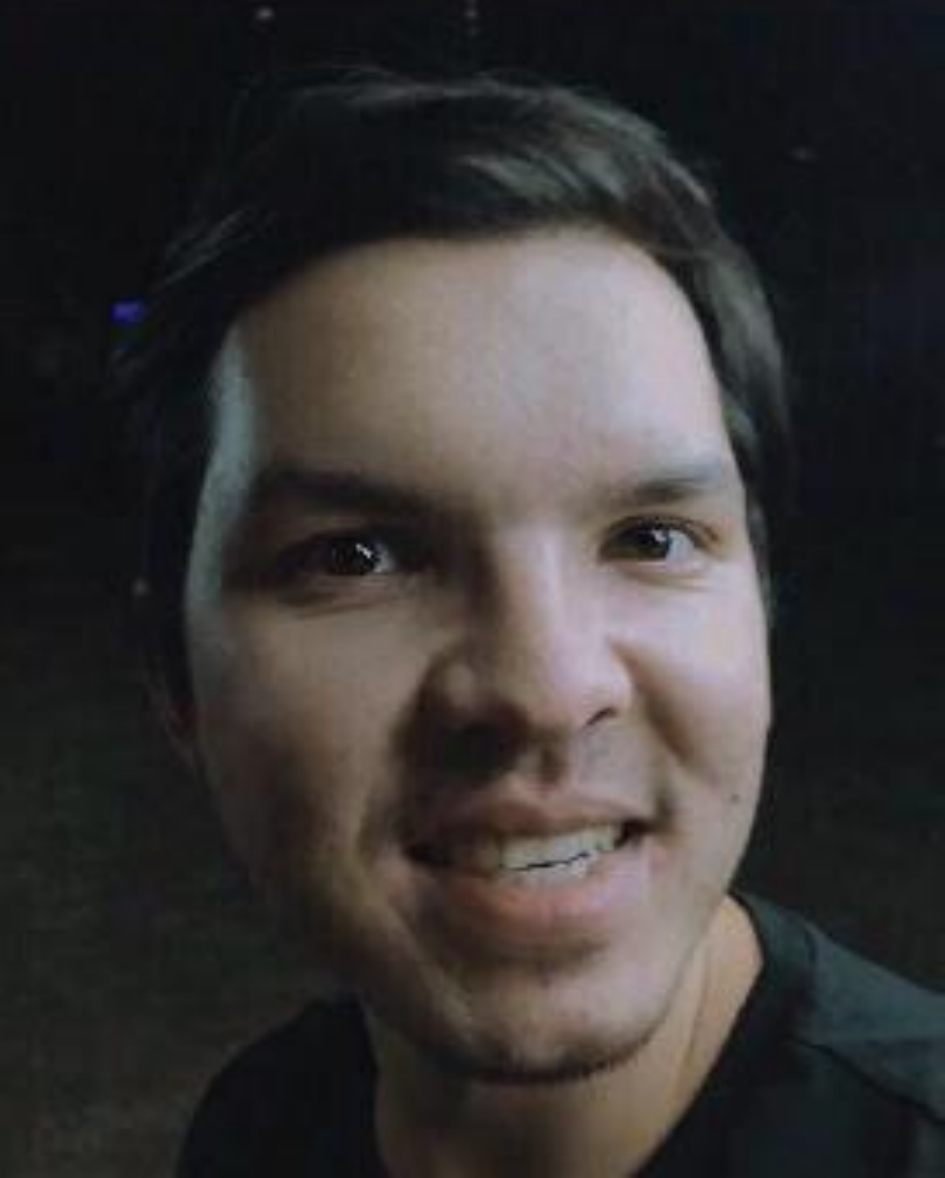}}]{Nursultan Jubatyrov}
is a Software Engineer at Facebook, London, UK. He obtained a BSc degree in Computer Science from Nazarbayev University, Nur-Sultan city, Kazakhstan in 2019. From 2017 to 2018, Nursultan was working as a Research and Engineering Assistant at Computational Social Science group, Swiss Federal Intitute of Technology (ETH)  Z\"urich, Switzerland. During this time, he conducted research on the reliability of distributed systems.
\end{IEEEbiography}


\begin{IEEEbiography}[{\includegraphics[width=1.1in,height=1.35in,clip,keepaspectratio]{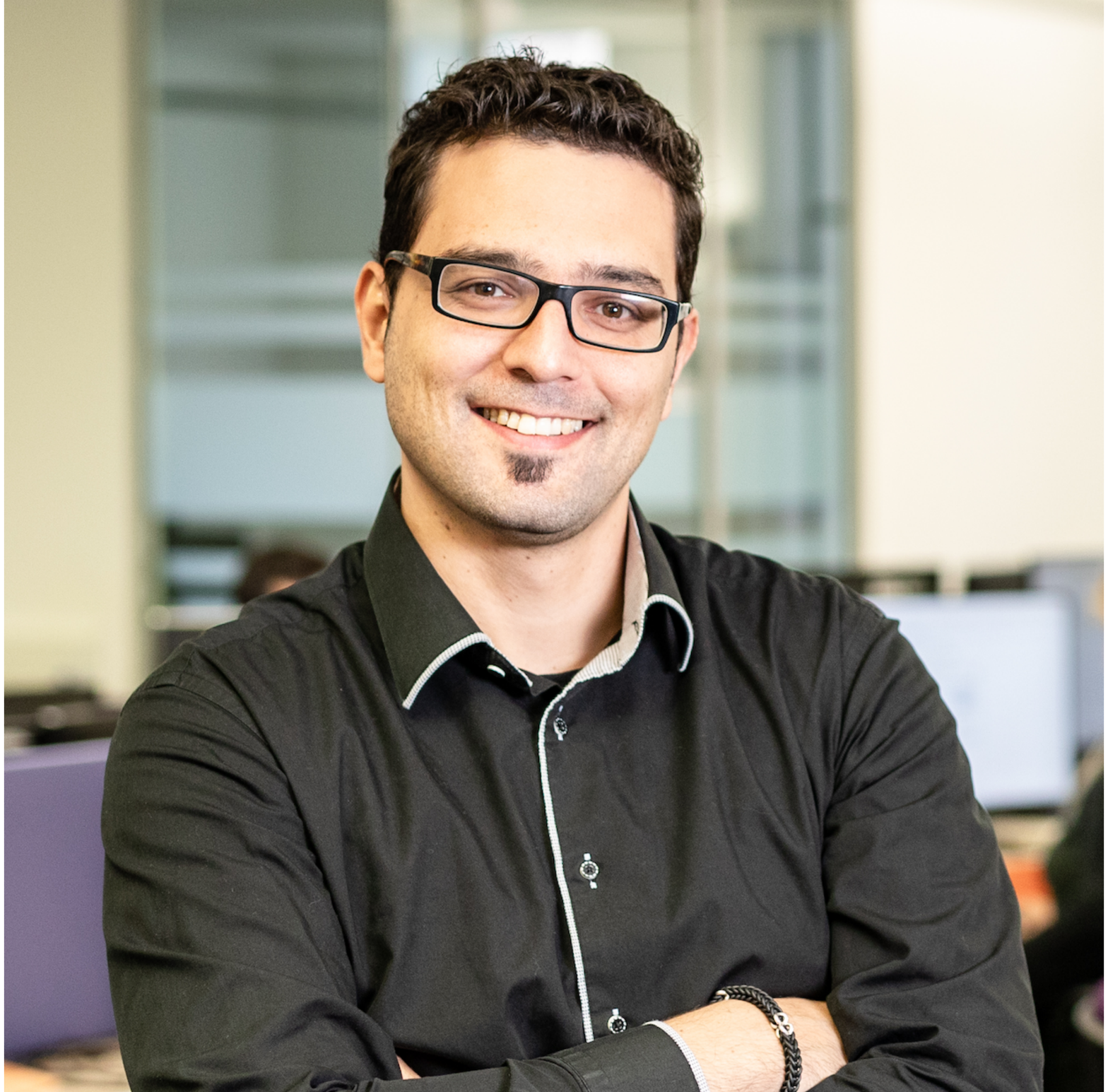}}]
{Evangelos Pournaras}
is an Associate Professor at Distributed Systems and Services group, School of Computing, University of Leeds, UK. He is also currently a research associate at UCL Center of Blockchain Technologies. He has more than 5 years experience as senior scientist and postdoctoral researcher at ETH Zurich in Switzerland after having completed his PhD studies in 2013 at Delft University of Technology in the Netherlands. Evangelos has also been a visiting researcher at EPFL in Switzerland and has industry experience at IBM T.J. Watson Research Center in the USA. Evangelos has won the Augmented Democracy Prize, the 1st prize at ETH Policy Challenge as well as 4 paper awards and honors. He has published more than 75 peer-reviewed papers in high impact journals and conferences and he is the founder of the EPOS, DIAS, SFINA and Smart Agora  projects featured at decentralized-systems.org. He has raised significant funding and has been actively involved in EU projects such as ASSET, SoBigData and FuturICT 2.0. Evangelos' research interest focus on distributed and intelligent social computing systems with expertise in the inter-disciplinary application domains of Smart Cities and Smart Grids. 
\end{IEEEbiography}

\vfill




\end{document}